\documentclass[a4paper, 11pt]{article}
\usepackage[left=2.5cm,right=2.5cm,top=2.5cm,bottom=2.5cm]{geometry}
\usepackage[english]{babel}
\usepackage[inline]{enumitem}

\usepackage[table,dvipsnames]{xcolor}
\usepackage{pdfpages} 
\usepackage{dsfont} % contains the command \mathds{1} for the indicator function
\usepackage[font={footnotesize}]{caption}
\usepackage[font={footnotesize}]{subcaption} % for subfigures
\usepackage{graphicx}
\usepackage{enumitem}
\usepackage{amsmath, bbm, amsthm}
\usepackage{bm}
\usepackage{amssymb}
\usepackage{multicol}
\usepackage[section]{placeins}
\usepackage{lipsum}
\usepackage[colorinlistoftodos,prependcaption,textsize=scriptsize]{todonotes}
\usepackage{caption}
\usepackage{subcaption}
\usepackage{xr-hyper} 
\usepackage{pdflscape}%sideway
\usepackage{algorithm}
\usepackage{algpseudocode}
\usepackage{blkarray}% for matrix with labels

\usepackage{booktabs}
\usepackage{longtable}
\usepackage{array}
\usepackage{multirow}
\usepackage{wrapfig}
\usepackage{float}
\usepackage{pdflscape}
\usepackage{tabu}
\usepackage{threeparttable}
\usepackage{threeparttablex}
\usepackage[normalem]{ulem}
\usepackage{makecell}
\usepackage{adjustbox}
\usepackage{lscape}
\usepackage{rotating}
\usepackage{footmisc}

\usepackage{setspace}
\usepackage[longnamesfirst]{natbib}

\usepackage[hypertexnames=false]{hyperref} % load last
\let\OldHref\href
\renewcommand{\href}[2]{\OldHref{#1}{\texttt{#2}}}
\hypersetup{
	colorlinks   = true, %Colours links instead of boxes
	urlcolor     = black, %Colour for external hyperlinks
	linkcolor    = blue, %Colour of internal links
	citecolor   = blue %Colour of citations
}

\newcommand{\size}{s}
\newcommand{\mW}{\bm W}

\newcommand{\mtheta}{\boldsymbol{\theta}}
\newcommand{\mTheta}{\boldsymbol{\Theta}}
\newcommand{\mlambda}{\boldsymbol{\lambda}}
\newcommand{\momega}{\boldsymbol{\omega}}
\newcommand{\mOmega}{\boldsymbol{\Omega}}

\newcommand{\myV}{\mathsf{V}}
\newcommand{\myS}{\mathsf{S}}

\newcommand{\cov}{\operatorname{Cov}}
\newcommand{\var}{\operatorname{Var}}
\newcommand{\RV}{\operatorname{RV}}
\newcommand{\MCB}{\mathsf{MCB}}
\newcommand{\DSC}{\mathsf{DSC}}
\newcommand{\UNC}{\mathsf{UNC}}
\newcommand{\Sp}{\overline{\mathsf{S}}}
\newcommand{\MCBp}{\overline{\mathsf{MCB}}}
\newcommand{\DSCp}{\overline{\mathsf{DSC}}}
\newcommand{\UNCp}{\overline{\mathsf{UNC}}}
\newcommand{\Ss}{\widehat{\mathsf{S}}}
\newcommand{\MCBs}{\widehat{\mathsf{MCB}}}
\newcommand{\DSCs}{\widehat{\mathsf{DSC}}}
\newcommand{\UNCs}{\widehat{\mathsf{UNC}} }
\newcommand{\uMCBp}{\overline{\mathsf{uMCB}}}
\newcommand{\cMCBp}{\overline{\mathsf{cMCB}}}

\newcommand{\R}{\mathbb{R}}
\newcommand{\E}{\mathbb{E}}
\newcommand{\A}{\mathsf{A}}
\renewcommand{\O}{\mathsf{O}}

\newcommand{\interior}{\operatorname{int}}

\DeclareMathOperator*{\argmin}{arg\,min}
\newcommand{\sgn}{\operatorname{sgn}}

\newcommand{\N}{\mathbb N}
\renewcommand{\P}{\mathbb P}

\renewcommand{\H}{\mathcal H}
\renewcommand{\a}{\alpha}

\newcommand{\toP}{\stackrel{\mathbb{P}}{\longrightarrow}}
\newcommand{\toD}{\stackrel{d}{\longrightarrow}}

\theoremstyle{plain}
\newtheorem{assumption}{Assumption}[section]
\newtheorem{thm}[assumption]{Theorem}

\newtheorem{prop}[assumption]{Proposition}
\newtheorem{lemma}[assumption]{Lemma}

\date{\today}

\title{Statistical Inference for Score Decompositions}

\author{
	Timo Dimitriadis\thanks{Corresponding Author. Faculty of Economics and Business, Goethe University Frankfurt, Campus Westend, Theodor-W.-Adorno-Platz 4, 60629 Frankfurt am Main, Germany, and Heidelberg Institute for Theoretical Studies (HITS), \href{mailto:dimitriadis@econ.uni-frankfurt.de}{dimitriadis@econ.uni-frankfurt.de}.} \and  Marius Puke\thanks{Institute of Economics, University of Hohenheim, Schloss Hohenheim 1 C, 70593 Stuttgart, Germany, \href{mailto:marius.puke@uni-hohenheim.de}{marius.puke@uni-hohenheim.de}.}
}

\begin{document}
	\maketitle
	\begin{abstract}
		We introduce inference methods for score decompositions, which partition scoring functions for predictive assessment into three interpretable components: miscalibration, discrimination, and uncertainty. 
		Our estimation and inference relies on a linear recalibration of the forecasts and is applicable to general % (multi-step ahead) 
		point forecasts such as means and quantiles due to its validity for non-smooth scoring functions.
		This approach ensures non-negative decomposition terms in finite samples, enables asymptotic inference under model misspecification, and establishes a direct connection to the classical Mincer-Zarnowitz regression.
		The resulting inference framework facilitates novel tests for equal linearized forecast calibration or discrimination, which yield three key advantages. 
		They enhance the information content of predictive ability tests by decomposing scores, can improve detection power in scenarios where predictive differences are attributable to specific score components, and formally connect scoring-function-based evaluation to traditional calibration tests, such as financial backtests.
		Applications demonstrate the method's utility. 
		We find that for survey inflation forecasts, discrimination abilities can differ significantly even when overall predictive ability does not. 
		In an application to financial risk models, our tests provide deeper insights into the calibration and information content of volatility and Value-at-Risk forecasts. 
		By disentangling forecast accuracy from backtest performance, the method exposes critical shortcomings in current banking regulation.\\[0.1cm]
		
		\noindent \textit{Keywords}: Forecast Evaluation, Score Decompositions, Inference, Quantiles, Forecast Calibration, Forecast Discrimination \\[0.1cm]
		
		\noindent \textit{JEL classification}: C21, C22, C52, C53, C58, G17, E37
	\end{abstract}

	\pagebreak 
	\section{Introduction}
	\label{sec:Intro}
	
	Forecasting---whether of macroeconomic conditions, financial market movements, or other uncertain outcomes---has become an indispensable element of modern economic analysis.
	From monetary policy design to investment planning and risk management, economic decisions increasingly rely on the ability to anticipate what lies ahead. 
	The value of these decisions, however, depends critically on the accuracy and credibility of the underlying forecasts. 
	Therefore, establishing rigorous standards for evaluating predictions is  essential, ensuring that economic forecasts are not only informative but also empirically grounded and verifiable.
	
	Two main concepts have emerged for evaluating a forecast $X_t$ of an uncertain future real-valued outcome $Y_t$. 
	First, the statistical theory of forecast comparison ranks competing forecasts according to their empirical score (or loss) \citep{S_1971,G_2011a}. 
	The appropriate scoring function depends on the forecast target---e.g., whether the aim is to predict the full distribution of $Y_t$, its mean, or a particular quantile. 
	For meaningful comparisons, these scoring functions must be \emph{strictly consistent}: the ideal forecast should minimize the expected score, ensuring that lower score values reliably indicate superior predictive performance.
	
	Second, forecasts are evaluated in terms of their alignment with the realized outcomes---a property referred to as \emph{calibration} in statistics \citep{Gneiting2007probabilistic, GR_2023}, \emph{forecast optimality}, \emph{rationality} or \emph{efficiency} in economics \citep{D_1998, Elliott2016}, and \emph{backtesting} in finance \citep{B_2023_vola,B_2019}.
	For example, when forecasting the mean, the average forecast should coincide with the average realized outcome—both unconditionally and conditional on the forecast itself, the latter being referred to as auto-calibration.
	In economics, this property is most commonly assessed using the test of \citet[henceforth MZ]{MZ_1969}, which examines whether the forecast $X_t$ equals the conditional expectation of $Y_t$ given $X_t$ through a linear regression framework.
	Extensions to other targets, such as quantiles, follow naturally by replacing the mean regression with the corresponding quantile regression \citep{Gaglianone2011, Guler2017, BD_2020}.
	
	These two evaluation paradigms are known to occasionally produce counterintuitive results, a point illustrated by the following motivational example.
	In this paper, we develop asymptotic inference for the components of so-called score decompositions, which provide a transparent link between the two forecast evaluation regimes described above. Beyond delivering a sharper theoretical understanding of the respective strengths and limitations of these approaches, the motivational example underscores their practical relevance by demonstrating how, in applied settings, the two methods can lead to seemingly contradictory conclusions.

	\subsection{Motivational example}
	\label{sec:motex}
	
	Consider the daily logarithmic returns % from January 1, 2000 to January 1, 2022 
	of the S\&P E-mini future as the world’s most actively traded equity index futures contract.
	We evaluate one-day ahead forecasts for its variance as well as its $1\%$-quantile, known as the Value-at-Risk (VaR), between   January 1, 2008 and January 1, 2022.
	Both forecast types are important tools in banking regulation; see
	\citet[\href{https://www.bis.org/basel_framework/chapter/CRE/36.htm?inforce=20191215&published=20191215&export=pdf}{CRE 36.153}]{B_2023_vola} and \citet[\href{https://www.bis.org/basel_framework/chapter/MAR/32.htm?inforce=20230101&published=20200327}{MAR 32.5} and \href{https://www.bis.org/basel_framework/chapter/MAR/99.htm?inforce=20230101&published=20200327}{MAR 99.9} -- \href{https://www.bis.org/basel_framework/chapter/MAR/99.htm?inforce=20230101&published=20200327}{MAR 99.21}]{B_2019} for volatility and VaR backtesting, respectively.
	Section~\ref{sec:appl_vola} provides additional details on this application.
	
	We use the following state of the art forecasting models:
	First, the Heterogeneous Auto-Regressive (HAR) model of \citet{C_2009} uses intraday trading information through the Realized Variance (RV) estimator to forecast the variance through a linear model.
	Second, we employ a baseline GARCH model of \citet{B_1986}. For both the HAR and GARCH model, we obtain quantile forecasts through a Gaussianity assumption on the model residuals.
	For the case of VaR forecasts, we also consider the basic Historical Simulation (HS) approach, which uses the unconditional quantile of the past 250 trading days as its quantile forecast.

	\begin{table}
		\caption{\citet{MZ_1969} test $p$-values, average scores (QLIKE for variance, and check loss for the quantile) and its decomposition terms for variance and VaR forecasts. See  Sections~\ref{sec:motex} and \ref{sec:appl_vola} for details.}
		\label{tab:motivation}
		\centering
		\begin{tabular}[t]{l l r l r r r r}
			\toprule
			Forecast Type &	Model $i$ & MZ $p$-value & & Score &\multicolumn{3}{c}{Score Decomposition} \\
			\cmidrule(l{3pt}r{3pt}){6-8}  
			& &     && $\myS_i$ & $ \MCB_i$ & $ \DSC_i $  & $ \UNC $\\
			
			\midrule
			
			%\multicolumn{7}{l}{ \underline{a) Variance Forecasts}} \\[0.5em]
			\multirow{2}{*}{Variance}  & HAR  	& 0.001 	  &&  1.775 & 0.034  &  6.732  & \multirow{ 2}{*}{$ 8.473 $} 		\\ 
			& GARCH &  0.991 && 2.522	&  0.000  & 5.951 	& \\
			
			\midrule
			
			% \multicolumn{7}{l}{ \underline{b) 1\%-VaR Forecasts}} \\[0.5em]
			\multirow{3}{*}{$1\%$-Quantile} & HAR  	& 0.003 && 4.018 & 0.270 & 1.831  &  \multirow{ 3}{*}{$5.579$}  	\\
			& GARCH & 0.006 && 4.089	&  0.148 & 1.638	&  	\\
			& HS  	& 0.321 &&  5.114& 0.077  & 0.543 & 	\\ 
			\bottomrule
		\end{tabular}
	\end{table}
	
	Table~\ref{tab:motivation} shows that, for both forecast types, the HAR model achieves the lowest average scores. These improvements are statistically significant at the $1\%$ level according to the \citet[henceforth DM]{DM_1995} test, with the sole exception of the HAR–GARCH comparison for VaR forecasts.
	Thus, the HAR forecasts deliver the strongest predictive performance.
	At the same time, the MZ test decisively rejects auto-calibration of the HAR forecasts. In contrast, the GARCH variance forecasts yield significantly higher scores, yet auto-calibration cannot be rejected. 
	A similar pattern emerges for the HS VaR forecasts: they record the largest scores but constitute the only forecasting sequence for which calibration is not rejected.
	
	This example illustrates that superior predictive performance and proper auto-calibration do not necessarily coincide and may diverge sharply. Since both evaluation approaches are widely applied in practice, developing tools to reconcile and interpret these seemingly contradictory conclusions is of central importance.
	
	We shed light on these results by examining the \emph{score decomposition} reported in Table~\ref{tab:motivation}. 
	Score decompositions have a long tradition in meteorology \citep{M_1973,Dawid1986,MW_1987,B_2009,BF_2014} and have recently attracted renewed interest in statistics \citep{DGJ_2021,GR_2023,G_etal_2023}, yet they have received comparatively little attention in economics. 
	In essence, the average score $\myS_i$ of forecaster $i$ can be decomposed additively into three non-negative components capturing miscalibration ($\MCB_i$), discrimination ($\DSC_i$), and uncertainty ($\UNC$):
	\begin{align} 
		\label{eq:DecompositionIntro}
		\myS_{i} = \MCB_{i} - \DSC_{i} + \UNC,
	\end{align}
	whose exact specification we explain below around \eqref{eqn:PopScoreDecomp} and \eqref{eqn:FiniteSampleDecomposition}.
	
	Intuitively, the $\MCB$ component measures the extent of forecast miscalibration by quantifying how much the score improves after recalibrating the forecasts by regressing outcomes on forecasts as in the MZ test.
	% the forecasts were perfectly calibrated---thus capturing a concept closely related to what the MZ test evaluates.
	Likewise, the $\DSC$ component reflects the forecaster’s ability to distinguish (or ``discriminate'') between higher and lower outcome realizations by comparing the scores of the linearly recalibrated forecasts with a constant reference forecast.
	The remaining $\UNC$ term is forecast-independent and therefore functions as a reference level for the decomposition.

	The last columns of Table~\ref{tab:motivation} help reconcile the discrepancies between the score-based and calibration-based assessments. For both forecast types, the HAR model attains the highest (and thus best) $\DSC$ values, which is intuitive given that it incorporates the richest information set through high-frequency returns. However, it also yields the largest (worst) $\MCB$ values, accounting for its rejection in the MZ test.
	By contrast, the GARCH model---and even more so the HS model---exhibits weaker discrimination but achieves substantially better calibration.
	Yet, despite the clear interpretive value of these decomposition terms, formal inference methods for them are not available, limiting their practical usefulness as reflected in \citet[p.\ 355]{ET_2016}, who point out that ``am important limitation of these decompositions is that there are no objective measures for how large or small the terms should be''.

	This gap is particularly problematic in settings such as banking regulation under the Basel framework, where calibration-based backtesting plays a central role. 
	Such procedures risk creating misguided incentives: the best-performing HAR forecasts may be labeled ``invalid,'' whereas HS forecasts---whose poor discriminatory ability makes them slow to respond to emerging crises---may be deemed acceptable.
	These tensions highlight the need for both a deeper understanding of the distinct properties captured by the decomposition and the development of rigorous statistical inference for its components---needs that this paper directly addresses.

	\subsection{Contributions}
	\label{sec:Contributions}
	
	In this paper, we propose augmenting scoring-function-based forecast comparisons with the corresponding decomposition terms, as illustrated in Table~\ref{tab:motivation}, and we are the first to develop asymptotic inference for these components.
	To estimate the decomposition components in \eqref{eq:DecompositionIntro}, we advocate the use of a linear regression---either mean or quantile, depending on the forecast target. 
	This specification is motivated by several considerations: its stability and resistance to overfitting; its good empirical fit in our applications; its close conceptual connection to the \citet{MZ_1969} framework; its attractive non-negativity guarantees for the resulting decomposition terms; and, crucially, its tractability for developing valid asymptotic inference procedures under model misspecification.
	Recent contributions have suggested nonparametric (isotonic) regression approaches instead \citep{DGJ_2021, GR_2023, arnold2024decompositions}. However, their asymptotic properties remain unclear---particularly in time-series settings---and they are prone to overfitting, which can introduce biases that further complicate or even undermine asymptotic inference \citep{roelofs2022mitigating}.
	
	We derive the joint asymptotic distribution of the estimated $\MCB$ and $\DSC$ components for competing forecasts allowing for possibly non-smooth scoring functions, thereby enabling new tests of equal miscalibration or equal discrimination for, e.g., mean or quantile forecasts.
	These tests extend the classical framework of \citet{DM_1995} by incorporating two layers of uncertainty: the sampling variation arising from averaging over time and the additional variation introduced by the (possibly misspecified) linear ``recalibration regression'' used in constructing the $\MCB$ and $\DSC$ components.
	A key complication is that the limiting distribution depends on whether the population values of $\MCB$ or $\DSC$ are zero or strictly positive. 
	Our testing procedures account for these cases through a $p$-value adjustment inspired by a hybrid intersection–union/union–intersection (IU/UI) testing principles \citep{Berger1997likelihood}.
	
	Our simulations demonstrate that the proposed tests are valid—though potentially conservative due to the underlying IU/UI principles---under their respective null hypotheses for both mean and quantile forecasting settings. 
	We also show that the tests exhibit strong power properties.
	In particular, we show that the proposed tests can provide increased sensitivity in settings where differences in average scores are driven exclusively by a single component, either $\MCB$ or $\DSC$. 
	In such cases, they yield higher rejection frequencies than the classical DM test. 
	This reflects that our tests isolate a specific aspect of forecast performance---either discrimination or miscalibration---whereas score-based tests aggregate both components into a single, potentially noisier measure.

	Our first application compares the predictive performance of two survey forecasts for U.S.~inflation—the Survey of Professional Forecasters (SPF) and the Michigan Survey of Consumers. While professional forecasters unsurprisingly attain better average scores than consumers, these differences are typically insignificant \citep{EGJK_2016, P_2020}. In contrast, the score decomposition shows that both groups exhibit similar calibration, whereas our new tests detect a significant difference in discrimination. This finding is consistent with the higher rejection frequencies observed in the corresponding simulation designs. Given that professional forecasters draw on a richer information set, their superior discrimination is a natural consequence.
	
	The second application extends the motivating example of Section~\ref{sec:motex} and clarifies the relationship between scoring-function-based forecast evaluation and backtests for both, variance and VaR forecasts.
	Moreover, we disentangle several widely used VaR backtests by characterizing them through the conditioning sets they implicitly impose within the framework of conditional quantile calibration, thereby revealing the precise properties they actually test.

	\subsection{Literature}
	\label{sec:Literature}
	
	Early approaches to score decompositions resembling \eqref{eq:DecompositionIntro} date back to \citet{sanders1963subjective}, \citet{T_1966}, and \citet{M_1973}, and primarily focus on the squared error in the context of probability forecasts for binary outcomes, which can be interpreted as a mean. 
	\citet{degroot1981assessing} and \citet{B_2009} demonstrated that such decompositions can be derived for any strictly proper scoring rule, including those for full distributional forecasts. 
	More recently, \citet{GR_2023} developed a general framework for score decompositions for scoring functions, extending these concepts to point forecasts. 
	This line of research has spurred a growing body of both theoretical and applied work; see, for example, \citet{BF_2014, S_2017, Pohle2020, FLM_2022, Wetal_2025, Betal_2024}. 
	For a more detailed historical overview of score decompositions, see \citet[Section~1.1]{mitchell2019score}.

	In practice, the score decompositions rely on estimates of recalibrated forecasts, and the choice of recalibration method remains an active area of research. Recently, isotonic regression—which estimates, for example, a conditional mean nonparametrically under a monotonicity constraint—has been widely used for both point and distributional forecasts \citep{DGJ_2021, G_etal_2023, DGJV_2024, arnold2024decompositions, allen2025assessing}. A key reason for its popularity is that it satisfies attractive finite-sample non-negativity conditions, which facilitate the interpretation of the decomposition terms; these conditions are generally not satisfied by alternative methods such as binning or kernel regressions \citep{DGJ_2021}. We show that, under mild conditions, these non-negativity properties are also preserved when recalibration is performed using (possibly misspecified) linear regressions.
	
	As we develop asymptotic inference methods for the estimated decomposition terms on the right-hand side of \eqref{eq:DecompositionIntro}, our approach is closely related to statistical tests for predictive performance, which focus on the sampling variability of the left-hand side of \eqref{eq:DecompositionIntro}. The seminal contribution of \citet{DM_1995} laid the foundation for a large econometric literature on testing for equal forecast performance. Numerous extensions have since been proposed, including methods for model-based forecasts \citep{West1996asymptotic, clark2001tests}, multiple forecast comparisons \citep{Hansen2005test, HLN_2011}, conditional predictive ability \citep{Giacomini2006tests, Li2022conditional}, and anytime-valid procedures \citep{Henzi2022valid, Choe2024comparing}. For comprehensive surveys of this literature, see, for example, \citet{west2006forecast}, \citet{ClarkMcCracken2013}, and \citet{Diebold2015comparing}.
	
	In contrast, the assessment of sampling uncertainty for the right-hand side of \eqref{eq:DecompositionIntro} has received relatively little attention. Existing contributions, such as \citet{S_2014} and \citet{GR_2023}, propose ad-hoc (resampling) methods to quantify the uncertainty for a single forecaster, but they do not provide tools for comparative inference across multiple forecasters. 
	We address this gap by developing a unified framework for inference on the decomposition components.
	
	VaR backtests are widely used to evaluate the reliability, i.e., the calibration, of risk models via their quantile forecasts.
	Classic procedures include the unconditional coverage test of \citet{K_1995}, which underlies the traffic-light system of \citet{B_1996, B_2019}, as well as the conditional coverage test of \citet{C_1998}, which additionally assesses the independence of exceedances.
	A large body of subsequent work tests slightly differing notions of conditional calibration \citep{BCP_2011, Gaglianone2011, BCP_2011, HD_2023, FGP_2023, wang2025backtesting}; see \citet{HKP_2022} for a recent overview.
	Our score decompositions---and in particular the application in Section \ref{sec:appl_vola}---show that well-calibrated forecasts do not necessarily exhibit strong predictive performance.
	This underscores the importance of evaluating VaR forecasts through both calibration and discrimination, rather than relying solely on traditional backtesting metrics; also see Section~\ref{sec:VaRBacktestsTheory}.

	A growing strand of the macroeconomic literature---including, among many others, \citet{CG_2015}, \citet{BGMS_2020}, \citet{kohlhas2021asymmetric}, \citet{AHS_2021}, \citet{broer2024forecaster}, and \citet{FNS_2024}---examines how forecast errors respond to observable information.
	This line of work effectively tests necessary conditions for conditional calibration, asking whether forecast errors are linearly predictable given specific observable information available at the time forecasts are made.
	Our methodology suggests a natural refinement of these approaches by additionally assessing forecast discrimination, since calibration captures only one dimension of predictive performance.
	This refinement is relevant because forecasts that discriminate well are often found to be miscalibrated, and thus deviate from rational expectations more frequently.

	\subsection{Plan of the article}
	
	Section~\ref{sec:Fundamentals} introduces key concepts of forecast evaluation. 
	The theory in Section~\ref{sec:singlemain} develops asymptotic inference under high-level conditions, which are subsequently verified for mean and quantile forecast.
	Section~\ref{sec:sim} contains simulations and we present two applications in Section~\ref{sec:appl}.
	Section~\ref{sec:conclusion} concludes. 
	
	In the supplementary material, Appendix~\ref{sec:Proofs} contains all proofs and we discuss the relation to further decompositions in Appendix~\ref{sec:connecttolit}.
	Appendix~\ref{app:QuantileSimDGP} presents technical details for the simulations and Appendix~\ref{sec:AddFigs} contains additional figures.
	
	We provide replication material at \href{https://github.com/marius-cp/SDI_replications}{https://github.com/marius-cp/SDI\_replications} for the simulations and applications, which draws on the \textsf{SDI} package for the statistical software \textsf{R} \citep{R2026}, available at \href{https://github.com/marius-cp/SDI}{https://github.com/marius-cp/SDI}.

	\section{Fundamentals of forecast evaluation}
	\label{sec:Fundamentals}
	
	After fixing notation in Section~\ref{sec:Setup}, we discuss relative and absolute forecast evaluation in Sections~\ref{sec:RelativeFCEval} and \ref{sec:ForecastCalibration}.

	\subsection{Setup and notation}
	\label{sec:Setup}
	
	We follow \citet{G_2011a} and \citet{HogaDimi2023} and consider a complete probability space $(\Omega, \mathcal{F}, \mathbb{P})$ with a series of stationary random variables $Y_t: \Omega \to \mathsf{O} \subseteq \R$, $t \in \N$, that take values in the interval-valued observation domain $\mathsf{O}$.
	For some information set (sigma-algebra) $\mathcal{F}_t \subset \mathcal{F}$, where $Y_t$ is in general not $\mathcal{F}_t$-measurable, we denote the conditional distribution of $Y_t$ given $\mathcal{F}_t$ by $F_t \in \mathcal{P}$ for some convex class of distributions $\mathcal{P}$ taking values in $\mathsf{O}$.\footnote{Formally, the conditional distributions are random variables $F_t(\bullet, \omega) = \P(Y_t \le \bullet \mid \mathcal{F}_t)(\omega)$, where we assume that $F_t(\bullet, \omega) \in \mathcal{P}$ for all $\omega \in \Omega$, for some convex class of distributions $\mathcal{P}$ taking values in $\mathsf{O}$, which contains the Dirac measures $\delta_y$ for all $y \in \mathsf{O}$.
		We write $F_t(\bullet)$ for the random variable defined by $\omega \mapsto F_t(\omega, \bullet)$. }
	The information set $\mathcal{F}_t$ represents the full information that could be used for forecast $Y_t$, but competing forecasters might have less information at hand.
	We deliberately use the notation $\mathcal{F}_t$ (opposed to e.g.,  $\mathcal{F}_{t-1}$) to allow for both time series (where $t$ represents time points) as well as cross-sectional (where $t$ represents cross-sectional units) applications.

	For an interval-valued action domain $\mathsf{A} \subseteq \R$, we consider a stationary sequence of forecasts for some target functional $\Gamma: \mathcal{P} \to \mathsf{A}$ of the conditional distribution $F_t$.
	Prominent functionals are the conditional expectation $\E [ Y_t \mid \mathcal{F}_{t}]$, or a conditional $\alpha$-quantile $Q_\alpha(Y_t \mid \mathcal{F}_{t})$ for some $\alpha \in (0,1)$, which is assumed to be unique for simplicity.
	We sometimes abuse notation and write $\Gamma( Y_t \mid \mathcal{F}_{t})$ instead of $\Gamma({F}_{t})$ for a functional of the conditional distribution.
	
	The point forecasts are denoted by $X_{it}$ where the index $i=1,2$ denotes a pair of competing forecasts.
	These forecasts are assumed to be $\mathcal{F}_{t}$-measurable throughout the article but they are in general based on less information, $\mathcal{I}_{it} \subseteq \mathcal{F}_{t}$, or might be misspecified through other aspects such as model mispecification or estimation noise.
	
	Throughout, (in)equalities of random variables are meant to hold almost surely (a.s.) unless stated otherwise, and for a vector $\mathbf{z} \in \mathbb{R}^k$, $k \in \mathbb{N}$, $\lVert \mathbf{z} \rVert$ denotes the Euclidean norm and $\mathbf{z}^\top$ its transpose. 
	We write $\nabla_{\mathbf{z}} f$ for the (column-vector) gradient of the mapping $\mathbf{z} \mapsto f(\mathbf{z}) \in \mathbb{R}$ and $\nabla_{\mathbf{z} \mathbf{z}} f$ for its second derivative, the Hessian matrix.
	In the univariate case, $f'(z)$ denotes the derivative of $f(z)$ with respect to $z \in \mathbb{R}$. For functions with multiple
	real-valued arguments, a prime indicates partial differentiation with respect to the first argument, and multiple primes denote higher-order derivatives.

	\subsection{Forecast comparison through scoring functions}
	\label{sec:RelativeFCEval}
	
	Following \citet{G_2011a}, point forecasts are evaluated using strictly
	consistent scoring (or loss) functions
	$\myS:\mathsf{A}\times\mathsf{O}\to\R$, where lower values indicate better
	predictive performance. A scoring function is \emph{(strictly) consistent} for
	a functional $\Gamma$ if the expected score is (uniquely) minimized by the
	optimal forecast. A functional is called \emph{elicitable} if it admits a
	strictly consistent scoring function. \citet[Theorem~1]{HE_2014} extend this
	concept to conditional information sets
	$\mathcal{G}_t\subseteq\mathcal{F}_t$, yielding
	\begin{align}
		\label{eqn:StrictConsistencyConditional}
		\E \big[ \myS \big(\Gamma(Y_t \mid \mathcal{G}_t), Y_t \big) \mid \mathcal{G}_{t} \big]
		\le
		\E \big[ \myS(\widetilde{X}_t, Y_t) \mid \mathcal{G}_{t} \big],
	\end{align}
	for every $\mathcal{G}_t$-measurable random variable $\widetilde{X}_t$, with
	almost sure equality if and only if
	$\widetilde{X}_t=\Gamma(Y_t\mid\mathcal{G}_t)$ almost surely.
	
	If the target functional $\Gamma$ is the conditional mean, under mild regularity conditions, all strictly consistent scoring functions are characterized by the so-called \emph{Bregman class},
	\begin{align}
		\label{eqn:BregmanLoss}
		\myS(x,y) = \phi(y) - \phi(x) - \phi'(x)(y-x) 
	\end{align}
	parametrized by a strictly convex function $\phi$ with subgradient $\phi'$ \citep[Theorem~7]{G_2011a}.
	The omnipresent squared error  
	\begin{align}
		\label{eqn:SEscore}
		\myS_{\textsf{SE}}(x,y)=(x-y)^2.
	\end{align}
	arises for $\phi_{\textsf{SE}}(x) = x^2$.
	For $\phi_{\textsf{QL}}(x) = -\log(x)$ we obtain the QLIKE scoring function, 
	\begin{align}
		\label{eqn:QLIKEscore}
		\myS_{\textsf{QL}}(x,y) = \frac yx - \log\left( \frac yx \right)- 1,
	\end{align}
	which is regularly used for the evaluation of variance forecasts in financial econometrics with positive action and observation domain,  $\mathsf{A} = \mathsf{O} = (0, \infty)$; see e.g., \citet{P_2011}.
	Under the standard assumption of a zero conditional mean, the variance coincides with the second moment, thereby rendering the Bregman class applicable.
	
	If the target functional is the conditional $\alpha$-quantile, $\alpha \in (0,1)$, strictly consistent scoring functions are given by the class of generalized piecewise linear (GPL) loss functions 
	\begin{align}
		\label{eqn:GPLLoss}
		\myS(x,y) = (\mathds{1}\{y \le x\} - \alpha) (g(x) - g(y)),
	\end{align}
	where the function $g$ is strictly increasing on $\mathsf{A} = \mathsf{O} \subseteq \mathbb{R}$ \citep[Theorem~9]{G_2011a}.
	The GPL loss functions are generally non-differentiable at $x=y$ due to the indicator function.
	The check loss $\myS_\textsf{CL}(x,y) = (\mathds{1}\{y \le x\} - \alpha) (x - y)$ as the most popular candidate arises for $g_{\textsf{CL}}(x) = x$.
	
	In practice, scoring functions are employed to compare forecast sequences 
	$X_{it}$, $i=1,2$, $t=1,\dots,T$, for the target variable $Y_t$ through their
	average scores $\widehat{\myS}_{iT} = \frac{1}{T} \sum_{t=1}^{T} \myS(X_{it}, Y_t)$.
	The sample average serves as an empirical approximation to the
	expectation appearing in \eqref{eqn:StrictConsistencyConditional}.
	Following \citet{DM_1995}, a rich body of asymptotic inference methods is available for constructing tests based on such average score differences.

	\subsection{Forecast validation and calibration assessment}
	\label{sec:ForecastCalibration} 
	
	A known practical limitation of purely score-based forecast evaluations lies in their inability to identify specific deficiencies of the forecasts, such as systematic biases, which we formalize through the theory of forecast calibration.
	A forecast $X_{it}$ for $\Gamma(Y_t)$ is said to be \emph{conditionally calibrated} with respect to the information set $\mathcal{G}_{it} \subseteq \mathcal{F}_t$ if 
	\begin{align}
		\label{eqn:ConditionalCalibration}
		X_{it} = \Gamma(Y_t \mid \mathcal{G}_{it}) \qquad \text{a.s.}
	\end{align}
	This property asserts a certain compatibility of the forecast $X_{it}$ with its forecasting target $\Gamma(Y_t \mid \mathcal{G}_{it})$.
	A conditionally calibrated forecast w.r.t.\ the full information set $\mathcal{F}_t$ minimizes the $\mathcal{F}_t$-conditional expectation of the scoring function by \eqref{eqn:StrictConsistencyConditional}, such that the econometric literature also calls it \emph{optimal} or \emph{efficient} \citep[Chapter 15]{ET_2016}.
	As the forecast might not have used---or even have access to---all information in $\mathcal{F}_t$, it is, however, often more appropriate to consider calibration based on some reduced information set $\mathcal{G}_{it} \subset \mathcal{F}_t$ in \eqref{eqn:ConditionalCalibration}.
	
	An important special case is that of an \emph{auto-calibrated} forecast, obtained for $\mathcal{G}_{it} = \sigma\{X_{it}\}$,~i.e.,
	\begin{align}
		\label{eqn:AutoCalibration}
		X_{it} = \Gamma(Y_t \mid X_{it}), \qquad \text{a.s.}
	\end{align}
	Conditioning on the forecast itself guarantees that the information encoded in the forecast was indeed used, and it remains feasible even when the evaluator does not have access to the full information set of the forecaster $\mathcal{I}_{it}$.
	
	A common tool to assess auto-calibration is to plot $x \in \mathsf{A}$ against an estimate of $\Gamma(Y_t \mid X_{it} = x)$, where deviations from the diagonal indicate miscalibration.
	In the meteorological literature, this is known as a \emph{reliability diagram}, typically estimated nonparametrically \citep{MurphyWinkler1992, Pohle2020, DGJ_2021}.
	In economic mean forecast evaluation, this idea goes back to \citet{MZ_1969}, who regress $Y_t$ on $X_{it}$ and test whether the intercept is zero and the slope is one---essentially a linear reliability diagram. 
	\citet{Gaglianone2011, Guler2017, BD_2020} cover extensions to other functionals.
	
	A useful alternative characterization of calibration is based on
	\emph{identification functions}, which, under suitable smoothness conditions,
	correspond to derivatives of strictly consistent scoring functions
	\citep{NZ_2017}. An identification function
	$\myV:\mathsf{A}\times\mathsf{O}\to\R$ is \emph{strict} if
	$\E_{Y\sim F}[\myV(x,Y)]=0$ if and only if $x=\Gamma(F)$. Subject to regularity
	conditions, given $\mathcal{G}_{it} \supseteq \sigma\{X_{it}\}$, conditional calibration in \eqref{eqn:ConditionalCalibration} is
	equivalent to
	\begin{align}
		\label{eqn:CondCalibrationIDFunction}
		\E \big[ \myV(X_{it}, Y_t) \mid \mathcal{G}_{it} \big] = 0, \qquad \text{a.s.},
	\end{align}
	which reduces to auto-calibration when
	$\mathcal{G}_{it}=\sigma\{X_{it}\}$. Taking unconditional expectations in \eqref{eqn:CondCalibrationIDFunction} yields the notion of \emph{unconditional calibration}, $\E[\myV(X_{it},Y_t)]=0$.

	\section{Theory}
	\label{sec:singlemain}
	
	We introduce linear score decompositions in Section~\ref{sec:popandsampdec}, develop asymptotic inference in Section~\ref{sec:estimation}, discuss testable hypotheses in Section~\ref{sec:Testing}, derive low-level inference conditions for mean and quantile forecasts in Section~\ref{sec:LowLevelConditions}, and discuss connections to VaR backtests in Section~\ref{sec:VaRBacktestsTheory}.

	\subsection{Score decompositions based on linear recalibration}
	\label{sec:popandsampdec}
	
	For our score decomposition, let $\Gamma$ be an elicitable functional and
	$\myS$ a corresponding strictly consistent scoring function. We introduce the
	\emph{recalibration variables}
	$\mW_{it} = (1, X_{it}, \dots)^\top \in \mathbb{R}^k$, which generate the
	recalibration information set $\mathcal{W}_{it} = \sigma\{\mW_{it}\}$, where
	$\sigma\{X_{it}\} \subseteq \mathcal{W}_{it} \subseteq \mathcal{F}_t$.
	The standard specification is $\mW_{it} = (1, X_{it})$. More generally,
	$\mW_{it}$ may include higher-order transformations of $X_{it}$ (e.g.,
	polynomial terms) as well as additional variables on which conditioning is of
	interest. The canonical auto-calibration case, in the sense of
	\citet{MZ_1969}, corresponds to $\mathcal{W}_{it} = \sigma\{X_{it}\}$.
	Enlarging the information set $\mathcal{W}_{it}$ naturally accommodates
	stronger notions of calibration and establishes connections to VaR backtesting
	and related macroeconomic applications \citep{CG_2015}.
	
	We impose the following standard assumption in the field of misspecified regressions \citep{White1980, KimWhite2003, Komunjer2005, Angrist2006}.
	
	\begin{assumption}
		\label{ass:TrueLinearSpecification}
		$\myS$ is a strictly consistent scoring function for the functional $\Gamma$, the random variables $Y_t$, $X_{it}$ and $\mW_{it}$ are strictly stationary for all $t \in \N$, and the pseudo-true parameter 
		\begin{align}
			\label{eqn:PseudoTrue}
			\bar{\mtheta}_{i} = \underset{\mtheta \in \mTheta}{\arg \min} \; \E \left[ \myS \big(\mW_{it}^\top \mtheta , Y_t \big) \right]
		\end{align}
		is unique and in the interior of $\mTheta \subseteq \R^k$, where $\mTheta$ is convex and compact, the points $(0,1,0, \dots)^\top$ and $(\Gamma(Y_t),0,\dots)^\top$ are in the interior of $\mTheta$, and $\mW_{it}^\top \mtheta \in \mathsf{A}$ for all $\mtheta \in \mTheta$.
	\end{assumption}

	Besides stationarity and some technicalities on $\mTheta$, Assumption \ref{ass:TrueLinearSpecification} crucially imposes the existence of a unique pseudo-true parameter $\bar\mtheta_i$ for the linearly \emph{reCalibrated forecast} $X_{it}^\mathsf{C} = \mW_{it}^\top \bar{\mtheta}_i$, which we later estimate through linear (e.g., mean or quantile) regression.
	The forecast $X_{it}^\mathsf{C}$ is a recalibrated version of the original forecast that exploits the information contained in the original forecast.
	% is an ``improved'' version of the original forecast that utilizes the information of the original forecast through recalibration.
	Under correct specification of the linear recalibration, $X_{it}^\mathsf{C} = \Gamma(Y_t \mid \mW_{it})$ is $\mathcal{W}_{it}$-conditional calibrated by definition; see \eqref{eqn:ConditionalCalibration}.
	Under model misspecification, $X_{it}^\mathsf{C}$ satisfies the interpretation of a \emph{linearly} recalibrated forecast in the sense of a best linear predictor for $\Gamma(Y_t \mid \mW_{it})$.
	Notice that the linear model is motivated by the benchmark case of calibrated forecasts, for which the linear model is correctly specified \citep{MZ_1969}; throughout, however, we allow for possible misspecification.

	We denote the expected score of the original forecasts $X_{it}$ by $\Sp_{i} := \E\big[ \myS(X_{it},Y_t) \big]$.
	Based on the linearly recalibrated forecast $X_{it}^\mathsf{C}$, the corresponding linearly \emph{reCalibrated score} is denoted by $\Sp_{i}^\mathsf{C} = \E \big[ \myS(X_{it}^\mathsf{C}, Y_t) \big]$.
	We further define the \emph{Reference} forecast $\bar{r} = \Gamma(Y_t)$, which is unconditionally calibrated but constant, and the corresponding \emph{Reference score} $\Sp^{\mathsf{R}} = \E \big[ \myS( \bar{r}, Y_t) \big]$.
	With these ingredients, the (linearized) \emph{population} score decomposition (with non-negative components) is given by
	\begin{align}
		\label{eqn:PopScoreDecomp}
		\Sp_{i}
		&= \underbrace{\big( \Sp_{i} - \Sp_{i}^{\mathsf{C}} \big)}_{ \MCBp_{i}}
		- \underbrace{\big( \Sp^\mathsf{R} - \Sp_{i}^\mathsf{C} \big)}_{ \DSCp_{i}} 
		+ \underbrace{\Sp^{\mathsf{R}}}_{\UNCp},
	\end{align}
	whose intuition we explain in the following.
	
	In the canonical case $\mW_{it} = (1, X_{it})$
	and under correct specification of the linear model, the component $\MCBp_{i} = \Sp_{i} - \Sp_{i}^{\mathsf{C}}$ quantifies the increase in the score's predictive ability that can be achieved by recalibration, hence providing a measure of miscalibration such that low $\MCBp_{i}$ values are desirable.
	As the recalibrated forecast is a measurable transformation of $X_{it}$, it incorporates no additional information; thus, any score gains stem purely from improved calibration.
	If $\MCBp_{i} = 0$, recalibration yields no score improvement, indicating that the original forecast was already calibrated.

	The component $\DSCp_{i} = \Sp^\mathsf{R} - \Sp_{i}^\mathsf{C}$ compares the scores of two recalibrated forecasts:  $\Sp_{i}^\mathsf{C}$ based on information in $X_{it}$, and $\Sp^\mathsf{R}$ unconditionally without additional information.
	As this ``controls for calibration'', the remaining score improvement in $\DSCp_{i}$ reflects the forecast's information content, that is, its discrimination ability.
	Larger values are therefore desirable.
	The final component $\UNCp = \Sp^{\mathsf{R}}$ is forecast-independent and serves as a normalizing constant.

	In case of a misspecified linear recalibration model, the population score decomposition in \eqref{eqn:PopScoreDecomp} can be understood as a ``linearized'' version, similar to the ``isotonized'' score decomposition of \citet{allen2025assessing}.
	In the misspecified case, $\MCBp_{i}$ quantifies the increase in the score's predictive ability that can be achieved by \emph{linear} recalibration.
	As linearity holds under the desirable case of auto-calibration where $X_{it}^\mathsf{C} = \Gamma(Y_t \mid \mW_{it}) = (1, X_{it}) \bar{\mtheta}_i$, we do not expect substantial departures from linearity for reasonably well-specified forecasts. 
	Indeed, in our empirical applications we find no evidence against the linear specification; the corresponding diagnostic results are reported in Appendix~\ref{sec:AddFigs}.
	Otherwise, non-linear recalibration could be captured in our framework by including higher powers of $X_{it}$ in $\mW_{it}$.

	In practice, the components in \eqref{eqn:PopScoreDecomp} require estimated versions of both the recalibrated and reference forecasts.  
	While estimating the reference forecast as the unconditional functional is straightforward, the estimation of the recalibrated forecast remains debated, with proposals including binning methods \citep{MW_1987}, kernel regression \citep{Pohle2020}, and isotonic regressions \citep{DGJ_2021, GR_2023}.
	Instead, we propose using linear regression, which mitigates the overfitting bias associated with nonparametric methods that can invalidate inference, particularly in settings close to auto-calibration \citep{roelofs2022mitigating}. 
	Moreover, the linear specification is consistent with \citet{MZ_1969} and readily accommodates additional covariates beyond the forecasts themselves.

	For the finite-sample guarantees of Theorem~\ref{thm:PositivComponenets} below, it is crucial that we use M-estimation for the regression parameters using the \emph{same} scoring function that is used for the decomposition\footnote{While \citet{DFZ_CharMest} show that strictly consistent loss functions (and only these) produce consistent parameter estimates, the resulting estimates generally differ in value across loss functions \citep{Elliott2016, Muhlemann2021}, motivating the need to employ the \emph{same} scoring function in \eqref{eqn:PopScoreDecomp} and \eqref{eqn:Mestimator}.}
	\begin{align}
		\label{eqn:Mestimator}
		\widehat{\mtheta}_{iT} = \underset{\mtheta \in \mTheta}{\arg \min} \; \frac{1}{T} \sum_{t=1}^T \myS \big(\mW_{it}^\top \mtheta , Y_t \big),
	\end{align}
	where we assume for simplicity that the arg\,min in \eqref{eqn:Mestimator} is unique.
	Moreover, the empirical reference forecast $\widehat{r}_T$ is the unconditional functional $\Gamma$ of the sample $Y_1, \dots, Y_T$.
	Then, the \emph{empirical counterpart} of the score decomposition in \eqref{eqn:PopScoreDecomp} is given by
	\begin{align}
		\label{eqn:FiniteSampleDecomposition}
		\Ss_{iT}  
		= \MCBs_{iT} - \DSCs_{iT} + \UNCs_{T}
		:= \left( \Ss_{iT} - \Ss^{\mathsf{C}}_{iT} \right) - \left( \Ss^{\mathsf{R}}_T - \Ss^{\mathsf{C}}_{iT} \right) + \Ss^{\mathsf{R}}_T,
	\end{align}  
	where 
	\begin{align*}
		\Ss_{iT}   = \frac{1}{T} \sum_{t=1}^T \myS(X_{it},Y_t), \qquad 
		\Ss^{\mathsf{C}}_{iT} = \frac{1}{T} \sum_{t=1}^T \myS (\mW_{it}^\top \widehat{\mtheta}_{iT} , Y_t ) ,\qquad  
		\Ss^{\mathsf{R}}_T = \frac{1}{T} \sum_{t=1}^T \myS(\widehat{r}_T, Y_t).
	\end{align*}

	The following theorem ensures (strict) positivity of the sampling versions of the decomposition in \eqref{eqn:FiniteSampleDecomposition}.  
	\begin{thm}
		\label{thm:PositivComponenets}  
		Assume that the recalibrated forecasts are obtained by \eqref{eqn:Mestimator} using the same strictly consistent score $\myS$ as used in \eqref{eqn:FiniteSampleDecomposition}.
		Furthermore, let $\mW_{it}$ contain a constant and the forecast $X_{it}$, and let $(\widehat{r}_T,0,\dots)^\top \in \mTheta$ almost surely.
		Then, for the decomposition in \eqref{eqn:FiniteSampleDecomposition}, it holds that:
		\begin{enumerate}[label=(\alph*)]
			\item 
			\label{item:MCBs>0}
			$\MCBs_{iT} \ge 0$, which holds with equality if $X_{it} = \mW_{it}^\top \widehat{\mtheta}_{iT}$ a.s.\ for all $t=1, \dots, T$.
			
			\item 
			\label{item:MCBs=0}
			If the $\arg\,\min$ in \eqref{eqn:Mestimator} is unique, then $\MCBs_{iT} = 0$ a.s.\ implies $X_{it} = \mW_{it}^\top \widehat{\mtheta}_{iT}$ a.s.\ for all $t=1, \dots, T$.
			
			\item 
			\label{item:DSCs>0}
			$\DSCs_{iT} \ge 0$, which holds with equality if $\mW_{it}^\top \widehat{\mtheta}_{iT} =  \widehat{r}_T$ a.s.\ for all $t=1, \dots, T$.
			
			\item 
			\label{item:DSCs=0}
			If the $\arg\,\min$ in \eqref{eqn:Mestimator} and $\widehat{r}_T$ are unique, then $\DSCs_{iT} = 0$ a.s.\ implies $\mW_{it}^\top \widehat{\mtheta}_{iT} = \widehat{r}_T$ a.s.\ for all $t=1, \dots, T$.
		\end{enumerate}
	\end{thm}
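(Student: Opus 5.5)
The whole argument rests on the fact that $\widehat{\mtheta}_{iT}$ minimizes the empirical criterion $\mtheta \mapsto \frac1T\sum_{t=1}^T \myS(\mW_{it}^\top\mtheta, Y_t)$ over $\mTheta$. For each component we exhibit a competing parameter in $\mTheta$ whose fitted values reproduce the comparison forecast. Since $\mW_{it}=(1,X_{it},\dots)^\top$, the unit vector $\mathbf{e}_2=(0,1,0,\dots)^\top$ gives $\mW_{it}^\top\mathbf{e}_2 = X_{it}$ for all $t$, and $\mathbf{e}_2\in\mTheta$ by Assumption~\ref{ass:TrueLinearSpecification}. Likewise $(\widehat r_T,0,\dots)^\top$ gives fitted values identically equal to $\widehat r_T$, and it lies in $\mTheta$ by hypothesis.

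For \ref{item:MCBs>0}, the minimizing property evaluated at $\mathbf{e}_2$ gives $\Ss^{\mathsf{C}}_{iT} \le \frac1T\sum_t \myS(X_{it},Y_t) = \Ss_{iT}$, so $\MCBs_{iT}\ge 0$. If $X_{it}=\mW_{it}^\top\widehat{\mtheta}_{iT}$ for all $t$, the two averages coincide term by term, so $\MCBs_{iT}=0$. Part \ref{item:DSCs>0} follows the same way: evaluating at $(\widehat r_T,0,\dots)^\top$ yields $\Ss^{\mathsf{C}}_{iT}\le \Ss^{\mathsf{R}}_T$, with equality whenever the fitted values all equal $\widehat r_T$. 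Strict consistency is not needed for these inequalities; it only justifies the interpretation.

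For \ref{item:MCBs=0}, suppose $\MCBs_{iT}=0$. Then $\mathbf{e}_2$ attains the same empirical criterion value as $\widehat{\mtheta}_{iT}$, so it is also a minimizer. Uniqueness of the arg\,min forces $\widehat{\mtheta}_{iT}=\mathbf{e}_2$, and hence $\mW_{it}^\top\widehat{\mtheta}_{iT}=X_{it}$ for all $t$. Part \ref{item:DSCs=0} is analogous: $\DSCs_{iT}=0$ makes $(\widehat r_T,0,\dots)^\top$ a minimizer, so uniqueness gives $\widehat{\mtheta}_{iT}=(\widehat r_T,0,\dots)^\top$ and the fitted values equal $\widehat r_T$. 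The uniqueness of $\widehat r_T$ is assumed so that this reference vector is well defined.

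The main obstacle is the subtle point in \ref{item:MCBs=0} and \ref{item:DSCs=0}: uniqueness must refer to the parameter, and the conclusion needs only the fitted values, which is weaker. If the regressors were collinear (e.g.\ $X_{it}$ constant in the sample), the arg\,min could not be unique, so the uniqueness assumption implicitly rules this out. We also need $\mathbf{e}_2$ to be feasible, which Assumption~\ref{ass:TrueLinearSpecification} supplies, and the reference vector to be feasible, which the theorem assumes.
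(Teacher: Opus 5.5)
Your proof is correct and follows the paper's argument: compare the empirical minimum against the feasible competitors $(0,1,0,\dots)^\top$ and $(\widehat r_T,0,\dots)^\top$, and use uniqueness of the $\arg\,\min$ for (b) and (d) (the paper phrases these as contrapositives). The only deviation is in (c). There the paper first uses strict consistency, together with the empirical distribution lying in $\mathcal P$, to write $\Ss^{\mathsf R}_T$ as a minimum over constants. You instead plug in $(\widehat r_T,0,\dots)^\top$ directly, which suffices and shows that strict consistency is not needed for the inequalities themselves.
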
 
	
	The non-negativity guarantees of Theorem~\ref{thm:PositivComponenets} are essential, as zero miscalibration and discrimination components are routinely interpreted as perfect (linear) calibration and no (linear) discrimination, respectively \citep{DGJ_2021, GR_2023, allen2025assessing}.  
	The proof of Theorem~\ref{thm:PositivComponenets} applies to any recalibration method estimated by global score minimization and which is flexible enough to nest both a constant and the identity line, such as isotonic regression \citep{DGJ_2021, GR_2023}.  
	In contrast, kernel regressions \citep{Pohle2020}, logistic regressions and Beta-CDF models for binary events \citep{S_2017, ranjan2010combining} do not offer such guarantees.  
	Although kernels can represent constants and the identity line, they are not fitted via global score minimization; Beta-CDFs, being strictly increasing, cannot produce constants; and logistic regressions generally cannot reproduce the identity line.

	\subsection{Asymptotic approximations}
	\label{sec:estimation}
	
	We now provide new distributional convergence results for the considered linear score decomposition, which are derived under the following high-level regularity conditions, for which we derive more explicit conditions for mean and quantile forecasts in Section~\ref{sec:LowLevelConditions}.
	
	\begin{assumption}
		\label{ass:Normality}
		The following conditions hold for $i=1,2$:
		\begin{enumerate}[label=(\roman*)]
			\item 
			\label{item:EstOP1}
			The parametric estimators are of rate $\sqrt{T} \big( \widehat{\mtheta}_{iT} - \bar\mtheta_i \big) = \mathcal{O}_\P(1)$ and $\sqrt{T} \big( \widehat{r}_T - \bar r \big) = \mathcal{O}_\P(1)$.
			
			\item
			\label{item:CLT}
			\sloppy 
			The following central limit theorem (CLT) holds,
			\begin{align*}
				T^{-1/2} \mOmega_T^{-1/2} \sum_{t=1}^T \big(a_{1t}, c_{1t}, a_{2t}, c_{2t}, e_{t} \big)^\top \toD \mathcal{N}(0, I_5),
			\end{align*}
			where $a_{it} := \myS(X_{it},Y_t) - \E [ \myS(X_{it},Y_t)]$,
			$c_{it} := \myS( \mW_{it}^\top \bar\mtheta_i,Y_t) -  \E [ \myS(\mW_{it}^\top \bar\mtheta_i, Y_t)]$, 
			$e_{t} := \myS(\bar r,Y_t) - \E[ \myS(\bar r,Y_t)]$
			and where
			\begin{align}
				\label{eqn:OmegaT}
				\mOmega_T = \var \left( T^{-1/2} \sum_{t=1}^T \big(a_{1t}, c_{1t}, a_{2t}, c_{2t}, e_{t} \big)^\top \right).
			\end{align}

			\item 
			\label{item:ContExpScore}
			\sloppy
			The expected score $\E \big[ \myS( \mW_{it}^\top \mtheta, Y_t) \big]$ is continuously differentiable in $\mtheta$ for all $\mtheta \in \operatorname{int}(\mTheta)$.
			
			\item 
			\label{item:SE} 
			The empirical process
			\begin{align}
				\label{eqn:EmpProcessSE}
				\nu_{iT} (\mtheta) := T^{-1/2} \sum_{t=1}^T \Big( \myS( \mW_{it}^\top \mtheta, Y_t) - \E \big[ \myS (\mW_{it}^\top \mtheta , Y_t ) \big] \Big)
			\end{align}
			is stochastically equicontinuous, i.e., for all $\varepsilon > 0$ and $\eta > 0$, there exists $\delta > 0$ such that
			\begin{align}
				\label{eqn:StochasticEquicontinuity}
				\underset{T \to \infty}{\lim \sup} \; \P \left( \sup_{ \{\mtheta, \bm\tau \in \mTheta: \, \Vert \mtheta - \bm\tau \Vert \le \delta\} }    \big| \nu_{iT}(\mtheta) - \nu_{iT}(\bm\tau) \big| \ge \eta \right) < \varepsilon.
			\end{align}
		\end{enumerate}
	\end{assumption}
	
	Assumption~\ref{ass:Normality} provides high-level conditions on the scoring function and the parametric estimators. These conditions can accommodate a wide range of settings, including various estimation methods, target functionals, and forms of temporal dependence. They are verified for the mean and quantile cases in Section~\ref{sec:LowLevelConditions} below, and their generality allows for straightforward extensions to other functionals.
	In a nutshell, conditions \ref{item:EstOP1} and \ref{item:CLT} can be verified based on classical dependence and moment conditions on the forecast-realization pairs.
	Conditions \ref{item:ContExpScore} and \ref{item:SE} ensure applicability to non-smooth scoring functions, such as the check loss, by imposing smoothness on the \emph{expected} score and controlling empirical deviations via stochastic equicontinuity \citep{Andrews1994, wellner2013weak}.
	
	\begin{thm}
		\label{thm:joint_nomal_mcb_dsc}
		Let Assumptions~\ref{ass:TrueLinearSpecification} and \ref{ass:Normality} hold and assume that $\MCBp_{i}, \DSCp_{i} > 0$ for $i=1,2$.
		Then, as $T \rightarrow \infty$,
		\begin{align*}	
			\sqrt{T} \, \big( \mathbf{\Xi} \mathbf{\Omega}_T  \mathbf{\Xi}^\top \big)^{-1/2} 
			\begin{pmatrix}
				{\MCBs}_{1T} - \MCBp_{1} \\
				{\DSCs}_{1T} - \DSCp_{1} \\
				{\MCBs}_{2T} - \MCBp_{2} \\
				{\DSCs}_{2T}- \DSCp_{2}
			\end{pmatrix}  
			\toD \mathcal{N}(0,I_4), 
			\quad \text{with} \quad  
			\mathbf{\Xi} = 
			\begin{pmatrix}
				1 & -1  & 0 &  0  & 0 \\
				0 & -1  & 0 &  0  & 1 \\
				0 &  0  & 1 & -1  & 0  \\
				0 &  0  & 0 & -1  & 1  
			\end{pmatrix}.
		\end{align*} 
	\end{thm}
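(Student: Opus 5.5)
The plan is to reduce the four-dimensional statement to the CLT in Assumption~\ref{ass:Normality}\ref{item:CLT}. The key fact is that replacing the estimated parameters $\widehat{\mtheta}_{iT}$ and $\widehat r_T$ by their pseudo-true counterparts $\bar\mtheta_i$ and $\bar r$ changes the scaled averages $\sqrt{T}\,\Ss^{\mathsf{C}}_{iT}$ and $\sqrt{T}\,\Ss^{\mathsf{R}}_T$ only by $o_\P(1)$. Granting this, we write
\begin{align*}
\sqrt{T}\big(\MCBs_{iT}-\MCBp_i\big) &= T^{-1/2}\textstyle\sum_t (a_{it}-c_{it}) + o_\P(1),\\
\sqrt{T}\big(\DSCs_{iT}-\DSCp_i\big) &= T^{-1/2}\textstyle\sum_t (e_{t}-c_{it}) + o_\P(1).
\end{align*}
Stacking these four expressions gives exactly $\mathbf{\Xi}$ applied to $T^{-1/2}\sum_t(a_{1t},c_{1t},a_{2t},c_{2t},e_t)^\top$, plus an $o_\P(1)$ remainder. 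Premultiplying by $(\mathbf{\Xi}\mOmega_T\mathbf{\Xi}^\top)^{-1/2}$ and invoking Assumption~\ref{ass:Normality}\ref{item:CLT} with the continuous mapping theorem then yields $\mathcal{N}(0,I_4)$. For this step we need the eigenvalues of $\mathbf{\Xi}\mOmega_T\mathbf{\Xi}^\top$ to be bounded away from zero, so that the $o_\P(1)$ term stays negligible after normalization. I would state this as implicit in the nondegeneracy of the limit, or use it as the role of the positivity assumption $\MCBp_i,\DSCp_i>0$, which rules out degenerate components such as $a_{it}\equiv c_{it}$.

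The main obstacle is the parameter-replacement step. I treat the recalibration term first; the reference term $\Ss^{\mathsf{R}}_T$ is the special case $\mW=(1,0,\dots)$ with $\mtheta=(r,0,\dots)^\top$, which lies in $\mTheta$ by the assumptions. Write $\bar S(\mtheta):=\E[\myS(\mW_{it}^\top\mtheta,Y_t)]$. Then
\[
\sqrt{T}\big(\Ss^{\mathsf{C}}_{iT}-\Sp^{\mathsf{C}}_i\big) = \nu_{iT}(\bar\mtheta_i) + \big[\nu_{iT}(\widehat{\mtheta}_{iT})-\nu_{iT}(\bar\mtheta_i)\big] + \sqrt{T}\big(\bar S(\widehat{\mtheta}_{iT})-\bar S(\bar\mtheta_i)\big).
\]
The first term is $T^{-1/2}\sum_t c_{it}$. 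The middle term is $o_\P(1)$: $\widehat{\mtheta}_{iT}\toP\bar\mtheta_i$ follows from Assumption~\ref{ass:Normality}\ref{item:EstOP1}, and combining this with stochastic equicontinuity \eqref{eqn:StochasticEquicontinuity} gives the standard argument. Concretely, for given $\varepsilon,\eta$ choose $\delta$, and bound the probability by $\varepsilon$ plus $\P(\|\widehat{\mtheta}_{iT}-\bar\mtheta_i\|>\delta)\to0$.

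For the last term, Assumption~\ref{ass:TrueLinearSpecification} says $\bar\mtheta_i$ is an interior minimizer. Together with Assumption~\ref{ass:Normality}\ref{item:ContExpScore} this gives $\nabla\bar S(\bar\mtheta_i)=0$. The mean value theorem then yields
\[
\sqrt{T}\big(\bar S(\widehat{\mtheta}_{iT})-\bar S(\bar\mtheta_i)\big) = \nabla\bar S(\tilde\mtheta_T)^\top\sqrt{T}(\widehat{\mtheta}_{iT}-\bar\mtheta_i),
\]
with $\tilde\mtheta_T$ between $\widehat{\mtheta}_{iT}$ and $\bar\mtheta_i$ (which is inside $\mTheta$ by convexity, and in the interior with probability tending to one). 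Continuity of the gradient gives $\nabla\bar S(\tilde\mtheta_T)\toP0$, and the $\sqrt{T}$-rate in \ref{item:EstOP1} gives $\mathcal{O}_\P(1)$ for the other factor, so the product is $o_\P(1)$. This is the envelope-theorem argument: because we evaluate at an optimum, first-order estimation noise vanishes. The reference term $\bar r$ is handled in exactly the same way, since $\bar r=\Gamma(Y_t)$ minimizes $r\mapsto\E[\myS(r,Y_t)]$ and lies in the interior.

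Finally, $\Ss_{iT}$ involves no estimated parameters, so $\sqrt{T}(\Ss_{iT}-\Sp_i)=T^{-1/2}\sum_t a_{it}$ exactly. Substituting the three expansions into the definitions in \eqref{eqn:FiniteSampleDecomposition} and \eqref{eqn:PopScoreDecomp} gives the linear representation from the first paragraph, which completes the proof.
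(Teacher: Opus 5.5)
Your proposal is correct and follows the paper's proof essentially step for step. Like the paper, you split the estimated recalibrated and reference scores into the centered sums $c_{it}$, $e_t$ plus the estimation effects $\sqrt{T}B_{iT}$ and $\sqrt{T}D_T$. You remove the empirical-process part by stochastic equicontinuity and the deterministic part by a mean-value expansion around the interior minimizer, where $\nabla_{\mtheta}\E[\myS(\mW_{it}^\top\bar\mtheta_i,Y_t)]=0$. You then stack with $\mathbf{\Xi}$ and apply the CLT.

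Your remark that the normalization needs the eigenvalues of $\mathbf{\Xi}\mOmega_T\mathbf{\Xi}^\top$ bounded away from zero is a point the paper's proof also leaves implicit. Note, however, that $\MCBp_i,\DSCp_i>0$ alone does not deliver it: for example, $X_{1t}=X_{2t}$ makes the matrix singular. The paper instead secures it through the separately imposed uniform positive definiteness of $\mOmega_T$, in Assumption~\ref{ass:itemlowlevePhicltvar} for the mean case and condition~\ref{cond:QuantCondOmegaPD} of Assumption~\ref{ass:QuantilesLowLevel} for the quantile case. This transfers to $\mathbf{\Xi}\mOmega_T\mathbf{\Xi}^\top$ because $\mathbf{\Xi}$ has full row rank.
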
 
	
	The joint asymptotic approximation of Theorem~\ref{thm:joint_nomal_mcb_dsc} is required for the construction of ``comparative'' confidence intervals and hypothesis tests---for instance, tests for equal miscalibration or discriminatory ability across the two forecast sequences.
	Notably, the resulting asymptotic distribution---with covariance matrix $\mOmega_T$ from \eqref{eqn:OmegaT}---is unaffected by the estimation uncertainty of
	$\widehat{\mtheta}_{iT}$ and $\widehat{r}_T$, 
	owing to an orthogonality between these estimators and the sampling error arising from replacing expected scores by their empirical counterparts. This robustness is particularly appealing in the presence of potential misspecification of the linear recalibration regression, as it obviates the need to estimate the more complex covariance structure that typically arises under misspecification.

	Theorem~\ref{thm:joint_nomal_mcb_dsc} however requires strict positivity of the population values $\MCBp_{i}, \DSCp_{i} > 0$, hence excluding the boundary cases $\MCBp_{i} = 0$ and $\DSCp_{i} = 0$, which necessitate a separate treatment. 
	We now derive the asymptotic distribution of $\widehat{\MCB_i}_T$ and $\widehat{\DSC_i}_T$ given their population counterparts are zero, which however requires \emph{smooth} scoring functions.
	For \emph{non-smooth} scoring functions as for quantile forecasts, we provide an alternative treatment based on MZ-type regressions, which we describe after Theorem~\ref{thm:MCBandDSC0}.
	
	\begin{assumption}
		\label{ass:AsymptoticsDSC0}
		Assume that the following conditions hold for $i=1,2$:
		\begin{enumerate}[label=(\roman*)]
			\item  
			\label{item:BahadurMestJoint}
			If $\MCBp_i=0$, the estimator $\widehat{\mtheta}_{iT}$ satisfies the asymptotic expansion
			\begin{align}
				\label{eq:MCB0_mestimators}
				\sqrt{T} \big(\widehat{\mtheta}_{iT} - \bar \mtheta_i \big) 
				= - \left( \frac{1}{T} \sum_{t=1}^T \myS''(\mW_{it}^\top \bar \mtheta_i, Y_t)  \mW_{it} \mW_{it}^\top\right)^{-1}  \left( T^{-1/2} \sum_{t=1}^T \myS'(\mW_{it}^\top \bar \mtheta_i, Y_t) \mW_{it} \right) + o_\P(1),
			\end{align}
			\sloppy
			where $T^{-1/2} \sum_{t=1}^T \myS'(\mW_{it}^\top \bar \mtheta_i, Y_t) \mW_{it}$ satisfies a CLT 
			and $\frac{1}{T} \sum_{t=1}^T \myS''(\mW_{it}^\top \bar \mtheta_i, Y_t)  \mW_{it} \mW_{it}^\top$ is invertible for $T$ large enough and satisfies a law of large numbers (LLN).
			
			Furthermore, if $\DSCp_i=0$ (such that $\bar r = \mW_{it}^\top \bar \mtheta_i$), the estimators $\widehat{r}_{T}$ and $\widehat{\mtheta}_{iT}$ satisfy 
			\begin{align}
				\label{eq:comb_mestimators}
				\sqrt{T}	
				\begin{pmatrix}
					\widehat{r}_{T} - \bar r   \\
					\widehat{\mtheta}_{iT} - \bar\mtheta_{i} 
				\end{pmatrix} 
				= -
				\begin{pmatrix}
					\left( \frac{1}{T} \sum_{t=1}^T \myS''(\bar r, Y_t) \right)^{-1},   & \mathbf{0} \\
					\multicolumn{2}{c}{\left( \frac{1}{T} \sum_{t=1}^T \myS''(\bar r , Y_t)   \mW_{it} \mW_{it}^\top\right)^{-1}}
				\end{pmatrix} 
				T^{-1/2} \sum_{t=1}^T \myS'(\bar r , Y_t) \mW_{it}  + o_\P(1),
			\end{align} 
			where $T^{-1/2} \sum_{t=1}^T \myS'(\bar r , Y_t) \mW_{it}$ satisfies a CLT 
			and the terms $\frac{1}{T} \sum_{t=1}^T \myS''( \bar r, Y_t)$ and $\frac{1}{T} \sum_{t=1}^T \myS''(\bar r , Y_t)  \mW_{it} \mW_{it}^\top$ are invertible for $T$ large enough and satisfy LLNs.
			
			\item
			\label{item:SmoothSDSC}
			The scoring function $\myS$ is three times continuously differentiable in its first argument and for $h_T(\mtheta) := \frac{1}{T} \sum_{t=1}^T \big| \myS'''(\mW_{it}^\top \mtheta, Y_t) \big| \Vert \mW_{it} \Vert^3$, we have that
			\begin{align*}
				T^{-1/2} \sup_{\Vert \mtheta - \bar\mtheta_i \Vert \le \delta}  \big| h_T(\mtheta ) - h_T(\bar{\mtheta}_{i}) \big| = o_\P(1) 
				\qquad \text{and} \qquad
				T^{-1/2} h_T(\bar{\mtheta}_{i}) = o_\P(1).
			\end{align*}
		\end{enumerate}
	\end{assumption}    
	
	As before, the high-level conditions in Assumption~\ref{ass:AsymptoticsDSC0} allow for case-by-case verifications for specific functionals, scoring functions, and estimators as exemplified in Section~\ref{sec:LowLevelConditions}.
	Condition~\ref{item:BahadurMestJoint} provides misspecification-robust asymptotic linear representations of the estimators, slightly strengthening Assumption~\ref{ass:Normality}~\ref{item:EstOP1}.
	While auto-calibration \eqref{eqn:AutoCalibration} implies (linearized) $\MCBp_{i} = 0$, the reverse is not true as there might exist a non-linear recalibration function that improves the score.
	This necessitates a misspecification-robust formulation of the terms in \eqref{eq:MCB0_mestimators}--\eqref{eq:comb_mestimators}.
	Condition~\ref{item:SmoothSDSC} applies to smooth loss functions and imposes a uniform LLN type requirement, although this condition is substantially weakened by the $T^{-1/2}$ scaling factor.

	\begin{thm}
		\label{thm:MCBandDSC0}
		Suppose that Assumptions~\ref{ass:TrueLinearSpecification} and \ref{ass:AsymptoticsDSC0} hold.
		Then, as $T \rightarrow \infty$,
		\begin{itemize}
			\item 
			if $\MCBp_{i} = 0$, then $T \, \widehat{\MCB_i}_T
			\toD \frac{1}{2} \bm N_{i}^\top \bm\Upsilon_{i}^{-1} \bm N_{i}$; and
			
			\item 
			if $\DSCp_{i} = 0$, then $T \, \widehat{\DSC}_{iT}
			\toD \frac{1}{2} \bm N_{i}^\top \bm H_{i}^{-1} \bm N_{i}$,
		\end{itemize}
		where 
		$\bm\Upsilon_{i} := \mathbb{E} \left[ \myS''(\mW_{it}^\top \bar{\mtheta}_i, Y_t)  \mW_{it} \mW_{it}^\top \right]$,
		$\bm H^{-1}_{i} := \left(  \mathbb{E} \big[ \myS''(\bar{r}, Y_t)\mW_{it}\mW_{it}^\top \big]^{-1}  - 
		\begin{pmatrix}
			\mathbb{E} [\myS''(\bar{r}, Y_t)] ^{-1} & 0 \\ 0 & 0
		\end{pmatrix} \right)$,
		and 
		$\bm N_{i} \sim \mathcal{N}(0, \bm\Pi_{i})$
		with
		$\bm\Pi_{i} := \lim_{T \to \infty} \var \left(T^{-1/2} \sum_{t=1}^T \myS'(\mW_{it}^\top \bar{\mtheta}_i, Y_t) \mW_{it} \right)$.
	\end{thm}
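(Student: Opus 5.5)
The argument is a second-order Taylor expansion of the empirical score around the minimizer $\widehat{\mtheta}_{iT}$, where the first-order term vanishes because $\widehat{\mtheta}_{iT}$ is an interior minimizer. Combined with the Bahadur-type expansions of Assumption~\ref{ass:AsymptoticsDSC0}~\ref{item:BahadurMestJoint}, this reduces both statistics to quadratic forms in asymptotically normal vectors.

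\emph{Case $\MCBp_i=0$.} By Assumption~\ref{ass:TrueLinearSpecification}, $(0,1,0,\dots)^\top$ is interior. Uniqueness of $\bar\mtheta_i$ then forces $X_{it}=\mW_{it}^\top\bar\mtheta_i$, since $\Sp_i=\Sp_i^{\mathsf C}$ means $(0,1,0,\dots)^\top$ also attains the minimum. Hence
\[
\MCBs_{iT}=\tfrac1T\sum_t\myS(\mW_{it}^\top\bar\mtheta_i,Y_t)-\tfrac1T\sum_t\myS(\mW_{it}^\top\widehat\mtheta_{iT},Y_t).
\]
I expand the empirical objective $L_T(\mtheta)=\frac1T\sum_t\myS(\mW_{it}^\top\mtheta,Y_t)$ around $\widehat\mtheta_{iT}$. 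Since $\widehat\mtheta_{iT}$ is consistent and $\bar\mtheta_i$ is interior, $\nabla L_T(\widehat\mtheta_{iT})=0$ with probability tending to one, and therefore
\[
T\,\MCBs_{iT}=\tfrac12\sqrt T(\widehat\mtheta_{iT}-\bar\mtheta_i)^\top\Big(\tfrac1T\sum_t\myS''(\mW_{it}^\top\bar\mtheta_i,Y_t)\mW_{it}\mW_{it}^\top\Big)\sqrt T(\widehat\mtheta_{iT}-\bar\mtheta_i)+R_T .
\]
The remainder comes from the third-order mean-value term and from re-centering the Hessian at $\bar\mtheta_i$ instead of $\widehat\mtheta_{iT}$. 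It is bounded by $\|\sqrt T(\widehat\mtheta_{iT}-\bar\mtheta_i)\|^3\,T^{-1/2}\sup_{\|\mtheta-\bar\mtheta_i\|\le\delta}h_T(\mtheta)$. This is $o_\P(1)$ by Assumption~\ref{ass:AsymptoticsDSC0}~\ref{item:SmoothSDSC}: split the supremum as $|h_T(\mtheta)-h_T(\bar\mtheta_i)|+h_T(\bar\mtheta_i)$, and use that the estimator is $\sqrt T$-consistent, which follows from \eqref{eq:MCB0_mestimators}.

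Inserting \eqref{eq:MCB0_mestimators} and applying the LLN for the Hessian (limit $\bm\Upsilon_i$) and the CLT for the score, which gives $\bm N_i\sim\mathcal N(0,\bm\Pi_i)$, the quadratic form becomes $\tfrac12\bm N_i^\top\bm\Upsilon_i^{-1}\bm\Upsilon_i\bm\Upsilon_i^{-1}\bm N_i=\tfrac12\bm N_i^\top\bm\Upsilon_i^{-1}\bm N_i$. The continuous mapping theorem and Slutsky's lemma then finish this case.

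\emph{Case $\DSCp_i=0$.} Here $\bar\mtheta_i=(\bar r,0,\dots)^\top$, so $\mW_{it}^\top\bar\mtheta_i=\bar r$. I write
\[
\DSCs_{iT}=\big[\tfrac1T\textstyle\sum_t\myS(\widehat r_T,Y_t)-\tfrac1T\sum_t\myS(\bar r,Y_t)\big]+\big[\tfrac1T\sum_t\myS(\bar r,Y_t)-L_T(\widehat\mtheta_{iT})\big].
\]
The second bracket is handled exactly as in the first case and equals $\tfrac12 T^{-1}\bm N_i^\top\bm\Upsilon_i^{-1}\bm N_i+o_\P(T^{-1})$, now with $\bm\Upsilon_i=\E[\myS''(\bar r,Y_t)\mW_{it}\mW_{it}^\top]$. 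The first bracket is a one-dimensional analogue, expanded around the minimizer $\widehat r_T$, where the first-order condition holds because $\widehat r_T$ is the empirical functional, i.e.\ the minimizer of the empirical score over constants (or over the first coordinate of $\mTheta$). It equals $-\tfrac12 T^{-1}\big(e_1^\top\bm N_i\big)^2/\E[\myS''(\bar r,Y_t)]+o_\P(T^{-1})$, using the first row of \eqref{eq:comb_mestimators}.

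The key observation is that the score of $\widehat r_T$, namely $T^{-1/2}\sum_t\myS'(\bar r,Y_t)$, is exactly the first component of the same vector $T^{-1/2}\sum_t\myS'(\bar r,Y_t)\mW_{it}$, because $\mW_{it}$ has leading entry one. So both brackets are driven by the same Gaussian limit $\bm N_i$, jointly. Combining them gives
\[
T\,\DSCs_{iT}\toD\tfrac12\bm N_i^\top\big(\bm\Upsilon_i^{-1}-\operatorname{diag}(\E[\myS''(\bar r,Y_t)]^{-1},0,\dots,0)\big)\bm N_i=\tfrac12\bm N_i^\top\bm H_i^{-1}\bm N_i .
\]

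The main obstacle is controlling the Taylor remainders uniformly, since the mean-value points are random. I would handle this by restricting to the event $\|\widehat\mtheta_{iT}-\bar\mtheta_i\|\le\delta$, whose probability tends to one, and invoking the uniform condition in Assumption~\ref{ass:AsymptoticsDSC0}~\ref{item:SmoothSDSC}. I would also check that the first-order conditions hold exactly at interior minimizers, including the one-dimensional condition for $\widehat r_T$.
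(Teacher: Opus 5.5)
Your proof is correct, but it is organized differently from the paper's.

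The paper Taylor-expands $\sum_t \myS(\mW_{it}^\top\widehat{\mtheta}_{iT},Y_t)$ around the pseudo-true value $\bar\mtheta_i$, and $\sum_t \myS(\widehat r_T,Y_t)$ around $\bar r$, so the linear term does not vanish. Write $Q$ for the quadratic form in the score. Substituting the Bahadur representations into the linear term gives $+Q$, and into the quadratic term gives $-\tfrac12 Q$. In the case $\DSCp_i=0$, the paper stacks the two expansions via \eqref{eq:comb_mestimators} and needs two block-matrix identities to assemble $\bm H_i^{-1}$.

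You instead expand around the minimizers $\widehat{\mtheta}_{iT}$ and $\widehat r_T$, and the empirical first-order conditions remove the linear terms. Each piece is then directly one half of a quadratic form in the Bahadur leading term. $\bm H_i^{-1}$ emerges as the difference of two such forms sharing the same score vector, with no matrix identities needed. This is the classical Wilks-type argument and is arguably more transparent.

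Its price is twofold. First, you need the first-order conditions to hold exactly with probability tending to one. They do here: $\myS$ is $C^3$ in its first argument, the Bahadur expansions give $\sqrt T$-consistency, $\bar\mtheta_i\in\operatorname{int}(\mTheta)$, and $\bar r\in\operatorname{int}(\A)$ because $(\bar r,0,\dots)^\top\in\operatorname{int}(\mTheta)$. Second, re-centering the Hessian at $\bar\mtheta_i$ adds a remainder, which the condition on $h_T$ controls. The paper's route uses only the asymptotic linear representations and never needs the first-order conditions themselves.

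One detail you should make explicit. Assumption~\ref{ass:AsymptoticsDSC0}~\ref{item:SmoothSDSC} is stated for $h_T$, which involves $\mW_{it}$, so it does not directly cover the remainder of the one-dimensional expansion in $\widehat r_T$. For that you need $\Vert\mW_{it}\Vert\ge 1$ (leading entry one) and $\bar\mtheta_i=(\bar r,0,\dots)^\top$. Together these give $\frac1T\sum_t|\myS'''(r,Y_t)|\le h_T((r,0,\dots)^\top)$, as the paper remarks.
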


	Theorem~\ref{thm:MCBandDSC0} shows for \emph{smooth} scoring functions that when the population quantities  $\MCBp_{i}$ or $\DSCp_{i}$ attain their boundary value of zero, the non-negative test statistics $T\,\widehat{\MCB}_{iT}$ and $T\,\widehat{\DSC}_{iT}$ (see  Theorem~\ref{thm:PositivComponenets}) converge to generalized $\chi^2$-distributions.
	
	We treat the boundary cases differently for \emph{non-smooth} scoring functions, such as those used for quantile forecasts. 
	As the proof of Theorem~\ref{thm:MCBandDSC0} illustrates that $\MCBp_i=0$ implies $X_{it} = \mW_{it}^\top \bar \mtheta_i$, i.e., $\mtheta_i = (0,1,0,\dots)^\top$, we test the case $\MCBp_i=0$ through the quantile-specific MZ-type test of \citet{Gaglianone2011} using misspecification-robust covariance estimation. 
	We adapt this test to the case $\DSCp_i=0$ (which implies $\bar r = \mW_{it}^\top \bar \mtheta_i$, i.e., $\mtheta_i = (\bar r,0,0,\dots)^\top$) by testing whether all slope coefficients in the MZ regression are equal to zero. We pursue this alternative approach as extending Theorem~\ref{thm:MCBandDSC0} to \emph{non-smooth} scoring functions via the proof strategy of Theorem~\ref{thm:joint_nomal_mcb_dsc} would necessitate establishing stochastic equicontinuity of $\sqrt{T} \, \nu_{iT} (\mtheta) = \sum_{t=1}^T \myS( \mW_{it}^\top \mtheta, Y_t) - \E \big[ \myS (\mW_{it}^\top \mtheta , Y_t ) \big]$. 
	To the best of our knowledge, no established method exists to derive stochastic equicontinuity  of the $\sqrt{T}$-scaled version of $\nu_{iT}(\mtheta)$ in this setting.

	Recall that the asymptotic variance in the Gaussian case of  Theorem~\ref{thm:joint_nomal_mcb_dsc} is unaffected by the estimation error in  $\widehat{r}_T$ and $\widehat{\mtheta}_{iT}$; it depends solely on the sampling uncertainty from approximating expected scores by empirical averages, as  captured by $\mOmega_T$ in~\eqref{eqn:OmegaT}. 
	In contrast, the limiting distributions in Theorem~\ref{thm:MCBandDSC0} depend  exclusively on the estimation effects, as summarized by  $\bm\Upsilon_{i}$, $\bm H_{i}$, and $\bm\Pi_{i}$, which correspond to the  population counterparts of the asymptotic expansion in~\eqref{eq:comb_mestimators}. 
	Technically, this distinction arises because the asymptotic distribution in  Theorem~\ref{thm:joint_nomal_mcb_dsc} is driven by \emph{first-order} Taylor  expansion terms, whereas the distributions in Theorem~\ref{thm:MCBandDSC0} are  governed by \emph{second-order} terms; see the proofs for details.
	
	Feasible inference requires consistent estimators of the misspecification-robust population matrices  $\mOmega_T$, $\bm{\Pi}_{i}$, $\bm\Upsilon_{i}^{-1}$, and $\bm{H}_{i}^{-1}$ so that the asymptotic results in  Theorems~\ref{thm:joint_nomal_mcb_dsc} and \ref{thm:MCBandDSC0} remain valid when replacing population quantities by their estimates.
	While $\bm\Upsilon_{i}^{-1}$ and $\bm H^{-1}_{i}$ admit consistent estimation via sample averages, we use HAC estimators for the long-run covariance matrices $\mOmega_T$ and $\bm{\Pi}_{i}$ to accommodate serial dependence \citep{NW_1987,A_1991}.  
	Consistency follows by standard arguments, see e.g., \citet[Chapter~6.4]{W_2001} for smooth and \citet{GY_2024} for non-smooth scoring functions.

	\subsection{Testable hypotheses}
	\label{sec:Testing}
	
	For ``one-sample'' hypotheses of the form $\MCBp_i = c$ or $\DSCp_i = c$ for  some $c \ge 0$, the appropriate asymptotic distribution---Gaussian or  generalized $\chi^2$---can be selected directly based on the value of $c$. 
	Under ``two-sample'' hypotheses such as $\MCBp_1 = \MCBp_2$ or 
	$\DSCp_1 = \DSCp_2$, however, the \emph{population} values 
	$\MCBp_{i}$ and $\DSCp_{i}$ may be either zero or strictly positive, so it is \emph{a priori} unclear whether the Gaussian or the generalized $\chi^2$ approximation applies. 
	A valid testing procedure must therefore remain robust to both  possibilities. We achieve this by employing a hybrid of intersection--union and union--intersection testing principles \citep{Berger1997likelihood}.
	
	For example, consider the hypothesis $\H_\MCB :  \MCBp_{1} = \MCBp_{2}$, which we can rewrite as
	\begin{align}
		\H_\MCB  
		& = \left\{ \MCBp_{1} = \MCBp_{2} \right\}   \nonumber  \\
		&= \left\{ \MCBp_{1} = \MCBp_{2} > 0 \right\} \cup \left\{ \MCBp_{1} = \MCBp_{2} = 0 \right\} \nonumber  \\
		&= \left\{ \MCBp_{1} = \MCBp_{2} > 0 \right\} \cup \left( \left\{ \MCBp_{1} = 0 \right\} \cap \left\{ \MCBp_{2} = 0 \right\} \right)   \nonumber  \\
		%&=:  \H^+ \cup \big( \H^0_{1} \cap  \H^0_{2} \big), \\
		&=:  \H^+_{\MCB} \cup \big( \H^0_{{\MCB}_1} \cap  \H^0_{{\MCB}_2} \big).
		\label{eqn:HypothesisCombination}
	\end{align}
	For the hypothesis $\H^+_{\MCB}$, we employ the \emph{selection vector} $\momega = (1,0,-1,0)^\top$ such that Theorem~\ref{thm:joint_nomal_mcb_dsc} yields
	\begin{align}
		\label{eqn:mcbnormal}
		\left( \momega^\top  \mathbf{\Xi} \mathbf{\Omega}_T  \mathbf{\Xi}^\top \momega \right)^{-1/2}
		\sqrt{T}  \left( \big( \MCBs_{1T} - \MCBs_{2T} \big) - \big( \MCBp_{1} - \MCBp_{2} \big) \right)
		\toD \mathcal{N}(0, 1),
	\end{align}
	which immediately gives an asymptotically valid $p$-value called $p^{+}_{\MCB}$.
	
	We propose tests for the hypothesis $\H^0_{{\MCB}_i}$ for both smooth and non-smooth scoring functions.
	In the smooth case, we use the test statistic $T \, \MCBs_{iT}$ and the asymptotic approximation from Theorem~\ref{thm:MCBandDSC0}.
	For a realized value of $T \, \MCBs_{iT}$, we compute the associated $p$-values $p^0_{{\MCB_i}} = \P \bigl( 0.5 \bm N_{i}^\top \bm\Upsilon_{i}^{-1} \bm N_{i} > T \, \MCBs_{iT} \bigr)$ by using the method of \citet{I_1961}, which uses an inversion formula to express the probability as a definite integral of the characteristic function that is solved via numerical integration by using the \texttt{R} package \texttt{CompQuadForm} \citep{CompQuadFormPackage}.
	For non-smooth scoring functions as the GPL score from \eqref{eqn:GPLLoss}, we test the hypothesis $\H^0_{{\MCB}_i}$ by using the MZ-type ``VQR'' test of \citet{Gaglianone2011} and denote its $p$-value by $p^0_{{\MCB_i}}$.
	
	Using \eqref{eqn:HypothesisCombination}, an asymptotically valid $p$-value for the combined null hypothesis $\H_{\MCB}$ is
	\begin{align}
		\label{eq:pvalcomb}
		&p_{\MCB} := \min \left(1, \max \left( p^{+}_{\MCB} , \, 2\min\left(p^0_{{\MCB_1}}, p^0_{{\MCB_2}}\right) \right) \right). 
	\end{align}
	
	Similarly, we obtain an asymptotically valid $p$-value, $p_{\DSC}$, for testing equal discrimination,  
	\begin{align}
		\label{eqn:DSCHypothesisCombination}
		\H_\DSC 
		= \left\{ \DSCp_{1} = \DSCp_{2} \right\}  
		= \left\{ \DSCp_{1} = \DSCp_{2} > 0 \right\} \cup \left( \left\{ \DSCp_{1} = 0 \right\} \cap \left\{ \DSCp_{2} = 0 \right\} \right),
		% = \H^+_{\DSC} \cup \big( \H^0_{{\DSC}_1} \cap  \H^0_{{\DSC}_2} \big). 
	\end{align}
	%$\H_\DSC = \left\{ \DSCp_{1} = \DSCp_{2} \right\}$, 
	by adapting  \eqref{eqn:HypothesisCombination} and \eqref{eq:pvalcomb}. 
	The $p$-value from the Gaussian approximation follows from \eqref{eqn:mcbnormal} by setting  $\momega = (0,1,0,-1)^\top$, while for smooth scoring functions, the corresponding generalized $\chi^2$-probability is computed analogously using the method of \citet{I_1961}.
	For non-smooth scoring functions, we test $\DSCp_i = 0$ through the MZ-type regression of the VQR test, but by simply testing the hypothesis of zero slopes. 
	
	A classical DM test for equal expected scores, $\H_\myS = \left\{ \Sp_{1} = \Sp_{2} \right\}$,  arises as a special case of Theorem \ref{thm:joint_nomal_mcb_dsc} for $\momega = (1,-1,-1,1)^\top$.
	Confidence intervals for the individual decomposition terms as well as their differences can be obtained by inverting the above tests.

	\subsection{Low-level inference conditions for mean and quantile forecasts}
	\label{sec:LowLevelConditions}
	
	We begin with mean forecasts, where $\Gamma$ corresponds to the expectation. 
	In this case, all strictly consistent scoring functions belong to the Bregman class given in \eqref{eqn:BregmanLoss}.
	For this class, we verify the high-level Assumptions~\ref{ass:Normality} and \ref{ass:AsymptoticsDSC0}---and hence the validity of Theorems~\ref{thm:joint_nomal_mcb_dsc} and \ref{thm:MCBandDSC0}---in a general time-series setting with potential model misspecification.
	
	\begin{assumption}
		\label{ass:Philowlevel}
		The following holds for $i=1,2$:
		\begin{enumerate}[label=(\roman*)]
			\item 
			\label{ass:itemlowlevePhicont}
			The function $\phi(z)$ in \eqref{eqn:BregmanLoss} is five times continuously differentiable  with $\phi''(z)>0$ for all $z \in \A=\O$.
			
			\item 
			\label{ass:itemlowlevePhimixing}
			$(Y_t, \mW^\top_{1t}, \mW^\top_{2t})_{t\in\N}$ is strong mixing of size $-(s-\tilde{k})/(\tilde{k}s)$ for some $s > \tilde k > k \ge 2$, and where $k$ is the dimensionality of $\mTheta$ from Assumption~\ref{ass:TrueLinearSpecification}.
			%$-\size/(\size-2)$, $\size>2$; see \citet[Definition 3.45]{W_2001}.
			
			\item  
			\label{ass:itemlowlevePhihessian}
			The matrix $\E \big[ \mW_{it}\mW_{it}^\top \big]$ is positive definite.

			\item 
			\label{ass:itemlowlevePhiCLTcramer}
			The Hessian matrix $\E \left[ \phi''(\mW_{it}^\top\bar\mtheta_i)\mW_{it}\mW_{it}^\top + \phi'''(\mW_{it}^\top\bar\mtheta_i)\mW_{it}\mW_{it}^\top(\mW_{it}^\top\bar\mtheta_i - Y_t)  \right]$
			is invertible and 
			the matrix $\mathbf{V}_T := \var \left( T^{-1/2} \sum_{t=1}^{T} \phi''(\mW_{it}^\top\bar\mtheta_i) \mW_{it}(\mW_{it}^\top\bar\mtheta_i - Y_t) \right)$ is uniformly positive definite for $T$ large enough.
			
			\item 
			\label{ass:itemlowlevePhiadditionalmoms} 
			The following moments exist and are finite for all $\mtheta \in \mTheta$, $t \in \N$, and $\size>2$ from condition~\ref{ass:itemlowlevePhimixing}:
			$\E \left[ \left\vert \phi(Y_{t}) \right\vert^{2\size} \right]$;\,
			$\E\left[ \left\vert \phi(\mW_{it}^\top\mtheta) \right\vert^{2\size} \right]$;\,
			$\mathbb{E} \Big[\left\vert\phi'\left(\mW_{it}^\top \mtheta\right)  \left(Y_t- \mW_{it}^\top \mtheta \right)\right\vert^s\Big]$; \\
			$\mathbb{E}\Big[ \sup\limits_{\mtheta\in\mTheta} \left\| \phi''\left(\mW_{it}^\top\mtheta\right)\left( Y_t - \mW_{it}^\top \mtheta \right) \mW_{it} \right\|^\size \Big]$; 
			$\mathbb{E}\Big[ \sup\limits_{\mtheta\in\mTheta} \left\|  \phi'''(\mW_{it}^\top\mtheta)(\mW_{it}^\top\mtheta - Y_t)   \mW_{it}\mW_{it}^\top \right\| \Big]$; \\
			$\E\Big[ \sup\limits_{\mtheta\in\mTheta}  \left| \phi'''(\mW_{it}^\top\mtheta) \right|  \|\mW_{it}\|^3 \Big]$;
			$\E\Big[ \sup\limits_{\mtheta\in\mTheta}  \left| \phi''''(\mW_{it}^\top\mtheta) \right| \left| \mW_{it}^\top\mtheta - Y_t \right|  \|\mW_{it}\|^3 \Big]$; \\
			$\E\Big[ \sup\limits_{\mtheta\in\mTheta}  \left| \phi''''(\mW_{it}^\top\mtheta) \right|  \|\mW_{it}\|^4 \Big]$; 
			and 
			$\E\Big[\sup\limits_{\mtheta \in \mTheta} \left\vert  \phi'''''\left(\mW_{it}^\top\mtheta\right)  \left(\mW_{it}^\top\mtheta - Y_t\right) \|\mW_{it}\|^4   \right\vert \Big]$.
		\end{enumerate}
	\end{assumption}

	Assumption \ref{ass:Philowlevel} resembles standard time series conditions for mean regression.  
	Item~\ref{ass:itemlowlevePhicont} is a smoothness and convexity condition on the scoring function.
	For the SE scoring function, we often use $\A=\O =\R$, whereas the QLIKE score requires positive values such as $\A=\O = (0, \infty)$.
	Item~\ref{ass:itemlowlevePhimixing} is a classical mixing condition that ensures that suitable LLNs, CLTs and stochastic equicontinuity (using \citet{hansen1996stochastic}) apply.
	Items~\ref{ass:itemlowlevePhihessian} and \ref{ass:itemlowlevePhiCLTcramer} are standard in M-estimation \citep[Chapter 3.1]{NM_1994}. 
	The invertibility of the Hessian is required for the asymptotic theory under misspecification and is naturally guaranteed for the squared error loss or under correct model specification. 
	In either case, the latter term involving $\phi'''$ vanishes.
	Item~\ref{ass:itemlowlevePhiadditionalmoms} collects moment conditions, which simplify considerably for the common choices $\phi_{\textsf{SE}}(z) = z^2$ and $\phi_{\textsf{QL}}(z) = -\log(z)$
	of the SE and QLIKE scores in \eqref{eqn:SEscore} and \eqref{eqn:QLIKEscore}.
	
	\begin{assumption}
		\label{ass:itemlowlevePhicltvar}
		The matrix $\mOmega_T$ from \eqref{eqn:OmegaT} is uniformly positive definite for $T$ large enough. 
	\end{assumption}

	Positive definiteness of $\mOmega_T$ is stated in the separate Assumption~\ref{ass:itemlowlevePhicltvar} as it is required for the asymptotic normality result in Theorem~\ref{thm:joint_nomal_mcb_dsc}, but is violated in the cases $\MCBp_i = 0$ and $\DSCp_i =0$, and is as such not required for the verification of the conditions of Theorem~\ref{thm:MCBandDSC0}.
	The positive definiteness of $\mOmega_T$ essentially imposes non-collinearity conditions for the (sum over time of the) original, recalibrated, and reference scores.
	Violations arise, for instance, if the recalibrated score is a constant multiple of the original score for all $t \in \mathbb{N}$, such as if $\MCBp_i=0$.

	\begin{prop}
		\label{prop:verification_phi} 
		Consider a Bregman score from \eqref{eqn:BregmanLoss} for the decomposition, which is also used in \eqref{eqn:Mestimator}. Then:
		\begin{itemize}
			\item 
			Assumptions~\ref{ass:TrueLinearSpecification}, \ref{ass:Philowlevel}, and \ref{ass:itemlowlevePhicltvar} imply Assumption \ref{ass:Normality} such that Theorem \ref{thm:joint_nomal_mcb_dsc} holds; and
			
			\item  
			Under $\MCBp_{i} = 0$ or $\DSCp_{i} = 0$, 
			Assumptions~\ref{ass:TrueLinearSpecification},  \ref{ass:Philowlevel} and  imply Assumption \ref{ass:AsymptoticsDSC0} such that Theorem~\ref{thm:MCBandDSC0} holds.
		\end{itemize}
	\end{prop}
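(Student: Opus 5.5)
The proof verifies the high-level conditions one at a time, using standard tools for strong mixing sequences. The Bregman structure makes the scoring function smooth in its first argument, with
\[
\myS'(x,y) = -\phi''(x)(y-x), \qquad \myS''(x,y) = \phi''(x) - \phi'''(x)(y-x),
\]
and similarly for higher derivatives. Hence every required object is a smooth function of $\mtheta$ with envelopes controlled by the moments in Assumption~\ref{ass:Philowlevel}~\ref{ass:itemlowlevePhiadditionalmoms}.

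\textbf{First bullet.} I would verify Assumption~\ref{ass:Normality} item by item.

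For \ref{item:EstOP1}, I would first establish consistency of $\widehat{\mtheta}_{iT}$. This follows from a uniform LLN for $\frac1T\sum_t \myS(\mW_{it}^\top\mtheta,Y_t)$ over the compact set $\mTheta$. The uniform LLN uses the mixing condition \ref{ass:itemlowlevePhimixing}, the $2s$-moments of $\phi(Y_t)$ and $\phi(\mW_{it}^\top\mtheta)$, and a Lipschitz bound via the supremum envelope of $\phi''(\cdot)(Y_t-\cdot)\mW_{it}$; see \citet{NM_1994} or \citet[Ch.~3]{W_2001}. Uniqueness and interiority of $\bar\mtheta_i$ come from Assumption~\ref{ass:TrueLinearSpecification}.

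For the rate, I would use a mean-value expansion of the first-order condition $\frac1T\sum_t \myS'(\mW_{it}^\top\widehat\mtheta_{iT},Y_t)\mW_{it}=0$ around $\bar\mtheta_i$. The score term satisfies a CLT for mixing sequences (White's CLT), with nondegenerate variance $\mathbf V_T$ by \ref{ass:itemlowlevePhiCLTcramer}. The Hessian converges uniformly by a ULLN using the $\phi'''$ envelope, and its limit is invertible by \ref{ass:itemlowlevePhiCLTcramer}. For $\widehat r_T$, which is the sample mean, the rate is a direct mixing CLT.

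For \ref{item:CLT}, I would apply the Cram\'er--Wold device together with White's mixing CLT to the vector $(a_{1t},c_{1t},a_{2t},c_{2t},e_t)$. This needs the $s$-th moments of the scores, which follow from the $2s$-moments of $\phi(\cdot)$ and of $\phi'(\cdot)(Y_t-\cdot)$, and the uniform positive definiteness of $\mOmega_T$ from Assumption~\ref{ass:itemlowlevePhicltvar}.

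For \ref{item:ContExpScore}, I would use dominated convergence with the envelope of $\myS'\mW_{it}$.

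For \ref{item:SE}, I would bound the increments directly:
\[
|\nu_{iT}(\mtheta)-\nu_{iT}(\bm\tau)| \le \|\mtheta-\bm\tau\|\, T^{-1/2}\sum_t \bigl(B_t-\E B_t\bigr) + \text{(bias term)}, \qquad B_t=\sup_{\mtheta}\|\myS'(\mW_{it}^\top\mtheta,Y_t)\mW_{it}\|.
\]
This is a Lipschitz-in-parameter class, so the Lipschitz/bracketing result of \citet{hansen1996stochastic} applies, given the mixing size $-(s-\tilde k)/(\tilde k s)$ and $s$-moments of $B_t$. This is exactly why the mixing size is indexed by $\tilde k>k$, and I expect it to be the main technical step.

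\textbf{Second bullet.} I would verify Assumption~\ref{ass:AsymptoticsDSC0}.

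For \ref{item:BahadurMestJoint}, I would use a second-order mean-value expansion of the first-order conditions. The remainder is controlled by the third-derivative envelope of $\phi''''$ and $\phi'''$ terms, and it shows the sample Hessian $\frac1T\sum_t\myS''\mW\mW^\top$ may replace the mean-value Hessian up to $o_\P(1)$. The LLN for the Hessians is the same ULLN as above. The CLT for $T^{-1/2}\sum_t\myS'(\cdot)\mW_{it}$ again follows from the mixing CLT. Here I will avoid requiring nondegeneracy of its limit, since positive definiteness of $\mOmega_T$ is not imposed. Invertibility of the Hessian follows from \ref{ass:itemlowlevePhiCLTcramer}, or from $\phi''>0$ and \ref{ass:itemlowlevePhihessian}.

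The joint expansion \eqref{eq:comb_mestimators} under $\DSCp_i=0$ stacks the two first-order conditions. For $\widehat r_T$, the relevant score $\myS'(\bar r,Y_t)$ is the first component of $\myS'(\bar r,Y_t)\mW_{it}$, since $\mW_{it}$ contains the constant; at $\bar r=\mW_{it}^\top\bar\mtheta_i$ both expansions share that score vector, which gives the stated block form.

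For \ref{item:SmoothSDSC}, $\myS'''$ involves $\phi'''$ and $\phi''''(\cdot)(y-\cdot)$, so it is three times continuously differentiable by \ref{ass:itemlowlevePhicont}. The bound $T^{-1/2}h_T(\bar\mtheta_i)=o_\P(1)$ follows from an LLN under the finite $\E\sup|\phi'''|\|\mW\|^3$ and $\phi''''$ moments. The uniform modulus condition follows from a mean-value bound in $\mtheta$ using the $\phi''''\|\mW\|^4$ and $\phi'''''$ envelopes, so $\sup|h_T(\mtheta)-h_T(\bar\mtheta_i)|=\mathcal O_\P(1)$, which is then killed by the $T^{-1/2}$ factor.
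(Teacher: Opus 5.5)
Your proposal is correct and follows essentially the same route as the paper: consistency and asymptotic normality of the M-estimators in the style of \citet{NM_1994} using White's mixing LLN/CLT and the Cram\'er--Wold device, dominated convergence for differentiability of the expected score, \citet{hansen1996stochastic} for stochastic equicontinuity of the Lipschitz-in-parameter class, a mean-value expansion of the first-order conditions stacked under $\DSCp_i=0$, and a Lipschitz-plus-LLN bound for $h_T$. Two small caveats, neither load-bearing. First, your displayed increment bound for $\nu_{iT}$ is not useful on its own, since its ``bias term'' is of order $\sqrt{T}\,\Vert\mtheta-\bm\tau\Vert$; the argument therefore genuinely rests on Hansen's theorem. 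Second, the alternative justification of Hessian invertibility via $\phi''>0$ and Assumption~\ref{ass:Philowlevel}\ref{ass:itemlowlevePhihessian} does not suffice under misspecification, because the term $\E[\phi'''(\mW_{it}^\top\bar\mtheta_i)(\mW_{it}^\top\bar\mtheta_i-Y_t)\mW_{it}\mW_{it}^\top]$ need not vanish (it does vanish for squared error or under correct specification). You should therefore rely on \ref{ass:itemlowlevePhiCLTcramer}, as you first stated.
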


	We continue with the case where $\Gamma$ is the $\alpha$-quantile for some $\alpha \in (0,1)$, where all strictly consistent loss functions are given by the GPL class in \eqref{eqn:GPLLoss}.
	For a linear GPL score decomposition, we employ (generalized) quantile regression by combining the estimator \eqref{eqn:Mestimator} with the loss in \eqref{eqn:GPLLoss} and impose the following conditions.
	
	\begin{assumption}
		\label{ass:QuantilesLowLevel}
		The following holds for $i=1,2$:
		\begin{enumerate}[label=(\roman*)]
			\item The function $g$ from \eqref{eqn:GPLLoss} is twice continuously differentiable with $0 < g'(z) \le C < \infty$ for all $z \in \A=\O$.
			
			\item 
			\label{cond:QuantCondMixing}
			The sequence $(Y_t, \mW^\top_{1t}, \mW^\top_{2t})_{t \in \N}$ is strong mixing of size $-(s-\tilde{k})/(\tilde{k}s)$ for some  $s > \tilde k > k \ge 2$, and where $k$ is the dimensionality of $\mTheta$ from Assumption~\ref{ass:TrueLinearSpecification}.

			\item 
			\label{cond:QuantCondContDistribution}
			For all $t \in \mathbb{N}$, 
			$F_{it}$, the conditional distribution of $Y_t$ given $\mathcal{W}_{it}$, belongs to a class of distributions that possess a positive and continuous Lebesgue density $f_{it}$ with $0 < f_{it}(x) \le K < \infty$, uniformly on the support of $F_{it}$.
			
			\item 
			The matrix $\E \left[ \mW_{it} \mW_{it}^\top \right]$ is positive definite.
			
			\item 
			The matrix  $\mathbf{V}_T := \var \left( T^{-1/2} \sum_{t=1}^{T} g'(\mW_{it}^\top\bar\mtheta_i) 
			\big( \mathds{1}\{Y_t \le \mW_{it}^\top \bar\mtheta_i \} - \alpha \big) \mW_{it} \right)$  is uniformly positive definite for $T$ large enough. 
			
			\item 
			$\E \Big[ (\mW_{it} \mW_{it}^\top) \left\{ g''(\mW_{it}^\top \bar\mtheta_i) \big( F_{it}(\mW_{it}^\top \bar\mtheta_i) - \alpha \big) + g'(\mW_{it}^\top \bar\mtheta_i) f_{it}(\mW_{it}^\top \bar\mtheta_i)\right\} \Big]$ is invertible.
			
			\item  
			$\E|Y_t| < \infty$, $\E|g(Y_t)| < \infty$,
			$\E \big\Vert \mW_{it} \big\Vert^s < \infty$,
			$\E \big| g(\mW_{it}^\top\mtheta) - g(Y_t) \big|^s < \infty$, and
			there exist integrable functions $G$ and $G'$, $G_2'$, $G''$ such that
			$\Vert \mW_{it} g(\mW_{it}^\top \mtheta) \Vert \le G(\mW_{it})$,
			$\Vert \mW_{it} g'(\mW_{it}^\top \mtheta) \Vert \le G'(\mW_{it})$,
			$\Vert \mW_{it} \mW_{it}^\top g'(\mW_{it}^\top \mtheta) \Vert \le G_2'(\mW_{it})$, and
			$\Vert \mW_{it} \mW_{it}^\top g''(\mW_{it}^\top \mtheta) \Vert \le G''(\mW_{it})$, where $s$ is from condition \ref{cond:QuantCondMixing}.
			
			\item 
			\label{cond:QuantCondOmegaPD}
			The matrix $\mOmega_T$ from \eqref{eqn:OmegaT} is uniformly positive definite for $T$ large enough.
			
		\end{enumerate}
	\end{assumption}   
	
	The conditions of Assumption~\ref{ass:QuantilesLowLevel} are standard regularity conditions for (misspecified) quantile regression, but also cover the case of estimation through the more general GPL loss functions, similar to \citet{Komunjer2005}.
	The required moments of order $s > k$ are required for the derivation of stochastic equicontinuity for strong mixing processes through \citet{hansen1996stochastic}. 
	As we consider low dimensionalities of $\mTheta \subset \R^k$ such as $k=2$ in our applications, this is not overly restrictive.
	Condition (vi) is required under model misspecification, and holds naturally if either the tick loss is used (implying $g''=0$), or under correct specification (implying $F_{it}(\mW_{it}^\top \bar\mtheta_i) = \alpha$).
	
	Compared to Assumption~\ref{ass:Philowlevel}, the conditions in Assumption~\ref{ass:QuantilesLowLevel} require the conditional distribution to be absolutely continuous, as specified in item~\ref{cond:QuantCondContDistribution}; this is a standard requirement for quantile regression frameworks.
	Furthermore, item~\ref{cond:QuantCondOmegaPD} is included here because the subsequent proposition specifically verifies the conditions for the asymptotic normality result established in Theorem~\ref{thm:joint_nomal_mcb_dsc}, but does not treat the case of Theorem~\ref{thm:MCBandDSC0}.
	
	\begin{prop}
		\label{prop:Quantiles_Verification} 
		Consider a GPL score from \eqref{eqn:GPLLoss} for the decomposition, that is also used in \eqref{eqn:Mestimator}. Then, Assumptions~\ref{ass:TrueLinearSpecification} and \ref{ass:QuantilesLowLevel} imply Assumption \ref{ass:Normality} such that Theorem \ref{thm:joint_nomal_mcb_dsc} holds.
	\end{prop}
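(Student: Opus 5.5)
The plan is to verify the four items of Assumption~\ref{ass:Normality} one at a time for the GPL score $\myS(x,y) = (\mathds{1}\{y \le x\} - \alpha)(g(x)-g(y))$. Once they hold, Theorem~\ref{thm:joint_nomal_mcb_dsc} applies directly. The order will be: \ref{item:ContExpScore} (smoothness of the expected score), then \ref{item:SE} (stochastic equicontinuity), then \ref{item:EstOP1} ($\sqrt{T}$-rates), and finally \ref{item:CLT} (the CLT). The equicontinuity result is itself used in the rate argument, which is why \ref{item:SE} comes before \ref{item:EstOP1}.

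For \ref{item:ContExpScore}, condition on $\mathcal{W}_{it}$ and use the density $f_{it}$ from Assumption~\ref{ass:QuantilesLowLevel}\ref{cond:QuantCondContDistribution}. Writing $x = \mW_{it}^\top\mtheta$, the conditional expected score is
\[
\int_{-\infty}^{x} (g(x)-g(y)) f_{it}(y)\,dy - \alpha\big(g(x) - \E[g(Y_t)\mid \mathcal{W}_{it}]\big).
\]
Its derivative in $x$ is $g'(x)(F_{it}(x)-\alpha)$. This is continuous because $F_{it}$ and $g'$ are continuous. I will interchange differentiation and expectation by dominated convergence, using the envelope $G'(\mW_{it})$ for $\Vert \mW_{it} g'(\mW_{it}^\top\mtheta)\Vert$ together with $|F_{it}-\alpha|\le 1$. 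This gives the gradient $\E[g'(\mW_{it}^\top\mtheta)(F_{it}(\mW_{it}^\top\mtheta)-\alpha)\mW_{it}]$, and this gradient is continuous in $\mtheta$. A second differentiation, dominated by $G''$ and $K\,G_2'$, gives the Hessian in condition (vi), which is needed in the next step.

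For \ref{item:SE}, I will use the stochastic equicontinuity result of \citet{hansen1996stochastic} for strong mixing arrays indexed by a finite-dimensional parameter. That result requires a Lipschitz-in-$L^s$ condition of the form
\[
\big\Vert \myS(\mW_{it}^\top\mtheta,Y_t) - \myS(\mW_{it}^\top\bm\tau,Y_t)\big\Vert_{L^s} \le C\Vert\mtheta-\bm\tau\Vert^{\lambda}
\]
together with the mixing size $-(s-\tilde k)/(\tilde k s)$, which is exactly the size imposed in condition \ref{cond:QuantCondMixing}. The key observation is that the GPL score is Lipschitz in its first argument even though it has a kink: $|\myS(x,y)-\myS(x',y)| \le \max(\alpha,1-\alpha)\,|g(x)-g(x')| \le C|x-x'|$, since $0<g'\le C$. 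Hence the score difference is bounded by $C\Vert\mW_{it}\Vert\,\Vert\mtheta-\bm\tau\Vert$, and $\E\Vert\mW_{it}\Vert^s<\infty$ delivers the required condition with $\lambda=1$. I expect this step to be the main obstacle. The difficulty is matching Hansen's precise conditions (bracketing, or a $\lambda$-Lipschitz condition with mixing coefficients satisfying $s>\tilde k>k$) to the stated moment and mixing assumptions. Because of the non-smoothness, I will work with the score itself rather than its derivative, since the score is still Lipschitz.

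For \ref{item:EstOP1}, consistency of $\widehat{\mtheta}_{iT}$ follows from a uniform LLN: pointwise ergodic LLNs for mixing sequences, combined with the equicontinuity just established, plus uniqueness and compactness from Assumption~\ref{ass:TrueLinearSpecification}. For the $\sqrt{T}$-rate I will follow the standard argument for non-smooth M-estimators. The sample subgradient condition holds up to $O_\P(T^{-1/2})$; this is the usual quantile-regression fact that at most $k$ residuals are zero, and each such term is bounded by $\Vert \mW_{it}\Vert g'$ with finite $s$-th moment. Expanding the expected subgradient around $\bar\mtheta_i$ using the invertible Hessian from condition (vi) then gives
\[
\sqrt T(\widehat{\mtheta}_{iT}-\bar\mtheta_i) = -\mathbf{H}^{-1}T^{-1/2}\sum_t g'(\cdot)(\mathds{1}\{\cdot\}-\alpha)\mW_{it} + o_\P(1),
\]
where $\mathbf{H}$ is the Hessian matrix in condition (vi). This step needs a second equicontinuity argument, this time for the indicator-based subgradient process. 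Its indicators are bounded, and its $L^2$-modulus is controlled by the density bound $K$, so Hansen's theorem again applies. The leading term is $\mathcal{O}_\P(1)$ by a mixing CLT with $\mathbf{V}_T$ positive definite (condition (v)). The reference estimator $\widehat r_T$ is the special case $\mW=1$.

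Finally, for \ref{item:CLT}, the five-dimensional vector $(a_{1t},c_{1t},a_{2t},c_{2t},e_t)$ consists of centered strictly stationary functions of the mixing sequence, and hence is itself mixing of the same size. Its $s$-th moments are finite because $\E|g(\mW_{it}^\top\mtheta)-g(Y_t)|^s<\infty$. A CLT for strong mixing sequences (e.g.\ \citet[Theorem 5.20]{W_2001}) therefore applies to $\momega^\top$ times the vector for any fixed $\momega$, normalized using the uniform positive definiteness of $\mOmega_T$ from condition \ref{cond:QuantCondOmegaPD}. The Cramér--Wold device then yields the joint standardized limit $\mathcal{N}(0,I_5)$.
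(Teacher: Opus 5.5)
Your verification of Assumption~\ref{ass:Normality}~\ref{item:ContExpScore}, \ref{item:SE}, \ref{item:CLT} and of consistency is essentially the paper's.
\begin{itemize}
\item The kink is absorbed by the Lipschitz bound $|\myS(\mW_{it}^\top\mtheta,Y_t)-\myS(\mW_{it}^\top\bm\tau,Y_t)|\le L_g\max(\alpha,1-\alpha)\Vert\mW_{it}\Vert\,\Vert\mtheta-\bm\tau\Vert$ from Lemma~\ref{lem:GPL_Lipschitz}, which is fed into \citet[Theorem~3]{hansen1996stochastic} in Lemma~\ref{lem:GPL_Equicontinuous}.
\item The ULLN is a pointwise ergodic LLN plus this equicontinuity.
\item The CLT is White's mixing CLT combined with Cram\'er--Wold.
\end{itemize}
So the step you single out as the main obstacle is exactly what the paper does, and it goes through.

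You depart from the paper only in the $\sqrt{T}$-rate. The paper applies \citet[Theorem~7.1]{NM_1994} to the objective $\widehat Q_{iT}$ itself and attributes that theorem's remainder condition~(v) to the equicontinuity of $\nu_{iT}$. You instead linearize the approximate first-order condition in Huber's style. That requires stochastic equicontinuity of a second process,
\[
\mtheta \mapsto T^{-1/2}\sum_{t=1}^T \Big( g'(\mW_{it}^\top\mtheta)\big(\mathds{1}\{Y_t\le \mW_{it}^\top\mtheta\}-\alpha\big)\mW_{it} - \E\big[ g'(\mW_{it}^\top\mtheta)\big(F_{it}(\mW_{it}^\top\mtheta)-\alpha\big)\mW_{it}\big]\Big).
\]

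This is where your proposal has a gap. The result of \citet{hansen1996stochastic} used in Lemma~\ref{lem:GPL_Equicontinuous} needs a pointwise Lipschitz (or H\"older) bound in the parameter, with a random constant $b_t$ having $s$ moments. The indicator $\mathds{1}\{Y_t\le\mW_{it}^\top\mtheta\}$ admits no such bound, so ``Hansen's theorem again applies'' is not correct.

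An $L^2$-modulus of order $\delta^{1/2}$ obtained from the density bound $K$ is a bracketing-type condition. The maximal inequalities available for such indicator classes under $\alpha$-mixing (e.g., Andrews and Pollard, 1994) pair a bracketing-integral condition with polynomial mixing rates considerably stronger than Assumption~\ref{ass:QuantilesLowLevel}\ref{cond:QuantCondMixing}. That assumption's form $s>\tilde k>k$ is tailored to the Lipschitz case. As written, therefore, your rate step and your displayed Bahadur expansion rest on an unverified claim. You would need either such a bracketing result under strengthened mixing, or an argument that uses only the Lipschitz process $\nu_{iT}$.

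Be aware that the paper's route does not fully escape this either. Condition~(v) of \citet[Theorem~7.1]{NM_1994} controls the remainder at the $T^{-1/2}$ scale, which again involves the region near the kink. Equicontinuity of $\nu_{iT}$ alone does not deliver that.

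A minor point: your displayed expansion needs the sample subgradient to be $o_\P(T^{-1/2})$, not $\mathcal{O}_\P(T^{-1/2})$. This does hold under your own argument, since $\max_{t\le T}\Vert\mW_{it}\Vert=o_\P(T^{1/s})$ with $s>2$. The weaker $\mathcal{O}_\P(T^{-1/2})$ suffices only for the rate required in Assumption~\ref{ass:Normality}~\ref{item:EstOP1}.
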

	
	Proposition~\ref{prop:Quantiles_Verification} verifies the high-level conditions of Theorem~\ref{thm:joint_nomal_mcb_dsc} through standard conditions from (time series) quantile regression. 
	As discussed above, Theorem~\ref{thm:MCBandDSC0} only applies to smooth loss functions, and hence, the cases $\MCBp_{i} = 0$ and $\DSCp_{i} = 0$ are captured through the quantile-specific MZ-test of \citet{Gaglianone2011}.

	\subsection{Relation to Value-at-Risk backtests}
	\label{sec:VaRBacktestsTheory}
	
	VaR backtesting is a fundamental pillar of financial risk management, used to assess whether a model accurately predicts the frequency and magnitude of potential losses. 
	Traditional backtests such as by \citet{K_1995} and the traffic light system of \citet{B_1996, B_2019} typically focus on unconditional counts for the ``violations'' or ``hits'', where realized losses exceed the VaR predictions as captured through the quantile identification functions, $\myV_{it} = \myV(X_{it}, Y_t) = \mathds{1}\{Y_t \le X_{it}\} - \alpha$, from \eqref{eqn:CondCalibrationIDFunction}. 
	These unconditional tests, however, remain insensitive to the clustering of violations---a phenomenon particularly critical during financial crises, when consecutive days of unpredicted losses can lead to rapid capital depletion.
	
	\begin{table}[tb]
		\caption{Summary of prominent VaR backtests, including the underlying hypothesis, methodology and relation to conditional calibration in \eqref{eqn:CondCalibrationIDFunction} through $\mathcal{G}_{it} = \sigma(\bm{G}_{it})$.
			We use the shorthand $\myV_{it-l}:=\myV(X_{it-l}, Y_{t-l})$ for $l \ge 0$.
		}
		\label{tab:BacktestTheory}
		\centering
		\resizebox{\textwidth}{!}{%
			\begin{tabular}{lllll}
				\hline
				Reference & Label &
				Covariates/Instruments $\bm{G}_{it}$ &
				Method & Hypothesis $\H_0$ \\
				\hline
				\addlinespace
				
				\multicolumn{3}{l}{\underline{Two-sided backtests}} & \\  
				
				\citet{K_1995} & UC & $\bm{G}_{it}=1$ & $\E\left[\myV(X_{it}, Y_t) \mid \bm{G}_{it} \right]= \bm{G}_{it}^\top\bar{\bm{\beta}}_i$ & $\bar{\bm{\beta}}_i = \bm{0}$\\
				
				\citet{C_1998} & CC & $\bm{G}_{it}=\left(1,\myV_{it-1}\right)^\top$  & $\E\left[\myV(X_{it}, Y_t) \mid \bm{G}_{it} \right]= \bm{G}_{it}^\top\bar{\bm{\beta}}_i$ & $\bar{\bm{\beta}}_i = \bm{0}$ \\
				
				\citet{EM_2004} & DQ &
				$\bm{G}_{it}=\left(1, \myV_{it-1}, \dots, \myV_{it-4}\right)^\top$ &  $\E\left[\myV(X_{it}, Y_t) \mid \bm{G}_{it} \right]= \bm{G}_{it}^\top\bar{\bm{\beta}}_i$ & $\bar{\bm{\beta}}_i = \bm{0}$\\
				
				\citet{KMP_2006}& DQX &
				$\bm{G}_{it}=\left(1,X_{it},\myV_{it-1}, \dots, \myV_{it-4}\right)^\top$ &   $\E\left[\myV(X_{it}, Y_t) \mid \bm{G}_{it} \right]= \bm{G}_{it}^\top\bar{\bm{\beta}}_i$ & $\bar{\bm{\beta}}_i = \bm{0}$ \\  
				
				\citet{Gaglianone2011} & VQR & $\bm{G}_{it}=\left(1,X_{it}\right)^\top$ &
				$Q_\alpha(Y_t \mid \bm{G}_{it}) = \bm{G}_{it}^\top \bar{\bm{\beta}}_i$ & $ \bar{\bm{\beta}}_i = (0,1)^\top$ \\ \addlinespace
				
				\multicolumn{3}{l}{\underline{One-sided backtests}} & \\
				\citet{B_1996} & Basel & $\bm{G}_{it} = 1$ &  Binomial CDF & $\E[\myV_{it}]\leq0$\\ 
				\citet{NZ_2017} & NZ & $\bm{G}_{it}=(1,X_{it})^\top$  & Moment Condition & $\E[\bm{G}_{it}^\top\myV_{it}]\leq0$\\[0.1cm]
				\hline
			\end{tabular}
		}
	\end{table}
	
	Following the seminal work of \citet{C_1998}, a plethora of \emph{conditional} backtests has been proposed over the past decades. 
	See e.g., \citet{campbell2007review} and \citet{nieto2016frontiers} for comprehensive reviews on VaR backtesting.
	Most of these methods test hypotheses closely related to the conditional calibration property in \eqref{eqn:CondCalibrationIDFunction} by utilizing varying information sets $\mathcal{G}_{it} = \sigma(\bm{G}_{it})$.
	Table~\ref{tab:BacktestTheory} provides a non-exhaustive list of prominent VaR backtests, illustrating the intrinsic connection between their choice of covariates (or instruments) $\bm{G}_{it}$ and our recalibration variables $\mW_{it}$. 
	Notably, the equivalence between \eqref{eqn:ConditionalCalibration} and \eqref{eqn:CondCalibrationIDFunction} permits specifications based on either quantile regressions for $Y_t$, or mean regressions for the respective identification function.
	
	As a canonical case of assessing auto-calibration, the quantile-specific adaptation of the MZ backtest by \citet{Gaglianone2011} utilizes the same recalibration technique as our linear score decomposition when $\mW_{it}^\top = (1,X_{it})$.
	By matching the recalibration variables $\mW_{it}$ with the covariates $\bm{G}_{it}$, both the score decompositions and the underlying inference can be tailored to the specific notion of conditional calibration underlying the other backtests presented in Table~\ref{tab:BacktestTheory}.

	\section{Simulations}
	\label{sec:sim}
	
	We now assess the finite-sample performance of the proposed tests from Section~\ref{sec:Testing}. 
	Section~\ref{sec:sim_para} describes the simulation setup and Section~\ref{sec:sim_results} reports the results.

	\subsection{Simulation setup}
	\label{sec:sim_para}
	
	To empirically validate our tests for equal miscalibration, with null hypothesis $\H_{\MCB}: \MCBp_{1} = \MCBp_{2}$, or for equal discrimination, with $\H_{\DSC}: \DSCp_{1} = \DSCp_{2}$, between competing mean or quantile forecasts, we use the following data-generating process (DGP) for the realizations,
	\begin{align}
		\label{eqn:SimProcess}
		Y_t =  \mathbf{V}_t^\top \bm{\gamma} + \varepsilon_{Y,t}, \quad\text{with}\quad \mathbf{V}_t=(K_t,L_t,M_t,N_t)^\top  \quad\text{and}\quad \bm{\gamma}=(\gamma_K,\gamma_L,\gamma_M,\gamma_N)^\top,
	\end{align} 
	with $\varepsilon_{Y,t} \stackrel{\text{iid}}{\sim} \mathcal{N}(0,1)$.
	The predictor variables in $\mathbf{V}_t$ follow independent autoregressive AR(1) processes $K_t = \beta K_{t-1} + \varepsilon_{K,t},\  L_t = \beta L_{t-1} + \varepsilon_{L,t},\ M_t = \beta M_{t-1} + \varepsilon_{M,t},\ N_t = \beta N_{t-1} + \varepsilon_{N,t}$, with fixed $\beta = 0.25$ and $ \varepsilon_{K,t}, \varepsilon_{L,t}, \varepsilon_{M,t}, \varepsilon_{N,t} \overset{\text{iid}}{\sim} \mathcal{N}(0,1)$, which are all mutually independent.
	
	Forecaster $X_{1t}$ has exclusive access to $K_t$ and forecaster $X_{2t}$ to $L_t$, while both forecasters share information on $M_t$ and $N_t$ as formalized by $\mathcal{I}_{1t} = \sigma \left\{K_{t}, M_{t}, N_{t} \right\}$ and $\mathcal{I}_{2t} = \sigma \left\{ L_{t}, M_{t}, N_{t} \right\}$.
	Notice that our setup requires both ``joint'' predictors $M_{t}$ and $N_{t}$ in order to construct non-collinear forecasts with equal $\MCBp$ and/or $\DSCp$ values.

	Based on the distinct information sets $\mathcal{I}_{1t}$ and $\mathcal{I}_{2t}$, the forecasters issue \emph{mean} forecasts
	\begin{align}
		\label{eqn:SimFcast1}
		X_{1t} &=  \delta_0  + \mathbf{V}_t^\top \bm{\delta}, \qquad\text{with}\qquad \bm{\delta}=(\delta_K,0,\delta_M,\delta_N)^\top \in \mathbb{R}^4, \\
		\label{eqn:SimFcast2}
		X_{2t} &=  \xi_0  + \mathbf{V}_t^\top \bm{\xi}, \qquad\text{with}\qquad \bm{\xi}=(0,\xi_L,\xi_M,\xi_N)^\top \in \mathbb{R}^4, 
	\end{align}
	and $\delta_0, \xi_0 \in \mathbb{R}$.
	The imposed zeros in $\bm{\delta}$ and $\bm{\xi}$ reflect the absence of the corresponding predictor variables in the information sets of the respective forecaster.
	The coefficients $(\delta_0, \bm{\delta}^\top)$ and $(\xi_0, \bm{\xi}^\top)$ and their relation to $\bm{\gamma}$ allows to generate forecasts with different calibration and discrimination properties, as formalized in the following proposition for the squared error score from \eqref{eqn:SEscore}.
	
	\begin{prop}
		\label{prop:sim}
		For realizations $Y_t$ and forecasts $X_{1t}$ and $X_{2t}$ following \eqref{eqn:SimProcess}--\eqref{eqn:SimFcast2} with $\bm{\delta} \not= 0$ and $\bm{\xi} \not= 0$, and given the recalibration information $\mathcal{W}_{it} = \sigma\{\mW_{it}\}$, with $\mW_{it}=(1,X_{it})^\top$, $i=1,2$, 
		the population squared error score decomposition of the forecasts $X_{1t}$ is
		\begin{align}
			\label{eqn:MCBDSCSimFormula}
			\MCBp_1= \delta_0^2 + \varsigma \cdot \frac{(\bm{\delta}^\top \bm{\delta} - \bm{\delta}^\top \bm{\gamma})^2}{\bm{\delta}^\top \bm{\delta}}, \qquad 
			\DSCp_1= \varsigma \cdot \frac{(\bm{\delta}^\top \bm{\gamma})^2}{\bm{\delta}^\top \bm{\delta}},  \qquad 
			\UNCp= \varsigma \cdot \bm{\gamma}^\top \bm{\gamma} + 1,
		\end{align}
		where $\varsigma=\frac{1}{1-\beta^2}$ is the variance of the zero-mean AR(1) processes $K_t,L_t,M_t,N_t$. For $\MCBp_2$ and $\DSCp_2$ interchange $\delta_0^2$ and $\bm{\delta}$ with  $\xi_0^2$ and $\bm{\xi}$, respectively. 
	\end{prop}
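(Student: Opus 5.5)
Since the squared error is used, everything reduces to population moments of the jointly Gaussian, mean-zero vector $(\mathbf{V}_t, \varepsilon_{Y,t})$. By stationarity of the AR(1) processes, $\E[\mathbf{V}_t]=0$ and $\E[\mathbf{V}_t\mathbf{V}_t^\top]=\varsigma I_4$ with $\varsigma = 1/(1-\beta^2)$, because the four processes are mutually independent. The noise $\varepsilon_{Y,t}$ is independent of $\mathbf{V}_t$ and has unit variance.

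The plan is to compute the three scores $\Sp_1$, $\Sp_1^{\mathsf{C}}$ and $\Sp^{\mathsf{R}}$ explicitly and then take differences as in \eqref{eqn:PopScoreDecomp}.

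\emph{Reference score.} Here $\bar r = \E[Y_t]=0$, so
$\Sp^{\mathsf{R}} = \E[Y_t^2] = \varsigma\,\bm{\gamma}^\top\bm{\gamma}+1$, which gives $\UNCp$.

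\emph{Original score.} We have $X_{1t}-Y_t = \delta_0 + \mathbf{V}_t^\top(\bm{\delta}-\bm{\gamma}) - \varepsilon_{Y,t}$, hence
$\Sp_1 = \delta_0^2 + \varsigma\,\Vert\bm{\delta}-\bm{\gamma}\Vert^2 + 1$.

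\emph{Recalibrated score.} With $\mW_{1t}=(1,X_{1t})^\top$, the pseudo-true parameter in \eqref{eqn:PseudoTrue} is the population OLS projection of $Y_t$ on $(1,X_{1t})$. Its slope is
\[
b = \cov(X_{1t},Y_t)/\var(X_{1t}) = \bm{\delta}^\top\bm{\gamma}/\bm{\delta}^\top\bm{\delta},
\]
which is well defined since $\bm{\delta}\neq 0$. The intercept is $a = \E[Y_t]-b\,\E[X_{1t}] = -b\delta_0$, so $X_{1t}^{\mathsf{C}} = b\,\mathbf{V}_t^\top\bm{\delta}$. The standard projection identity then gives
\[
\Sp_1^{\mathsf{C}} = \var(Y_t) - \frac{\cov(X_{1t},Y_t)^2}{\var(X_{1t})} = \varsigma\,\bm{\gamma}^\top\bm{\gamma} + 1 - \varsigma\,\frac{(\bm{\delta}^\top\bm{\gamma})^2}{\bm{\delta}^\top\bm{\delta}}.
\]
Subtracting this from $\Sp^{\mathsf{R}}$ yields $\DSCp_1$ immediately.

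\emph{Miscalibration.} Compute
\[
\MCBp_1 = \Sp_1-\Sp_1^{\mathsf{C}} = \delta_0^2 + \varsigma\Big(\bm{\delta}^\top\bm{\delta} - 2\bm{\delta}^\top\bm{\gamma} + \frac{(\bm{\delta}^\top\bm{\gamma})^2}{\bm{\delta}^\top\bm{\delta}}\Big).
\]
The bracket equals $(\bm{\delta}^\top\bm{\delta}-\bm{\delta}^\top\bm{\gamma})^2/\bm{\delta}^\top\bm{\delta}$, which is the claimed formula.

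The forecast $X_{2t}$ is handled identically by replacing $\delta_0$ with $\xi_0$ and $\bm{\delta}$ with $\bm{\xi}$; $\UNCp$ does not depend on the forecast.

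The only point requiring care is identifying $X_{1t}^{\mathsf{C}}$ with the linear projection. The minimizer of expected squared error over $\mtheta$ is the unique OLS solution because $\E[\mW_{1t}\mW_{1t}^\top]$ is positive definite when $\var(X_{1t}) = \varsigma\,\bm{\delta}^\top\bm{\delta}>0$. We also assume that the interiority requirements of Assumption~\ref{ass:TrueLinearSpecification} hold for a sufficiently large $\mTheta$. With these in place, the rest is routine algebra.
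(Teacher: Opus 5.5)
Your proof is correct and follows essentially the same route as the paper: both reduce everything to the second moments $\E[X_{1t}^2]=\delta_0^2+\varsigma\,\bm\delta^\top\bm\delta$ and $\E[X_{1t}Y_t]=\varsigma\,\bm\delta^\top\bm\gamma$, the projection slope $\bm\delta^\top\bm\gamma/\bm\delta^\top\bm\delta$, and the identity $a-2b+b^2/a=(a-b)^2/a$. The differences are cosmetic: you compute the three expected scores and take differences, working directly with the pseudo-true linear projection that defines $X_{1t}^{\mathsf{C}}$, whereas the paper expands $\MCBp_1$ and $\DSCp_1$ via iterated expectations with $\E[Y_t\mid \mW_{1t}]$ and appeals to linearity of the conditional mean.
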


	The expressions in Proposition~\ref{prop:sim} illustrate that for non-constant forecasts, $\MCBp_1=0$ holds if and only if $\bm{\delta}^\top \bm{\delta} = \bm{\delta}^\top \bm{\gamma}$ and $\delta_0=0$.
	Hence, perfect calibration, $\MCBp_1=0$, can be achieved even if the forecaster omits relevant variables. 
	However, satisfying the condition $\bm{\delta}^\top \bm{\delta} = \bm{\delta}^\top \bm{\gamma}$ generally requires the forecaster to correctly weigh the signals they do include. 
	Absent a coincidental combination of perfectly offsetting coefficient errors, using an incorrect parameter value or incorporating an irrelevant predictor (e.g., $\delta_{M} \neq 0$ when $\gamma_{M}=0$) will violate this equality and cause miscalibration.
	
	The $\DSCp_1$ component is unaffected by the intercept $\delta_0$ that does not carry any information.
	$\DSCp_1$ is positive whenever $\bm{\delta}^\top \bm{\gamma} \neq 0$, that is, whenever $X_{1t}$ reacts to the signal in $Y_t$, even incorrectly such as in the wrong direction.
	Moreover, by Cauchy-Schwarz, $\frac{(\bm{\delta}^\top \bm{\gamma})^2}{\bm{\delta}^\top \bm{\delta}} \leq \bm{\gamma}^\top\bm{\gamma}$, such that the unconstrained theoretical upper bound for discrimination is $\varsigma \boldsymbol{\gamma}^{\top} \boldsymbol{\gamma}$.
	This global bound would be attained if and only if a forecaster could choose parameters $\boldsymbol{\delta}$ proportionally to the full signal $\boldsymbol{\gamma}$, i.e., $\boldsymbol{\delta}=c \boldsymbol{\gamma}$ for some $c \in \mathbb{R}$
	as then, $\frac{(\bm{\delta}^\top \bm{\gamma})^2}{\bm{\delta}^\top \bm{\delta}} = \frac{c^2 (\bm{\gamma}^\top \bm{\gamma})^2}{c^2 \bm{\gamma}^\top \bm{\gamma}} = \bm{\gamma}^\top \bm{\gamma}$.
	Hence, maximal discrimination can arise even for miscalibrated forecasts. 
	If only some relevant variables are observed, the best achievable discrimination is obtained when the available predictors are used correctly.
	Analogous conclusions hold for the second forecaster $X_{2t}$ by replacing $\bm{\delta}$ with $\bm{\xi}$.
	
	\begin{table}[tbp]
		\centering
		\caption{Twelve \emph{mean forecasting} specifications of the parameters $\bm{\gamma}, \delta_0, \bm{\delta}, \xi_0, \bm{\xi}$ in \eqref{eqn:SimProcess}--\eqref{eqn:SimFcast2} based on a flexible $k \in \{0, 0.05, \ldots, 0.45, 0.5\}$.
			All setups guarantee non-collinear forecasts and generate different relations of the $\MCBp$ and $\DSCp$ values as given in the numbered rows and the last column termed ``Implications''.}
		\begin{adjustbox}{width=\textwidth}
			\begin{tabular}{c  c c c c | c c c c c | c c c c c | c } 
				\toprule
				% table head
				& 	\multicolumn{4}{c}{Realization} & \multicolumn{5}{c}{Forecaster $X_{1t}$} & \multicolumn{5}{c}{Forecaster $X_{2t}$} & \\
				
				& $\gamma_K$ & $\gamma_L$ & $\gamma_M$ & $\gamma_N$ & $\delta_0$ & $\delta_K$ & $\delta_L$ & $\delta_M$ & $\delta_N$ & $\xi_0$ & $\xi_K$ & $\xi_L$ & $\xi_M$ & $\xi_N$ & Implications  \\
				\hline  \\ 
				
				\multicolumn{16}{l}{ \bf  \underline{(A) Parameterization of  $\MCBp$ difference}} \\ \\

				% part 1a
				\multicolumn{16}{l}{\bf {1) $\MCBp_{1}(k) = \MCBp_{2}(k) = \frac{8k^2}{15}$}} \\ \addlinespace
				
				& 0 & $\frac{1}{4}$ & $\frac{1}{4}$  & 0  
				& 0 & 0 & 0 & $\frac{k}{\sqrt{2}} + \frac{1}{4} $ & 0 
				& 0 & 0 & $\frac{k}{{2}} + \frac{1}{4} $ &  $\frac{k}{{2}} + \frac{1}{4} $ &0 
				&  $0 < \DSCp_{1} < \DSCp_{2}$ \\
				
				% part 1b
				& $\frac{1}{4}$  & $\frac{1}{4}$ & $\frac{1}{4}$  &  0
				& 0 & $\frac{k}{{2}} + \frac{1}{4}$  & 0 & $\frac{k}{{2}} + \frac{1}{4}$ & 0
				& 0 & 0 & $\frac{k}{{2}} + \frac{1}{4}$ & $\frac{k}{2} + \frac{1}{4}$ & 0 
				& $0 < \DSCp_{1} =\DSCp_{2}$ 	
				\\  \\

				% part 2a
				\multicolumn{16}{l}{ \bf{2) $\MCBp_{1} = 0$ {and} $\MCBp_{2}(k) = \frac{8k^2}{15}$}} \\ \addlinespace
				& 0 & $\frac{1}{4}$ & $\frac{1}{4}$  & 0  
				& 0 & 0  & 0 &$\frac{1}{4}$ & 0 
				& 0 & 0 & $\frac{k}{2} + \frac{1}{4}$ & $\frac{k}{2} + \frac{1}{4}$ & 0 
				&$0 < \DSCp_{1} < \DSCp_{2}$ \\
				
				% part 2b
				& $\frac{1}{4}$   & $\frac{1}{4}$ & $\frac{1}{4}$  & 0
				& 0 & $\frac{1}{4}$ & 0 & $ \frac{1}{4} $ & 0 
				&0 & 0 & $\frac{k}{2} + \frac{1}{4}$  & $\frac{k}{2} + \frac{1}{4}$  & 0 
				& $0 < \DSCp_{1} = \DSCp_{2}$ 
				\\ 	   \\ 
				
				% part 3a
				\multicolumn{16}{l}{\bf{3) ${\MCBp}_{1}(k) = 2 \, {\MCBp}_{2}(k) = \frac{16 k^2}{15}$}} \\ \addlinespace
				
				& 0 & $\frac{1}{4}$ & $\frac{1}{4}$  & 0  
				& 0 & 0 & 0 &$k+\frac{1}{4}$ & 0  
				& 0 & 0 &  $\frac{k}{2} + \frac{1}{4}$ & $\frac{k}{2} + \frac{1}{4}$ & 0 
				& $0 < \DSCp_{1} < \DSCp_{2}$ \\
				
				% part 3b
				& $\frac{1}{4}$  & $\frac{1}{4}$ & $\frac{1}{4}$  & 0
				& 0 & $\frac{k}{\sqrt{2}} + \frac{1}{4}$  &  0 & $\frac{k}{\sqrt{2}} + \frac{1}{4}$ & 0
				& 0 & 0 &  $\frac{k}{{2}} + \frac{1}{4}$ & $\frac{k}{2} + \frac{1}{4}$ & 0 
				& $0 < \DSCp_{1} =\DSCp_{2}$ 	\\     
				
				\\ 
				\midrule
				\\
				
				% DSC o
				\multicolumn{16}{l}{ \bf  \underline{(B) Parameterization of  $\DSCp$ difference}} \\ \\
				
				\multicolumn{16}{l}{\bf {4) $\DSCp_{1}(k) = \DSCp_{2}(k) = \frac{8 k^2}{15}$}} \\ \addlinespace
				
				% part 4a
				& 0 & 0 & 0 & $k$ 
				& 0 & $\frac{k}{2} + \frac{1}{4\sqrt{2} }$ & 0 & 0  & $\frac{k}{2} +    \frac{1}{4\sqrt{2} }$ 
				& 0 & 0  & $\frac{k}{2} + \frac{1}{4}$  & 0 & $\frac{k}{2} + \frac{1}{4}$ 
				& $0 < \MCBp_{1} < \MCBp_{2}$ \\
				
				% part 4b
				& 0 & 0 & 0 & $k$ 
				& 0 & $\frac{k}{2}  + \frac{1}{4}$ & 0 & 0 & $\frac{k}{2} + \frac{1}{4}$ 
				& 0 & 0  & $\frac{k}{2} + \frac{1}{4}$  & 0 & $\frac{k}{2} + \frac{1}{4}$ 
				& $0 < \MCBp_{1} = \MCBp_{2}$ \\ \\

				% part 5a
				\multicolumn{16}{l}{\bf {5) $\DSCp_{1} = 0$ {and} $\DSCp_{2}(k) = \frac{8 k^2}{15}$} } \\ \addlinespace
				
				& 0 & 0 & 0 & $k$ 
				& 0 & 0 & 0 & $\frac{1}{4}$ & 0  
				& 0 & 0  & $\frac{k}{2} + \frac{1}{4}$  & 0 & $\frac{k}{2} + \frac{1}{4}$ 
				& $0 < \MCBp_{1} < \MCBp_{2}$ \\

				% part 5b
				& 0 & 0 & 0 & $k$ 
				& 0  & $ \frac{1}{4} $ & 0  & $\frac{1}{4}$ & 0 
				& 0 & 0  & $\frac{k}{2} + \frac{1}{4}$  & 0 & $\frac{k}{2} + \frac{1}{4}$ 
				& $0 < \MCBp_{1} = \MCBp_{2} $ \\    \\ 
				
				% part 6a
				\multicolumn{16}{l}{\bf {6) ${\DSCp}_{1}(k) = 2 \, {\DSCp}_{2}(k) = \frac{16 k^2}{15}$} } \\ \addlinespace
				
				& 0 & 0 & 0 & $k$ 
				& 0 & 0 & 0 & 0 & $k + \frac{1}{4}$ 
				& 0 & 0  & $\frac{k}{2} + \frac{1}{4}$  & 0 & $\frac{k}{2} + \frac{1}{4}$ 
				& $0 < \MCBp_{1} < \MCBp_{2}$ \\
				
				% part 5b
				& 0 & 0 & 0 & $k$ 
				& $\frac{1}{\sqrt{15}}$ & 0 & 0  & 0 & $k+ \frac{1}{4}$ 
				& 0 & 0  & $\frac{k}{2} + \frac{1}{4}$  & 0 & $\frac{k}{2} + \frac{1}{4}$ 
				& $0 < \MCBp_{1} =\MCBp_{2}$ \\
				\addlinespace
				\bottomrule
			\end{tabular}
		\end{adjustbox}
		\label{tab:main}
	\end{table}

	The closed-form expressions in Proposition~\ref{prop:sim} enable diverse specifications of $\MCBp_i$ and $\DSCp_i$, allowing us to study the size and power of our proposed tests across different settings by appropriately choosing the parameters $\bm{\gamma}$, $(\delta_0, \bm{\delta}^\top)$, and $(\xi_0, \bm{\xi}^\top)$.
	In detail, Table~\ref{tab:main} specifies twelve different parameterizations that result in illustrative settings for $\MCBp_i$ and $\DSCp_i$ as given in the facet labels of Figure~\ref{fig:sim_param_rr}.
	To analyze test size and power, we vary a parameter $k>0$ that affects $\boldsymbol{\gamma}, (\delta_0, \boldsymbol{\delta}^\top)$, and $(\xi_0, \boldsymbol{\xi}^\top)$, which continuously shifts the DGPs either along or away from their respective null hypotheses.
	The specifications of Table~\ref{tab:main} are chosen to guarantee that either $\MCBp_2(k)$ or $\DSCp_2(k)$ depend on $k$ and equal $\frac{8k^2}{15}$ for all $k \ge 0$.
	Figure~\ref{fig:TrueMCBDSC_MeanDPG} in Appendix~\ref{sec:AddFigs} plots the population values $\MCBp_i$ and $\DSCp_i$, $i=1,2$  as functions of $k$ for all twelve setups.
	
	For example, we generate the scenario $\MCBp_{1} = 0$ and $\MCBp_{2}(k) = \tfrac{8k^2}{15}$ while maintaining $0 < \DSCp_1 < \DSCp_2$ independent of $k$ by setting $\bm\gamma^\top = (0, \tfrac{1}{4}, \tfrac{1}{4}, 0)$ as well as  $(\delta_0, \bm\delta^\top) = (0, 0, 0, \tfrac{1}{4}, 0)$ and $(\xi_0, \bm\xi^\top) = (0,0, \tfrac{1}{4} + \tfrac{k}{2}, \tfrac{1}{4} + \tfrac{k}{2}, 0)$.
	Here, the realizations are only driven by $L_t$ and $M_t$ with equal magnitudes, $\gamma_L = \gamma_M$.
	The first forecaster $X_{1t}$ does not have access to $L_t$, but she correctly uses $M_t$ through $\delta_M = \gamma_M$ such that her forecasts are perfectly calibrated, $\MCBp_1 = 0$.
	The sole access to $M_t$ however limits her discrimination to $\DSCp_1 = \tfrac{1}{15}$.
	By contrast, the second forecaster $X_{2t}$ uses both relevant variables $L_t$ and $M_t$, and hence achieves a higher discrimination $\DSCp_2 = \tfrac{2}{15} > \DSCp_1$ for any $k \ge 0$.
	However, for values $k > 0$, she uses the information incorrectly with $\bm{\xi} \not= \bm{\gamma}$ such that $\MCBp_2(k) = \tfrac{8k^2}{15}$ increases with $k$.
	
	\begin{table}[tbp]
		\centering
		\caption{Four \emph{quantile forecasting} specifications of the parameters $\bm{\gamma}, \delta_0, \bm{\delta}, \xi_0, \bm{\xi}$ in \eqref{eqn:SimProcess} and \eqref{eqn:SimFcastq} based on a flexible $k \in \{0, 0.05, \ldots, 0.95, 1\}$.
			All setups guarantee non-collinear forecasts and generate different relations of the $\MCBp$ and $\DSCp$ values as given in the numbered rows and the last column termed ``Implications''.
			The values of $\xi_0(k,\alpha)$ are chosen numerically such that  $\MCBp_{1}(k) = \MCBp_{2}(k)$ holds; see Appendix~\ref{app:QuantileSimDGP} and Figure~\ref{fig:xi0_choice} for details.
		}
		\label{tab:SimQuantiles}
		\begin{adjustbox}{width=\textwidth}
			\begin{tabular}{c  c c c c | c c c c c | c c c c c | c } 
				\toprule
				% table head
				& 	\multicolumn{4}{c}{Realization} & \multicolumn{5}{c}{Forecaster $X_{1t}^\alpha$} & \multicolumn{5}{c}{Forecaster $X_{2t}^\alpha$} & \\
				
				& $\gamma_K$ & $\gamma_L$ & $\gamma_M$ & $\gamma_N$ & $\delta_0$ & $\delta_K$ & $\delta_L$ & $\delta_M$ & $\delta_N$ & $\xi_0$ & $\xi_K$ & $\xi_L$ & $\xi_M$ & $\xi_N$ & Implications  \\
				\hline  \\ 
				
				\multicolumn{16}{l}{ \bf  \underline{(A) Parameterization of  $\MCBp$ difference}} \\ \\
				
				% part 1b
				\multicolumn{16}{l}{\bf {1) $\MCBp_{1}(k) = \MCBp_{2}(k)$ increasing with $k \ge 0$, starting from $0$}} \\ \addlinespace
				& $\frac{1}{4}$  & $\frac{1}{4}$ & $\frac{1}{4}$  &  0
				& $(\sqrt{\tfrac{16}{15}} -1) z_\alpha$ & $\frac{k}{{2}} + \frac{1}{4}$  & 0 & $\frac{k}{{2}} + \frac{1}{4}$ & 0
				& $(\sqrt{\tfrac{16}{15}} -1) z_\alpha$ & 0 & $\frac{k}{{2}} + \frac{1}{4}$ & $\frac{k}{2} + \frac{1}{4}$ & 0 
				& $0 < \DSCp_{1} =\DSCp_{2}$ 	
				\\  \\ 
				
				% part 2b
				\multicolumn{16}{l}{\bf {2) $\MCBp_{1} = 0$ and $\MCBp_{2}(k)$ increasing with $k \ge 0$, starting from $0$}} \\ \addlinespace
				& $\frac{1}{4}$   & $\frac{1}{4}$ & $\frac{1}{4}$  & 0
				& $(\sqrt{\tfrac{16}{15}} -1) z_\alpha$ & $\frac{1}{4}$ & 0 & $ \frac{1}{4} $ & 0 
				& $(\sqrt{\tfrac{16}{15}} -1) z_\alpha$ & 0 & $\frac{k}{2} + \frac{1}{4}$  & $\frac{k}{2} + \frac{1}{4}$  & 0 
				& $0 < \DSCp_{1} = \DSCp_{2}$ 
				\\ 	   
				\\ 
				\midrule
				\\
				
				% DSC o
				\multicolumn{16}{l}{ \bf  \underline{(B) Parameterization of  $\DSCp$ difference}} \\ \\
				
				% part 4b
				\multicolumn{16}{l}{\bf {4) $\DSCp_{1}(k) = \DSCp_{2}(k)$ increasing with $k \ge 0$, starting from $0$}} \\ \addlinespace
				& 0 & 0 & 0 & $k$ 
				& $(\sqrt{1+ \tfrac{8}{15} k^2} -1) z_\alpha$  & $\frac{k}{2}  + \frac{1}{4}$ & 0 & 0 & $\frac{k}{2} + \frac{1}{4}$ 
				& $(\sqrt{1+ \tfrac{8}{15} k^2} -1) z_\alpha$  & 0  & $\frac{k}{2} + \frac{1}{4}$  & 0 & $\frac{k}{2} + \frac{1}{4}$ 
				& $0 < \MCBp_{1}(k) = \MCBp_{2}(k)$ \\ \\

				% part 5b
				\multicolumn{16}{l}{\bf {5) $\DSCp_{1} = 0$ and $\DSCp_{2}(k)$ increasing with $k \ge 0$, starting from $0$}} \\ \addlinespace
				& 0 & 0 & 0 & $k$ 
				& $ \frac{1}{2} $  & $ \frac{1}{4} $ & 0  & $\frac{1}{4}$ & 0 
				& $\xi_0(k,\alpha)$ & 0  & $\frac{k}{2} + \frac{1}{4}$  & 0 & $\frac{k}{2} + \frac{1}{4}$ 
				&$0 < \MCBp_{1}(k) = \MCBp_{2}(k)$ \\ 
				\addlinespace
				\bottomrule
			\end{tabular}
		\end{adjustbox}
	\end{table}
	
	For the evaluation of our tests for \emph{quantile forecasts}, we use the same DGP in \eqref{eqn:SimProcess}.
	As the information in $\mathbf{V}_t$ only affects the conditional mean of $Y_t$ in  \eqref{eqn:SimProcess}, we generate the quantile forecasts at level $\alpha \in (0,1)$ as
	\begin{align}
		\label{eqn:SimFcastq}
		X_{1t}^\alpha =  \mathbf{V}_t^\top \bm{\delta} + \delta_0  + z_\alpha, 
		\qquad \text{and} \qquad 
		X_{2t}^\alpha = \mathbf{V}_t^\top \bm{\xi} + \xi_0  +  z_\alpha,
	\end{align}
	using standard normal quantiles, denoted by $z_\alpha = \Phi^{-1}(\alpha)$.
	
	We deliberately use the pure location process \eqref{eqn:SimProcess} for quantile forecasts, allowing us to explicitly control the population miscalibration and discrimination values for the check loss in Proposition~\ref{prop:sim_q}, albeit these closed-form expressions become much more complicated as for the squared error.
	Due to this additional complication, we only consider four informative parameterizations as given in Table~\ref{tab:SimQuantiles}.
	For these, Appendix~\ref{app:QuantileSimDGP} provides the closed-form expressions for $\MCBp_i$ and $\DSCp_i$ as explicit functions of $k, \alpha$ and the parameters of the DGP.
	Notably, for the last parametrization, (approximate) equality of the miscalibration values is achieved by a numerical choice for the values of $\xi_0 = \xi_0(k,\alpha)$; see Figure~\ref{fig:xi0_choice} and the text of Appendix~\ref{app:QuantileSimDGP}.

	\subsection{Simulation results}
	\label{sec:sim_results}
	
	\begin{figure}[tb]
		\includegraphics[width=\textwidth]{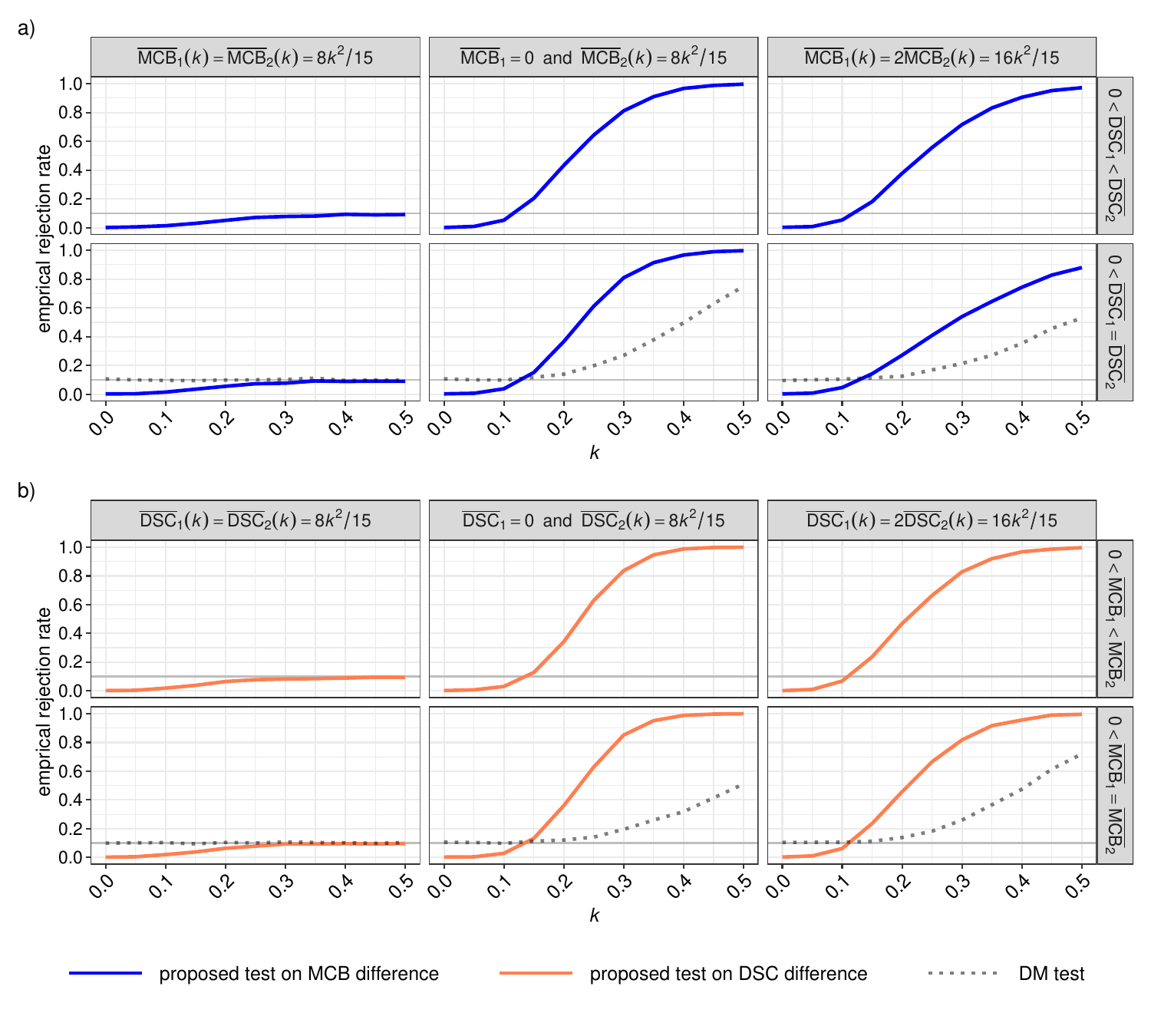}
		\caption{Empirical rejection rates for mean forecasts and the squared error score for the proposed tests of equal miscalibration (in blue) in the upper panel a), equal discrimination (in orange) in the lower panel b), and of the DM test of equal predictive performance (in gray) in the lower plots rows of both panels.
			We generate the data according to \eqref{eqn:SimProcess}--\eqref{eqn:SimFcast2} with the twelve parameterizations of Table~\ref{tab:main}, which depend on the parameter $k \ge 0$ that is displayed on the $x$-axes.
			We use $T = 500$ and a nominal level of $10\%$.}
		\label{fig:sim_param_rr}
	\end{figure}
	
	All results are based on 5000 Monte Carlo replications, a nominal significance level of $10\%$ and $T=500$. 
	We estimate all long-run covariance matrices by the HAC estimator of \citet{A_1991}, adapted to quantile regressions by \citet{GY_2024}.
	For the implementation, we follow the default \texttt{vcovHAC} procedure from the \texttt{R} package \texttt{sandwich} of \citet{Z_2004}.

	For \emph{mean forecasts} and the associated squared error score, Figure~\ref{fig:sim_param_rr} shows the rejection rates of the test for equal miscalibration $\H_\MCB:  \MCBp_{1} = \MCBp_{2}$ in the upper panel in blue color, and of the test for equal discrimination $\H_\DSC:  \DSCp_{1} = \DSCp_{2}$ in orange color in the lower panel.
	For comparison, we also show rejection rates of the DM test of equal predictive power $\H_{\myS}: \Sp_{1} = \Sp_{2}$ with gray and dotted lines. 
	We only show these rejection rates when the underlying population score difference coincides with the respective difference in $\MCBp$ or $\DSCp$ values.
	E.g., in the lower row of panel a), we ensure $\DSCp_{1} = \DSCp_{2}$ such that $\Sp_{1} - \Sp_{2} = \MCBp_1 - \MCBp_2$.
	
	\begin{figure}[tbp]
		\centering
		\includegraphics[width=\textwidth]{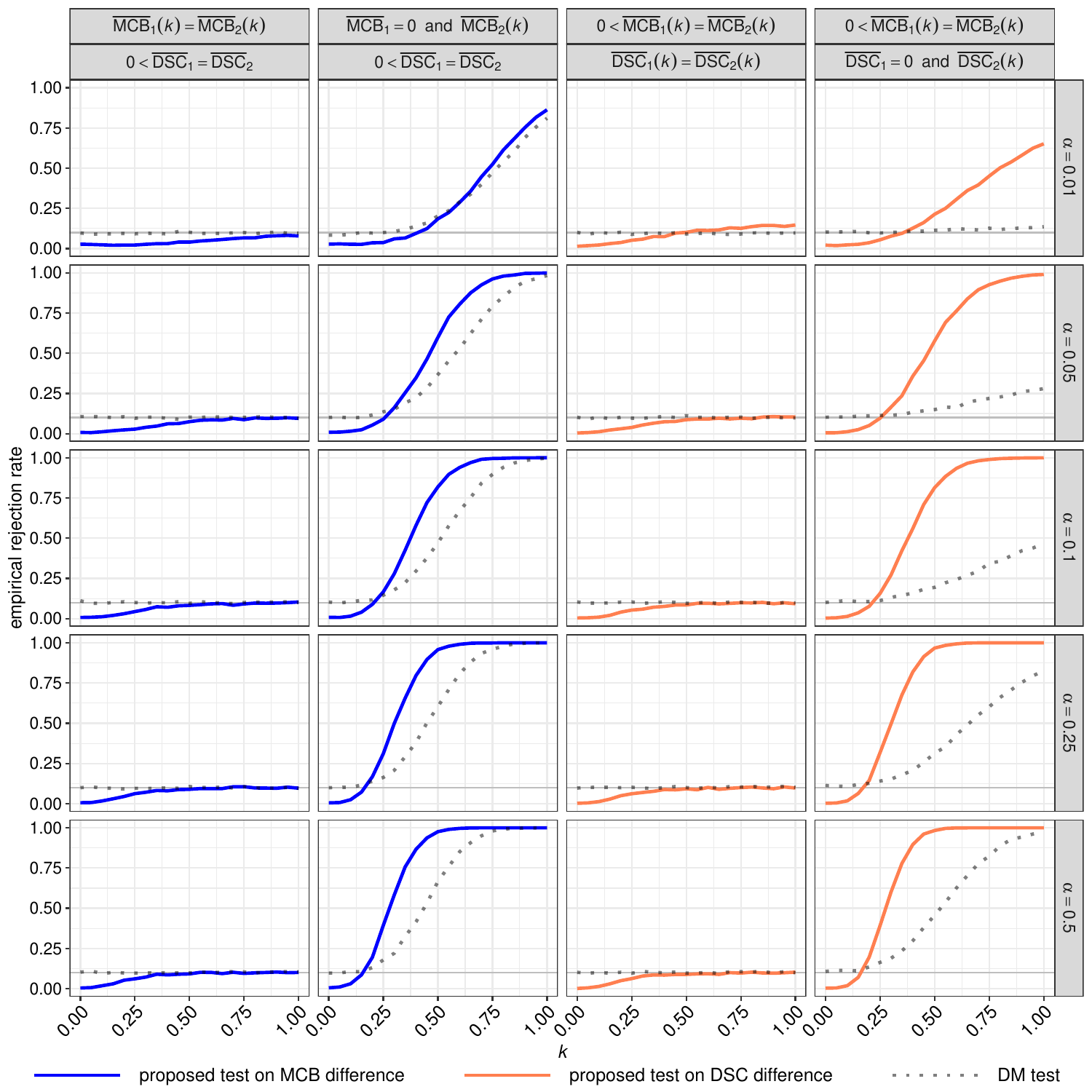}
		\caption{Empirical rejection rates for $\alpha$-quantile forecasts and the check loss for the proposed tests of equal miscalibration (in blue) in the left two columns, equal discrimination (in orange) in the right two columns, and of the DM test of equal predictive performance (in gray) in all subplots.
			We generate the data according to \eqref{eqn:SimProcess} and \eqref{eqn:SimFcastq} with the four parameterizations of Table~\ref{tab:SimQuantiles}, which depend on the parameter $k \ge 0$ that is displayed on the $x$-axes.
			We use $T = 500$ and a nominal level of $10\%$.}
		\label{fig:sim_q}
	\end{figure}

	Throughout the four plots in the left column, as well as for $k=0$ in all twelve plots, the data is generated under the respective null hypotheses of equal miscalibration in panel a), and equal discrimination in panel b).
	Here, we find correct or conservative rejection rates below the nominal level of $10\%$ for our tests, which is to be expected given the Bonferroni-type corrections from Section~\ref{sec:Testing}.
	In the left panel, our tests approach the correct size for larger values of $k$ as e.g., for the $\MCB$-test, miscalibration of both forecasts increases with $k$ such that $2 \min \{ p^0_{\MCB_1}, p^0_{\MCB_2} \}$ becomes much smaller than $p^{+}_{\MCB}$, which operates under its null hypothesis and hence generates asymptotically exact size control.
	
	The middle and right plot columns show that power increases with $k$ across all subplots, illustrating the robustness of the proposed component tests to zero or positive values of the remaining population $\MCBp_i$ and $\DSCp_i$ components. 
	In the scenarios where score differences are driven exclusively by differences in $\MCBp$ (upper panel) or $\DSCp$ (lower panel), the corresponding component test rejects more frequently than the DM test. 
	While the underlying null hypotheses differ and the rejection frequencies are therefore not directly comparable, this finding reflects that the DM test is affected by variability from both score components, whereas the component tests isolate the relevant source of predictive differences. 
	This increased sensitivity is also reflected in the results of the empirical applications in Section~\ref{sec:appl}.

	Figure~\ref{fig:sim_q} displays rejection rates for quantile forecasts at levels $\alpha \in \{0.01, 0.05, 0.1, 0.25, 0.5\}$ and within the four most informative parameterizations that allow for comparison with the DM test.
	These four scenarios correspond to the lower left and middle plots in panels a) and b) from Figure~\ref{fig:sim_param_rr}.
	The results for median forecasts with $\alpha = 0.5$ are comparable to the mean forecasting results from Figure~\ref{fig:sim_param_rr}: The tests are conservative under the null, develop power with $k > 0$ and despite their conservative size\footnote{In unreported simulations, we find that the elevated type I error of the DSC test in the third column for $\alpha = 0.01$ does not increase with $k$ and declines as the sample size grows. This pattern suggests that the distortion is attributable to finite-sample effects.}, our tests reject more frequency than the DM test.
	For more extreme quantile levels, power naturally declines.
	Importantly, especially for the DSC test, this difference in rejection frequencies becomes even more pronounced for extreme quantile levels.

	\section{Applications}
	\label{sec:appl}
	
	Here, we apply the proposed tests for equal miscalibration and discrimination in two central forecasting domains: inflation rates in Section~\ref{sec:appl_infl} and financial risks in Section~\ref{sec:appl_vola}.
	Replication code and details on the data are available at \href{https://github.com/marius-cp/SDI_replications}{https://github.com/marius-cp/SDI\_replications}.

	\subsection{Survey forecasts of U.S.\ annual CPI inflation}
	\label{sec:appl_infl}
	
	Inflation is renowned for being difficult to forecast, and survey forecasts often outperform sophisticated econometric models \citep{FW_2013}. 
	We compare two popular one-year-ahead survey forecasts for quarterly U.S.\ consumer price index (CPI) inflation: the Survey of Professional Forecasters (SPF) of the Federal Reserve Bank of Philadelphia, and the Michigan Survey of Consumers.\footnote{Data sources: SPF forecasts (\texttt{Individual\_CPI.xlsx}) from \href{https://www.philadelphiafed.org/surveys-and-data/cpi-spf}{Philadelphia Fed}; Michigan forecasts (Table 32) from their \href{https://data.sca.isr.umich.edu/tables.php}{data site}; and CPI data (\texttt{cpiQvMd.xlsx}) from \href{https://www.philadelphiafed.org/surveys-and-data/real-time-data-research/cpi}{Philadelphia Fed}.}
	The SPF is considered an expert forecast, as its panelists are professionals with extensive macroeconomic forecasting experience. 
	In contrast, the Michigan survey randomly samples U.S.\ cell phone numbers, without targeting professionals. 
	One might therefore expect the experts to outperform consumers, yet overall forecast comparisons are often inconclusive as e.g., in the empirical applications of \citet{EGJK_2016} and \citet{P_2020}.
	
	The aim of this application is to demonstrate that, even when inference on the overall scores is inconclusive, the improved test power of the individual $\MCB$ and $\DSC$ decomposition terms illustrated in Section~\ref{sec:sim} can produce significant results with clear and interpretable insights while being applicable in general multi-step ahead forecast environments.
	
	\begin{figure}[tb]
		\centering
		\includegraphics[width=\textwidth]{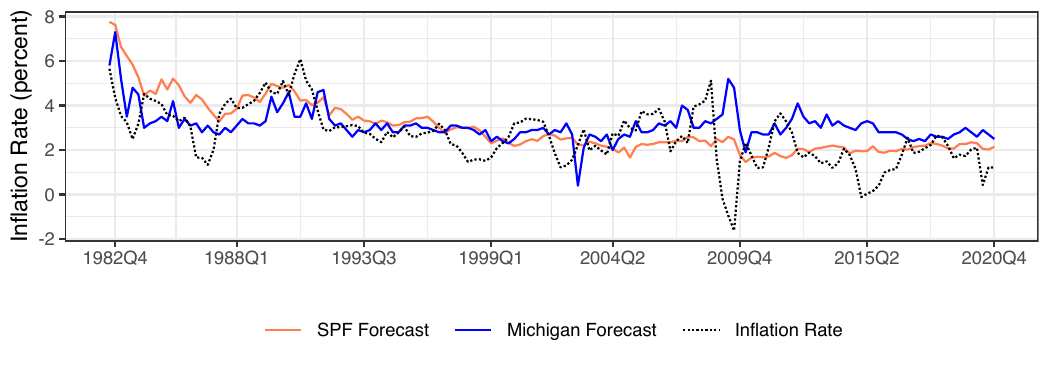}
		\caption{Quarterly year-over-year CPI inflation rate together with the SPF and Michigan forecasts. For details, see the text in Section~\ref{sec:appl_infl}.} 
		\label{fig:appl_infl_ts}
	\end{figure}

	We replicate the analyses of \citet{EGJK_2016} and \citet{P_2020}, comparing the predictive performance of the SPF and Michigan survey forecasts, and extend their approach by applying our novel tests for equal forecast calibration and discrimination.
	Our evaluation sample covers $T = 154$ quarters, from 1982:Q3 to 2020:Q4, thereby excluding the high-inflation period following the COVID-19 pandemic.
	Figure~\ref{fig:appl_infl_ts} displays the forecasts alongside the corresponding realizations.\footnote{The validity of our inference methods could be generalized to non-stationary data at the expense of a more involved notation as long as suitable CLTs and LLNs in Assumption~\ref{ass:Normality} hold.}
	
	Michigan survey respondents report their expectations of the percentage change in prices over the next 12 months. To ensure comparability, we follow \citet{EGJK_2016} and \citet{P_2020} and construct a corresponding SPF measure by averaging the one- to four-quarter-ahead consensus forecasts (where the consensus is defined as the median across respondents). Averaging is appropriate because SPF forecasts are stated in terms of annualized quarter-over-quarter inflation rates.
	As the evaluation target $Y_t$, we use the four-quarter logarithmic growth rate of the CPI, based on the 2023:Q3 vintage, over the year following the forecast date.

	\begin{figure}[tb]
		\centering
		\includegraphics[width=\textwidth]{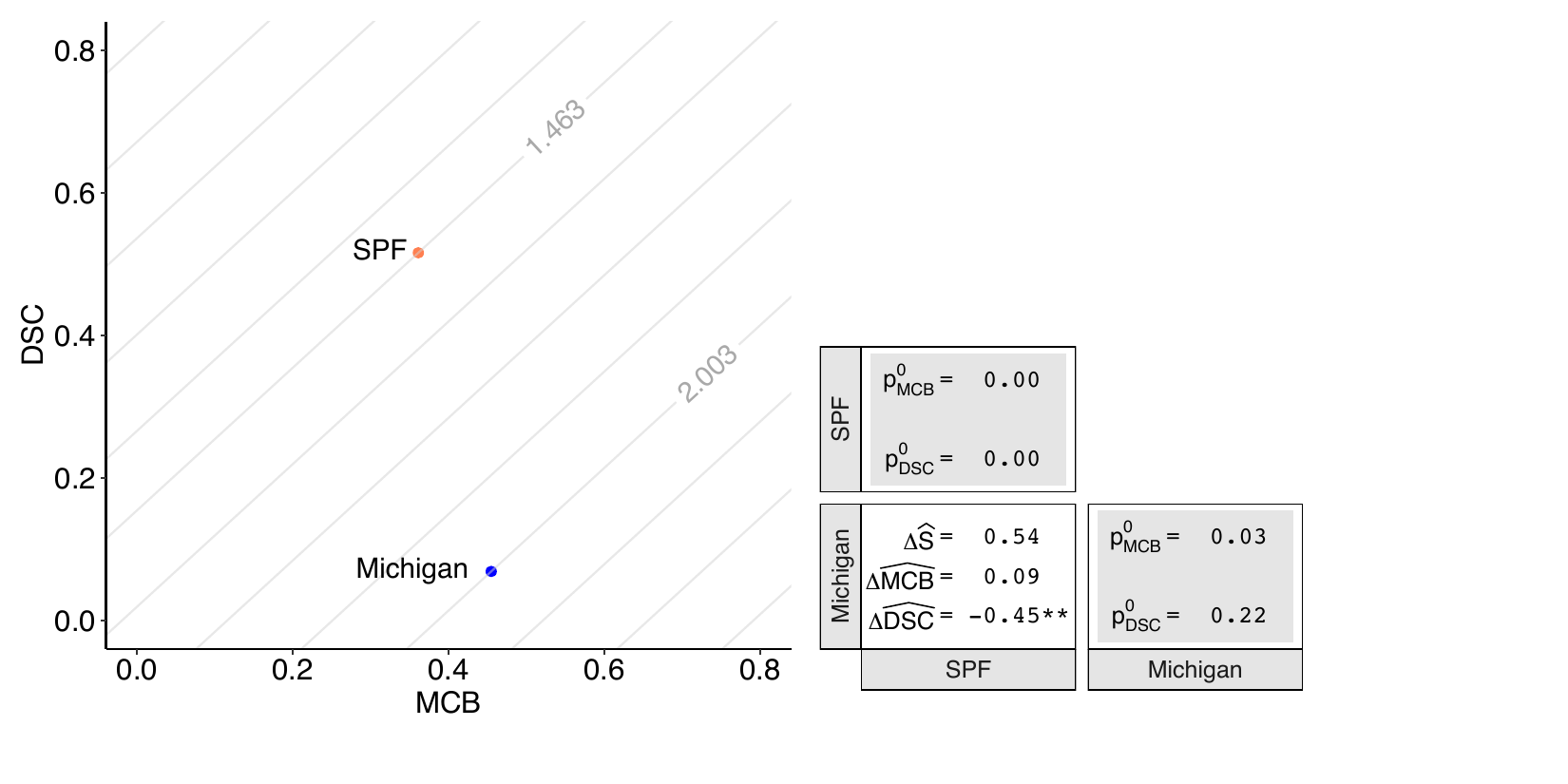}
		\caption{Inflation forecast evaluation results for the SPF and Michigan forecasts using the squared error scoring function. 
			The left ``$\MCB$--$\DSC$'' plots show the average score as iso-lines together with its decomposition into miscalibration and discrimination components for the four competing forecasts. 
			The gray boxes in the right panels display the $p$-values of tests for zero miscalibration and discrimination, respectively.
			The white box displays the average score, miscalibration and discrimination differences, together with 0 to 3 stars indicating significance at the 10\%, 5\%, and 1\% levels.}
		\label{fig:appl_infl}
	\end{figure}

	The left panel of Figure~\ref{fig:appl_infl} reports the average squared error score together with its decomposition into miscalibration and discrimination components for the two competing forecasts. 
	Calibration appears broadly similar across the two surveys. 
	In contrast, the SPF forecasts display a markedly higher degree of discrimination---a difference that is not visible in the purely scoring-function-based analyses of \citet{EGJK_2016} and \citet{P_2020}.
	The right panel corroborates their main finding: the overall difference in predictive performance is statistically insignificant, with a $p$-value of~0.284. 
	
	% pval equal MCB 0.811 
	Consistent with the visual impression from the left panel, the difference in miscalibration is not statistically significant. However, the SPF forecasts exhibit significantly stronger discrimination, with a $p$-value of~0.037. For the Michigan survey forecasts, we cannot even reject the null hypothesis of no discrimination.
	These findings suggest that professional forecasters’ predictions contain more information about future inflation outcomes, which is plausible given their field-specific expertise and experience. 
	
	Figure~\ref{fig:MZdiagnosticsInflation} in Appendix~\ref{sec:AddFigs} reports diagnostic checks for the linear recalibration regression and reveals no evidence of misspecification, supporting the use of linear recalibration for inference on score decompositions.
	
	From a methodological perspective, this application underscores the utility of our proposed inference framework. 
	Beyond partitioning forecast performance into interpretable economic components, our tests can exhibit superior statistical power compared to the DM test in these settings. 
	Furthermore, the framework is applicable in general time-series environments, including multi-step-ahead forecasting horizons.

	\subsection{Financial volatility and Value-at-Risk forecast performance}
	\label{sec:appl_vola}
	
	Here, we extend the motivational example from Section~\ref{sec:motex} and consider one-day-ahead forecasts of the variance and the $\alpha$-quantiles ($\alpha$-VaR), with $\alpha \in \{1\%, 5\%\}$, of financial log returns $R_t$.
	This application demonstrates the value of inference based on score decompositions, as it helps reconcile the partly contradictory findings that often arise between traditional backtests reviewed in Section~\ref{sec:VaRBacktestsTheory} and scoring-function-based forecast evaluations in applied work such as in Section~\ref{sec:motex}.
	
	Given the standard assumption $\E[R_t] = 0$ such that $\var(R_t) = \E[R_t^2]$, the variance forecasts can be treated as forecasts for the expectation of $R_t^2$, or as introduced below, the expectation of an unbiased RV estimator.
	Hence, we follow \citet{P_2011} and evaluate the forecasts with the squared error and QLIKE loss functions that are special cases of the Bregman class~\eqref{eqn:BregmanLoss}, considered in Proposition~\ref{prop:verification_phi}.
	For the VaR forecasts, we use the check loss from \eqref{eqn:GPLLoss}, as covered by Proposition~\ref{prop:Quantiles_Verification}.
	
	For $R_t$, we use logarithmic returns of the S\&P 500 E-mini futures, a highly liquid instrument central to practical portfolio management, which is widely used in the financial variance forecasting literature \citep{C_2009, LPS_2015, BHHP_2018}.
	E-mini contracts trade almost continuously from Sunday evening to Friday afternoon, with brief daily interruptions from 3:15–3:30 p.m.\ and 4:00–5:00 p.m.\ CT.
	A trading day is defined as the interval spanning from 5:00 p.m.\ CT of the preceding calendar day to 4:59 p.m.\ CT.
	We use intraday price observations from the data provider Refinitiv (ticker: ESc1), spanning the period from January 1, 2000, to January 1, 2022. 
	From this, we compute the daily open-to-close return $R_t$ and the 5-minute realized variance, $\RV_t := \sum_{m=1}^M r_{tm}^2$, where \(r_{tm}\) is the \(m\)-th intraday return on day \(t\) and \(M\) denotes the number of equally spaced observations per trading day.
	
	We employ several standard models to generate one-step-ahead variance forecasts. 
	We first consider the standard GARCH(1,1) model with Gaussian innovations of \citet{B_1986}.
	Second, we employ an asymmetric extension, the GJR-GARCH model of \citet{GJR_1993} with Student’s $t$ residuals, given by
	\begin{align} 
		\label{eqn:gjr}
		\begin{split}
			R_t &= \sqrt{\sigma_t^2} \, \varepsilon_t, 
			\qquad \text{with} \qquad 
			\varepsilon_t \overset{\text{iid}}{\sim} t_\nu, \\
			\sigma_t^2 &= \alpha_0 + \alpha_1 \sigma_{t-1}^2 + \alpha_2 R_{t-1}^2 + \alpha_3 R_{t-1}^2 \mathds{1}\{R_{t-1} < 0 \}.
		\end{split}
	\end{align}
	
	We further use two models that use high-frequency information through the RV estimator.
	The third model is the Heterogeneous Autoregressive (HAR) model proposed by \citet{C_2009},
	\begin{align*}
		{\operatorname{RV}_t^{1/2}}  = \beta_0 +  \beta_1 {\operatorname{RV}_{t-1}^{1/2}}
		+  \beta_2 \overline{\operatorname{RV}}_{{w},t-1}^{1/2} 
		+  \beta_3 \overline{\operatorname{RV}}_{{m},t-1}^{1/2} 
		+ \varepsilon_t,  
		\qquad \text{with} \qquad 
		\varepsilon_t \overset{\text{iid}}{\sim}  \mathcal{N}(0,\sigma^2),
	\end{align*} 
	where $\overline{\operatorname{RV}}_{w,t}^{1/2} := \frac{1}{5}\sum_{j=0}^4 {\operatorname{RV}_{t-j}^{1/2}}$ and  $\overline{\operatorname{RV}}_{m,t}^{1/2} := \frac{1}{22}\sum_{j=0}^{21} {\operatorname{RV}}_{t-j}^{1/2}$ are weekly and monthly averages, respectively.
	Fourth, we employ the MIDAS model of \citet{GSV_2006, GKZ_2016}, where ${\operatorname{RV}_t^{1/2}}$ is predicted based on a weighted average of its lagged values,
	\begin{align*}
		\operatorname{RV}_{t}^{1/2} = \phi_0 + \sum_{l=1}^{20} w_l \operatorname{RV}_{t-l}^{1/2} + \varepsilon_t, \quad \text{with} \quad \varepsilon_t \overset{\text{iid}}{\sim} \mathcal{N}(0, \sigma^2),
	\end{align*}
	where the MIDAS weights $w_l$ are parameterized through the exponential Almon polynomial.
	
	We construct VaR forecasts for these four models by multiplying the square root of the respective variance forecast with the $\alpha$-quantile of the imposed Gaussian residual distribution for all models but the GJR-GARCH, where we use a $t$-distribution.
	For quantile forecasts, we additionally use the classical Historical Simulation (HS) method, that uses the empirical $\alpha$-quantile of the preceding 250 trading days as a quantile forecast.
	
	For the first four models, we estimate the parameters by maximum likelihood using the imposed residual distributions, based on a fixed estimation window from January 1, 2000, to December 31, 2007, corresponding to approximately the first third of the sample.
	The subsequent period from January 1, 2008 to January 1, 2022 serves as  evaluation period, yielding $T=3617$ one-step ahead forecasts and corresponding realizations.
	Figure~\ref{fig:RiskApplicationTimeSeriesPlot} in Appendix~\ref{sec:AddFigs} shows the variance and VaR forecasts together with their evaluation targets.
	
	%%%%%%% Variance results

	\begin{figure}[tbp]
		\centering
		\includegraphics[width=\textwidth]{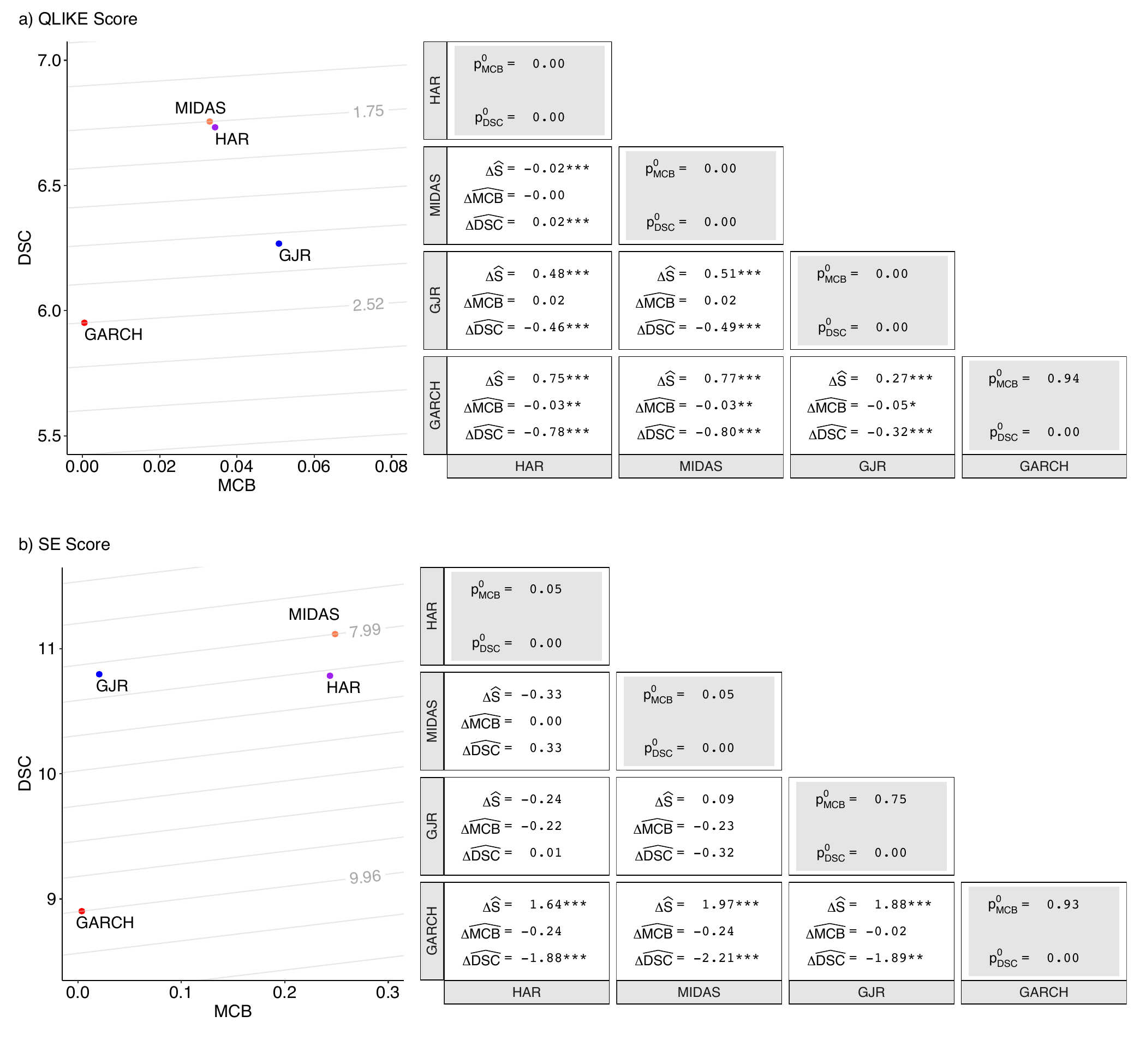}
		\caption{Variance forecast evaluation results for the E-mini futures using a) the QLIKE and b) the squared error scoring functions.
			The left ``$\MCB$--$\DSC$'' plots show the average score as iso-lines together with its decomposition into miscalibration and discrimination components for the four competing forecasts. 
			The gray boxes in the right panels display the $p$-values of tests for zero miscalibration and discrimination, respectively.
			The white boxes display the average score, miscalibration and discrimination differences, together with 0 to 3 stars indicating significance at the 10\%, 5\%, and 1\% levels.}
		\label{appl:vola}
	\end{figure}

	Figure~\ref{appl:vola} presents the evaluation of variance forecasts under the QLIKE and SE scoring functions. The left panels, following \citet{DGJV_2024}, plot the estimated miscalibration and discrimination of the four forecasting methods along the axes. Gray iso-lines represent average score levels, with observations closer to the upper-left corner indicating superior performance. 
	We find that the RV-based forecasting methods exhibit markedly stronger discrimination, which translates into better overall predictive accuracy as reflected in the average scores.
	
	The right panel of Figure \ref{appl:vola} reports the corresponding test results:
	Gray-shaded boxes display $p$-values for tests of zero miscalibration and zero discrimination. While all models show significantly positive discrimination, the null of correct calibration is rejected more frequently for the RV-based models than for the two GARCH-type specifications.
	
	The remaining off-diagonal panels present pairwise differences in average scores ($\Delta \widehat{\myS}$), miscalibration ($\Delta \MCBs$), and discrimination ($\Delta \DSCs$), with significance indicated by 0 to 3 stars corresponding to the 10\%, 5\%, and 1\% levels. 
	Focusing on the lower column, which compares the GARCH model to the three alternatives, most differences are highly significant. 
	In particular, the inferior performance of the GARCH model is primarily driven by its significantly weaker discrimination, consistent with the informational advantage of RV-based models that exploit high-frequency data.

	%%%%%%% VaR results

	\begin{table}
		\caption{Backtesting results for $1\%$ and $5\%$ VaR forecasts. The first panel displays the $p$-values of the backtests summarized in Table~\ref{tab:BacktestTheory}.
			The column ``Basel'' also displays the colors of their traffic light system.
			The last column reports the relative frequencies of hits, i.e., how often the realizations exceed the VaR forecasts.}
		\label{tab:BacktestResults}    
		\centering
		\begin{tabular}[t]{llrlrrrrrr}
			\toprule
			\multicolumn{2}{c}{ } & \multicolumn{7}{c}{p-values backtests} & \\
			\cmidrule(l{3pt}r{3pt}){3-9} 
			& Model & UC & Basel & CC & NZ & VQR & DQ & DQX & hit freq. \\
			\midrule
			\multirow{5}{*}{$1\%$-VaR} & HS & 0.137 & \cellcolor{Goldenrod!20}{\textcolor{black}{0.024}} & 0.149 & 0.124 & 0.321 & 0.002 & 0.004 & $1.3\%$\\
			& GARCH & 0.001 & \cellcolor{Red!15}{\textcolor{black}{0.000}} & 0.002 & 0.000 & 0.006 & 0.023 & 0.016 & $1.8\%$\\
			& GJR & 0.000 & \cellcolor{Red!15}{\textcolor{black}{0.000}} & 0.000 & 0.000 & 0.000 & 0.002 & 0.001 & $2.0\%$\\
			& HAR & 0.000 & \cellcolor{Red!15}{\textcolor{black}{0.000}} & 0.000 & 0.000 & 0.003 & 0.000 & 0.000 & $2.3\%$\\
			& MIDAS & 0.000 & \cellcolor{Red!15}{\textcolor{black}{0.000}} & 0.000 & 0.000 & 0.000 & 0.000 & 0.000 & $2.3\%$\\
			\addlinespace\hline\addlinespace
			
			\addlinespace
			\multirow{5}{*}{$5\%$-VaR} & HS & 0.700 & \cellcolor{ForestGreen!15}{\textcolor{black}{0.303}} & 0.006 & 0.416 & 0.013 & 0.000 & 0.000 & 5.2\%\\
			& GARCH & 0.266 & \cellcolor{ForestGreen!15}{\textcolor{black}{0.117}} & 0.543 & 0.140 & 0.452 & 0.066 & 0.052 & 5.4\%\\
			& GJR & 0.161 & \cellcolor{ForestGreen!15}{\textcolor{black}{0.069}} & 0.313 & 0.073 & 0.360 & 0.207 & 0.534 & 5.5\%\\
			& HAR & 0.298 & \cellcolor{ForestGreen!15}{\textcolor{black}{0.132}} & 0.588 & 0.159 & 0.518 & 0.190 & 0.132 & 5.4\%\\
			& MIDAS & 0.266 & \cellcolor{ForestGreen!15}{\textcolor{black}{0.117}} & 0.544 & 0.145 & 0.568 & 0.183 & 0.134 & 5.4\%\\
			\bottomrule
		\end{tabular}
	\end{table}

	Turning to the VaR evaluation results, Table~\ref{tab:BacktestResults} reports the $p$-values of seven standard VaR backtests---described in Section~\ref{sec:VaRBacktestsTheory} and Table~\ref{tab:BacktestTheory}---along with the unconditional hit frequency in the final column.
	Strikingly, particularly at the 1\% level, the simple HS model appears to be the best calibrated, whereas all competing models are consistently rejected across the backtests. 
	Although evidence at the 5\% level and for conditional calibration backtests is weak or inconclusive, the 1\% level remains the most relevant for risk management \citep[p.~81]{B_2019}.
	
	\begin{figure}[tbp]
		\centering
		\includegraphics[width=\textwidth]{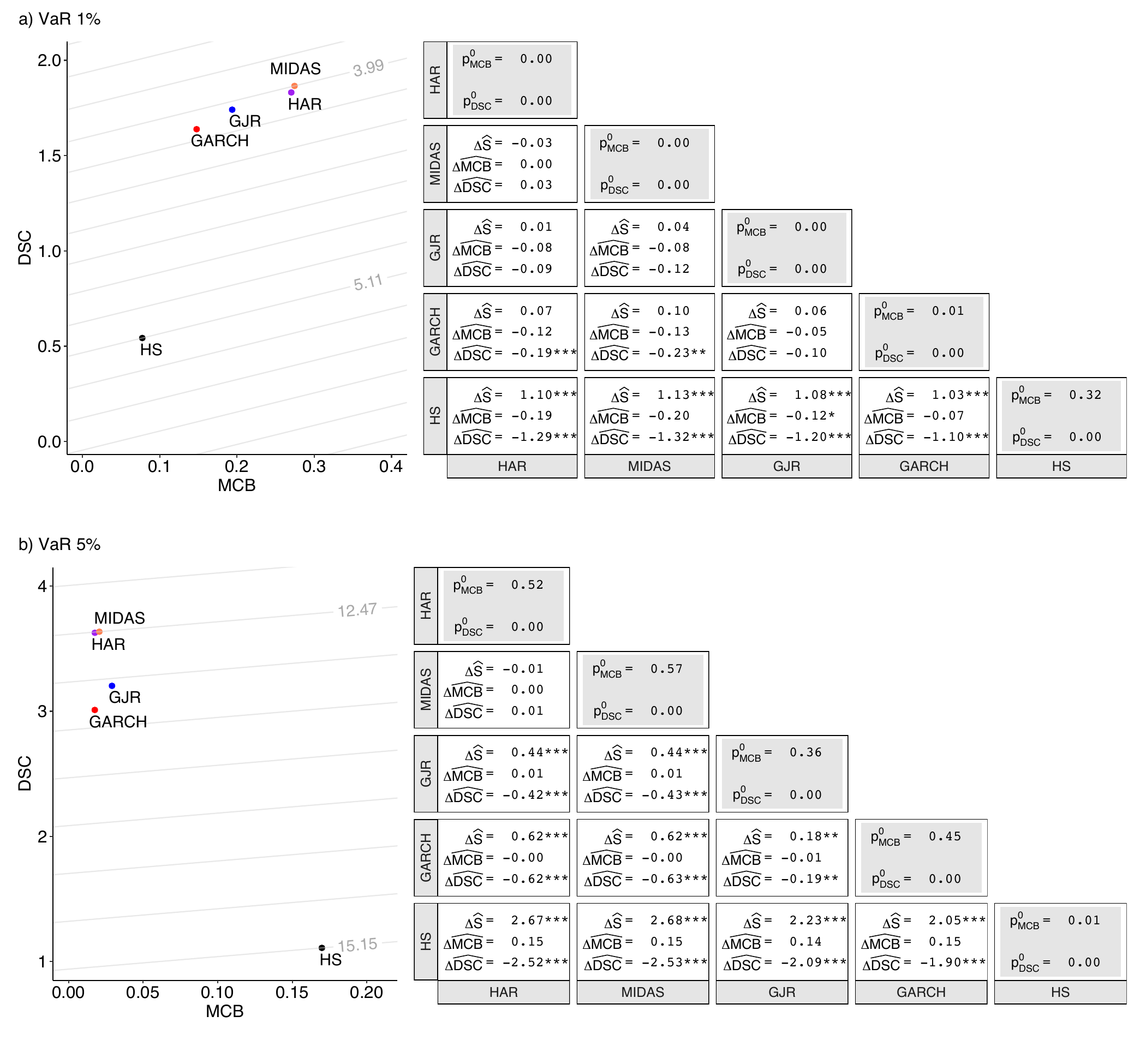}
		\caption{VaR forecast evaluation results for the E-mini futures at quantile levels a) $1\%$ and b) $5\%$ for the check loss function and its decomposition terms.
			For details, see the caption of Figure~\ref{appl:vola}.}
		\label{fig:appl_var}
	\end{figure}
	
	Score decompositions with their associated inference help explain the seemingly superior performance of the HS model. Figure~\ref{fig:appl_var} presents these decompositions, analogous to Figure~\ref{appl:vola}, using the check loss function for 1\% and 5\% VaR forecasts.
	
	At the 1\% level, the $\MCB$–$\DSC$ plot in Figure~\ref{fig:appl_var} shows that, although the HS model exhibits the best calibration, it has by far the weakest discrimination. 
	This results in clearly inferior predictive performance according to the check loss, as further confirmed by the strongly significant results in the right panel.
	These findings resolve the seemingly contradictory results highlighted in the motivational example from Section~\ref{sec:motex} and underscore the risk of evaluating forecasts solely on calibration. 
	As such, they point to a potentially serious shortcoming in current banking regulation.
	
	At the 5\% level, the HS model shows markedly poorer \emph{conditional} calibration, as reflected both in the backtest results in Table~\ref{tab:BacktestResults} and in the score decomposition in Figure~\ref{fig:appl_var}. 
	The relative ordering of the other four models, however, is similar to that observed at the 1\% level, with generally stronger significance, which can be attributed to lower noise at less extreme probability levels.
	
	A general pattern evident in the $\MCB$–$\DSC$ plot at the 1\% level---and, to a lesser extent, at the 5\% level and for the variance forecasts in Figure~\ref{appl:vola}---is that models incorporating more information can convert it into higher discrimination. 
	However, this often comes at the cost of increased miscalibration.
	Notably, achieving auto-calibration conditional on $\mW_{it} = (1, X_{it})$ is relatively easy when forecasts $X_{it}$ are nearly constant, but much more difficult for highly discriminating forecasts that vary substantially over time. 
	This observation calls into question the common practice of evaluating risk measure forecasts primarily through MZ-type backtests \citep{Gaglianone2011, Guler2017, BD_2020}, which focus on auto-calibration.
	
	Figures~\ref{fig:MZdiagnosticsVola}--\ref{fig:MZdiagnosticsVaR} in Appendix~\ref{sec:AddFigs} present diagnostic checks for the linear recalibration regressions underlying the score decompositions of the variance and VaR forecasts. 
	The pointwise confidence bands largely cover the null specification, providing no evidence against correct model specification. 
	This supports the use of linear recalibration for score decompositions in these applications, combining robustness and adequate fit with the ability to conduct inference, which is essential in the settings considered.

	\section{Conclusion}
	\label{sec:conclusion}
	
	This paper has established a comprehensive inferential framework for score decompositions, enabling a more granular assessment of predictive performance through the lenses of miscalibration, discrimination, and uncertainty. 
	By utilizing a linear recalibration technique, we have shown that it is possible to derive misspecification-robust inference for a wide range of point forecasts---including those governed by non-smooth loss functions---while maintaining attractive finite sample non-negativity conditions and a formal bridge to the classical \citet{MZ_1969} regression. 
	Our results demonstrate that this decomposition approach enriches the informational content of traditional predictive ability tests.
	Through our empirical applications, we have highlighted how these methods can reveal significant latent differences in forecast discrimination and, crucially, identify systemic deficiencies in financial risk backtesting that current banking regulations fail to capture.
	
	The linear recalibration specification is exact under the null hypotheses of perfect calibration or zero discrimination and provides a stable and flexible approximation more generally. 
	It naturally accommodates additional recalibration variables, facilitating extensions to, for example, cross-calibration \citep{SZ_2017} and links to VaR backtesting and macroeconomic survey forecasts \citep{CG_2015}. 
	Compared with nonparametric alternatives, the linear specification is also less susceptible to overfitting, particularly in the small-to-moderate sample sizes considered here. Moreover, the appendix reports no empirical evidence against the linearity assumption in our applications.
	
	Nonetheless, particularly in data-rich environments, semiparametric extensions that estimate the recalibration curve nonparametrically---using kernel, spline, or isotonic regression---represent promising avenues for future research. 
	Such developments could build on the seminal work of \citet{newey1994asymptotic} regarding two-step semiparametric inference and more recent advances in isotonic plug-in estimation, such as \citet{Xu_IsoPlugIn_2022}.
	
	We treat the possibly model-based forecasts as the finalized output of a specific sequence of modeling choices, including parameter estimation. 
	Consequently, we formulate our hypotheses based on the realized quality of these forecasts rather than on the properties of the underlying models themselves. 
	In contrast, the framework of \citet{West1996asymptotic} assesses whether two forecasting \emph{models} based on their pseudo-true parameters would exhibit equal predictive accuracy, hence requiring that the test statistic's standardization explicitly accounts for the asymptotic noise introduced by parameter estimation.
	This extension can be integrated into our framework by redefining the population $\MCBp$ and $\DSCp$ components in a model-based environment and replacing the central limit theorem in Assumption~\ref{ass:Normality}~\ref{item:CLT} with the asymptotic approximations of \citet{West1996asymptotic}, or with the numerous refinements and generalizations proposed in subsequent work.
	
	Generalizations of our framework to other elicitable functionals, such as expectiles, are straightforward. 
	However, functionals that are merely jointly or multi-objective elicitable following \citet{FZ_2016} and \citet{FisslerHoga2024}---such as the variance (when the mean is unknown), the Expected Shortfall, or the systemic risk measures Co-Value-at-Risk and Marginal Expected Shortfall---require additional care.
	In these cases, the underlying recalibration regressions necessitate the estimation of conditional models for the associated nuisance components, such as the mean or the VaR \citep{patton2019dynamic, BD_2020, dimitriadis2025regressions}.
	The propagation of estimation risk from these nuisance quantities into the final score decomposition represents a non-trivial extension of the current inferential framework.
	
	In conclusion, the contributions of this paper extend beyond their immediate applications in econometrics and finance. The principles of score decomposition and linear recalibration constitute a versatile toolkit for any domain where predictive accuracy is paramount, from meteorology to machine learning. The framework proposed herein represents a fundamental shift in perspective: from merely asking \emph{which} forecast is better to diagnosing \emph{why} it is better. As the complexity of our models and the richness of our data continue to grow, this diagnostic approach will be essential for robust and reliable decision-making.

	\section*{Acknowledgments}
	
	Timo Dimitriadis acknowledges funding by the German Research Foundation (DFG) through the projects 502572912 and 568876076.
	We gratefully acknowledge the Hohenheim Datalab (DALAHO) for providing access to Refinitiv TickHistory.
	We thank all seminar and conference participants who provided feedback on earlier versions of this paper. 
	In particular, we are grateful to Sam Allen, Tillman Gneiting, Daniel Gutknecht, Alexander Jordan, Robert Jung, Fabian Krüger, Andrew Patton, Winfried Pohlmeier, Johannes Resin, Karsten Schweikert, and Johanna Ziegel for their valuable comments and suggestions.

	\section*{Declaration of generative AI use}
	
	During the preparation of this work, the authors used the AI tools ChatGPT and Gemini for language editing, code generation, and assistance in setting up the simulation process for quantile forecasts described in Appendix~\ref{app:QuantileSimDGP}.
	After using these tools, the authors reviewed and edited the content as needed and take full responsibility for the content of the article.

	%=============================================================
	%=============================================================
	% biblio
	%=============================================================
	%=============================================================
	
	\small 
	\singlespacing
	\setlength{\bibsep}{3pt} 
	
	\bibliographystyle{apalike}
	\addcontentsline{toc}{section}{References} 
	\bibliography{biblio}

	\pagebreak 
	\appendix
	
	\huge 
	\noindent
	\textbf{Supplementary Material}
	\normalsize
	\bigskip 
	\bigskip 
	
	\numberwithin{figure}{section}
	\numberwithin{table}{section}

	\noindent 
	The supplementary material contains all proofs in Appendix~\ref{sec:Proofs},
	a discussion of related score decompositions in Appendix~\ref{sec:connecttolit},
	details on the quantile simulation DGP in Appendix~\ref{app:QuantileSimDGP},
	and additional figures for the simulations and applications in Appendix~\ref{sec:AddFigs}.
	The references refer to the bibliography provided in the main paper.

	\section{Proofs}
	\label{sec:Proofs}
	
	\begin{proof}[\bf Proof of Theorem \ref{thm:PositivComponenets}.]
		For item \ref{item:MCBs>0}, notice that by the definition of the M-estimator in \eqref{eqn:Mestimator}, we have that
		\begin{align*}
			\MCBs_{iT} 
			= \Ss_{iT} - \Ss^{\mathsf{C}}_{iT}
			&= \frac{1}{T} \sum_{t=1}^T \myS \big(X_{it} , Y_t \big) - \frac{1}{T} \sum_{t=1}^T \myS \big(\mW_{it}^\top \widehat{\mtheta}_{iT} , Y_t \big) \\
			&= \frac{1}{T} \sum_{t=1}^T \myS \big(X_{it} , Y_t \big) -  \underset{\mtheta \in \mTheta}{\min} \; \frac{1}{T} \sum_{t=1}^T \myS \big(\mW_{it}^\top \mtheta , Y_t \big) \ge 0,
		\end{align*}
		since $\mW_{it}$ is assumed to contain an intercept and the forecast $X_{it}$.
		If $X_{it} = \mW_{it}^\top \widehat{\mtheta}_{iT}$ almost surely, $\MCBs_{iT} = 0$ is immediate.
		
		For item \ref{item:MCBs=0}, let $X_{it} \not= \mW_{it}^\top \widehat{\mtheta}_{iT}$ for some $t \in \{1,\dots,T\}$ with positive probability. 
		Then, we get that
		\begin{align*}
			\frac{1}{T} \sum_{t=1}^T \myS \big(X_{it} , Y_t \big) 
			> \frac{1}{T} \sum_{t=1}^T \myS \big( \mW_{it}^\top \widehat{\mtheta}_{iT} , Y_t \big) 
			= \underset{\mtheta \in \mTheta}{\min} \; \frac{1}{T} \sum_{t=1}^T \myS \big(\mW_{it}^\top \mtheta , Y_t \big),
		\end{align*}
		with positive probability, where the inequality holds as the latter minimum is assumed to be unique and $(0,1,0,\dots)^\top \in \mTheta$, for which $\mW_{it}^\top (0,1,0,\dots)^\top = X_{it}$.
		Thus, $\MCBs_{iT} > 0$ with positive probability follows.
		
		For item \ref{item:DSCs>0}, first notice that the empirical distribution of the sample $Y_1, \dots, Y_T$ is contained in $\mathcal{P}$ as the convex combination of Dirac measures at $Y_1, \dots, Y_T$, such that the strict consistency of $\myS$ implies that the unconditional functional $\widehat{r}_{T}$ coincides with the M-estimator  $\underset{r \in \mathsf{A}}{\argmin} \; \frac{1}{T} \sum_{t=1}^T \myS \big(r, Y_t \big)$.
		Hence, we have that 
		\begin{align*}
			\DSCs_{iT} 
			= \Ss^{\mathsf{R}}_{iT} - \Ss^{\mathsf{C}}_{iT}
			&= \frac{1}{T} \sum_{t=1}^T \myS \big(\widehat{r}_{T} , Y_t \big) - \frac{1}{T} \sum_{t=1}^T \myS \big(\mW_{it}^\top \widehat{\mtheta}_{iT} , Y_t \big) \\
			&= \underset{r \in \mathsf{A}}{\min} \; \frac{1}{T} \sum_{t=1}^T \myS \big(r, Y_t \big) 
			- \underset{\mtheta \in \mTheta}{\min} \; \frac{1}{T} \sum_{t=1}^T \myS \big(\mW_{it}^\top \mtheta , Y_t \big) \ge 0,
		\end{align*}
		as $\mW_{it}$ is assumed to contain an intercept, which nests the first minimization over all $r \in \mathsf{A}$.
		If $\widehat{r}_T = \mW_{it}^\top \widehat{\mtheta}_{iT}$ almost surely, $\DSCs_{iT} = 0$ is immediate.
		
		For item \ref{item:DSCs=0}, assume that $\widehat{r}_{T} \not= \mW_{it}^\top \widehat{\mtheta}_{iT}$ for some $t \in \{1,\dots,T\}$ with positive probability. 
		Then,
		\begin{align*}
			\underset{r \in \mathsf{A}}{\min} \; \frac{1}{T} \sum_{t=1}^T \myS \big(r, Y_t \big) = \frac{1}{T} \sum_{t=1}^T \myS \big(\widehat{r}_{T}, Y_t \big) 
			> \frac{1}{T} \sum_{t=1}^T \myS \big( \mW_{it}^\top \widehat{\mtheta}_{iT} , Y_t \big) 
			= \underset{\mtheta \in \mTheta}{\min} \; \frac{1}{T} \sum_{t=1}^T \myS \big(\mW_{it}^\top \mtheta , Y_t \big),
		\end{align*}
		with positive probability, where the strict inequality holds as the latter minimum is assumed to be unique and  $(\widehat{r}_T,0,\dots)^\top \in \mTheta$, for which $\mW_{it}^\top (\widehat{r}_T,0,\dots)^\top = \widehat{r}_T$. 
		Thus, $\DSCs_{iT} > 0$ with positive probability follows, which concludes the proof.
	\end{proof}

	\begin{proof}[\bf Proof Theorem \ref{thm:joint_nomal_mcb_dsc}] 
		We start by decomposing
		\begin{align*}
			\MCBs_{iT} - \MCBp_{i} = A_{iT} - B_{iT} - C_{iT} 
			\qquad \text{ and } \qquad 
			{\DSCs}_{iT} - \DSCp_{i} = - B_{iT} - C_{iT} + D_{T} + E_{T} 
		\end{align*}
		where 
		\begin{align*}
			A_{iT} &:= \frac{1}{T} \sum_{t=1}^T a_{it} 
			:= \frac{1}{T} \sum_{t=1}^T \myS(X_{it},Y_t) - \E [ \myS(X_{it},Y_t)] \\
			B_{iT} &:= \frac{1}{T} \sum_{t=1}^T b_{itT} 
			:=\frac{1}{T}  \sum_{t=1}^T \myS (\mW_{it}^\top \widehat{\mtheta}_{iT} , Y_t ) - \myS( \mW_{it}^\top \bar\mtheta_i,Y_t) \\
			C_{iT} &:= \frac{1}{T} \sum_{t=1}^T c_{it} 
			:=\frac{1}{T}  \sum_{t=1}^T \myS( \mW_{it}^\top \bar\mtheta_i,Y_t) -  \E [ \myS(\mW_{it}^\top \bar\mtheta_i, Y_t)] \\
			D_{T} &:= \frac{1}{T} \sum_{t=1}^T d_{tT} 
			:= \frac{1}{T} \sum_{t=1}^T \myS(\hat r_T, Y_t) - \myS(\bar r,Y_t)  \\
			E_{T} &:= \frac{1}{T} \sum_{t=1}^T e_{t} 
			:=\frac{1}{T} \sum_{t=1}^T \myS(\bar r,Y_t) - \E[ \myS(\bar r,Y_t)].
		\end{align*}
		Here, $A_{iT}$, $C_{iT}$ and $E_T$ capture the difference between sample and population versions of the scores of the original forecasts, the (population) recalibrated and the (population) reference forecasts, respectively.
		The terms $B_{iT}$ and $D_T$ capture the estimation effect of the recalibrated and reference forecast, respectively.
		
		We start to show that the estimation effects captured in $\sqrt{T} B_{iT}$ (and similarly, in $\sqrt{T} D_T$) vanish in probability.
		For this,  we define for all $\mtheta \in \mTheta$,
		\begin{align}
			\label{eqn:DefQs}
			\widehat{Q}_{iT}(\mtheta) := \frac{1}{T} \sum_{t=1}^T \myS( \mW_{it}^\top \mtheta, Y_t)
			\qquad \text{ and } \qquad 
			\overline{Q}_{iT}(\mtheta) := \frac{1}{T} \sum_{t=1}^T \E \big[ \myS( \mW_{it}^\top \mtheta, Y_t) \big],
		\end{align}
		such that $\nu_{iT}(\mtheta) = \sqrt{T} \big( \widehat{Q}_{iT}(\mtheta) - \overline{Q}_{iT}(\mtheta) \big)$ for $\nu_{iT}(\mtheta)$ defined in~\eqref{eqn:EmpProcessSE}.
		
		Then, we employ a mean-value theorem in the third equality below to get for some (possibly random) $\widetilde{\mtheta}_{iT}$ that is on the line between $\widehat{\mtheta}_{iT}$ and $\bar{\mtheta}_{i}$ that
		\begin{align}
			\begin{aligned}
				\label{eqn:hatQ_orders}
				\sqrt{T} B_{iT} 
				&= \sqrt{T} \widehat{Q}_{iT}(\widehat{\mtheta}_{iT}) - \sqrt{T} \widehat{Q}_{iT}(\bar{\mtheta}_{i}) \\
				&= \sqrt{T} \left( \overline{Q}_{iT}(\widehat{\mtheta}_{iT}) - \overline{Q}_{iT}(\bar{\mtheta}_{i}) \right) + \left( \nu_{iT}(\widehat{\mtheta}_{iT}) - \nu_{iT}(\bar{\mtheta}_{i}) \right) \\
				&= \nabla_{\mtheta} \overline{Q}_{iT}(\widetilde{\mtheta}_{iT}) \cdot \sqrt{T} \big(\widehat{\mtheta}_{iT} -  \bar\mtheta_i \big)  + \left( \nu_{iT}(\widehat{\mtheta}_{iT}) - \nu_{iT}(\bar{\mtheta}_{i}) \right) \\
				&= o_\P(1) \cdot \mathcal{O}_\P(1) +  o_\P(1) =  o_\P(1).
			\end{aligned}
		\end{align}
		We now provide more details on the derivation of the order terms in the last line of \eqref{eqn:hatQ_orders}.
		For the last summand, we use the stochastic equicontinuity imposed in Assumption~\ref{ass:Normality}~\ref{item:SE}:
		For any $\eta > 0$ and $\varepsilon > 0$, there exists a $\delta > 0$ such that
		\begin{align*}
			&\underset{T \to \infty}{\lim \sup} \; \P \left( \big| \nu_{iT}(\widehat{\mtheta}_{iT}) - \nu_{iT}(\bar{\mtheta}_{i}) \big| > \eta \right) \\
			&\qquad \le \underset{T \to \infty}{\lim \sup} \; \P \left( \big| \nu_{iT}(\widehat{\mtheta}_{iT}) - \nu_{iT}(\bar{\mtheta}_{i}) \big| > \eta, \;\; \Vert \widehat{\mtheta}_{iT} -  \bar\mtheta_i \Vert \le \delta \right)
			+ \underset{T \to \infty}{\lim \sup} \; \P \left(  \Vert \widehat{\mtheta}_{iT} -  \bar\mtheta_i \Vert > \delta \right) \\
			&\qquad \le \underset{T \to \infty}{\lim \sup} \; \P \left( \sup_{ \{\mtheta \in \mTheta: \, \Vert \mtheta - \bar \mtheta_i \Vert \le \delta\} }    \big| \nu_{iT}(\mtheta) - \nu_{iT}(\bar \mtheta_i) \big| > \eta \right)  + o(1) \\
			&\qquad < \varepsilon,
		\end{align*}
		where we have used that $\widehat{\mtheta}_{iT}$ is a consistent estimator for $\bar{\mtheta}_{i}$.
		
		For the first order term in the last line of \eqref{eqn:hatQ_orders}, notice that  $\widetilde{\mtheta}_{iT} \toP  \bar{\mtheta}_{i}$ as $\widetilde{\mtheta}_{iT}$ is between $\bar{\mtheta}_{i}$ and the consistent estimator $\widehat{\mtheta}_{iT}$.
		As by Assumption~\ref{ass:Normality}~\ref{item:ContExpScore}, the map $\mtheta \mapsto g(\mtheta) = \nabla_{\mtheta} \E \big[ \myS( \mW_{it}^\top \mtheta, Y_t) \big]$ is continuous.
		The Continuous Mapping Theorem hence implies that
		\begin{align*}
			\nabla_{\mtheta} \overline{Q}_{iT}(\widetilde{\mtheta}_{iT}) 
			= g(\widetilde{\mtheta}_{iT})
			= \nabla_{\mtheta} \E \big[ \myS( \mW_{it}^\top \bar{\mtheta}_{i}, Y_t) \big]  + o_\P(1)
			= o_\P(1),
		\end{align*}
		as 
		$\nabla_{\mtheta} \E \big[ \myS( \mW_{it}^\top \bar{\mtheta}_{i}, Y_t) \big] = 0$ by \eqref{eqn:PseudoTrue} in Assumption~\ref{ass:TrueLinearSpecification}.
		An equivalent reasoning shows that $\sqrt{T} D_{T} = o_\P(1)$ (by simply setting $\mW_{it} = 1$ in \eqref{eqn:DefQs} and the following.)
		
		Hence, using $\sqrt{T} B_{iT} = o_\P(1)$ and $\sqrt{T} D_{T} = o_\P(1)$, we can write
		\begin{align*}
			\sqrt{T}
			\begin{pmatrix}
				{\MCBs}_{1T} - \MCBp_{1} \\
				{\DSCs}_{1T} - \DSCp_{1} \\
				{\MCBs}_{2T} - \MCBp_{2} \\
				{\DSCs}_{2T} - \DSCp_{2}
			\end{pmatrix}  
			=
			\begin{pmatrix}
				1 & -1  & 0 &  0  & 0 \\
				0 & -1  & 0 &  0  & 1 \\
				0 &  0  & 1 & -1  & 0  \\
				0 &  0  & 0 & -1  & 1  
			\end{pmatrix}
			T^{-1/2} \sum_{t=1}^T \big(a_{1t}, c_{1t}, a_{2t}, c_{2t}, e_{t} \big)^\top + o_\P(1),
		\end{align*}
		and as
		\begin{align*}
			T^{-1/2} \mOmega_T^{-1/2} \sum_{t=1}^T \big(a_{1t}, c_{1t}, a_{2t}, c_{2t}, e_{t} \big)^\top 
			\toD \mathcal{N}(0, I_5)
		\end{align*}
		satisfies a CLT by assumption, the result of the theorem follows.
	\end{proof}

	\begin{proof}[\bf Proof of Theorem \ref{thm:MCBandDSC0}]
		We start to show the distribution of $T\, \MCBs_{iT}$ under $\MCBp_{i} = 0$.
		For this, we have that
		\begin{align}
			\label{eqn:MCB0_equality}
			\E \big[ \myS(X_{it}, Y_t) \big] =  \E \big[ \myS(\mW_{it}^\top \bar \mtheta_i, Y_t)  \big] \qquad \forall t \in \N.
		\end{align}
		Hence, by the uniqueness in \eqref{eqn:PseudoTrue}, this implies that $X_{it} = \mW_{it}^\top \bar \mtheta_i$ a.s. (e.g., with similar arguments as in the proof of part \ref{item:MCBs=0} of Theorem \ref{thm:PositivComponenets}).
		
		Then, by using the definitions from \eqref{eqn:DefQs} and arguments as in \eqref{eqn:hatQ_orders}, we get that
		\begin{align}
			\begin{aligned}
				\label{eqn:BiT_Expansion}
				T \left( \widehat{\MCB_i}_T - \MCBp_{i}  \right)
				&=  \sum_{t=1}^T \left( \myS \big( X_{it}, Y_t \big) - \myS \big( \mW_{it}^\top \widehat{\mtheta}_{iT}, Y_t \big) \right) \\
				&= - \sum_{t=1}^T \left( \myS \big( \mW_{it}^\top \widehat{\mtheta}_{iT}, Y_t \big) - \myS \big( \mW_{it}^\top \bar{\mtheta}_{i}, Y_t \big) \right) \\
				% &= \widehat{Q}_{iT}(\widehat{\mtheta}_{iT}) - \widehat{Q}_{iT}(\bar{\mtheta}_{i}) \\
				&= - \sum_{t=1}^T \myS'(\mW_{it}^\top \bar \mtheta_i, Y_t)  \big( \mW_{it}^\top \widehat{\mtheta}_{iT}  - \mW_{it}^\top \bar \mtheta_i \big) \\
				&\qquad - \frac{1}{2} \sum_{t=1}^T  \myS''(\mW_{it}^\top \bar \mtheta_i, Y_t)  \big(  \mW_{it}^\top \widehat{\mtheta}_{iT} -  \mW_{it}^\top \bar \mtheta_i \big)^2 \\
				&\qquad - \frac{1}{6} \sum_{t=1}^T \myS'''(\mW_{it}^\top  \widetilde{\mtheta}_{iT}, Y_t)  \big( \mW_{it}^\top \widehat{\mtheta}_{iT} - \mW_{it}^\top \bar \mtheta_i \big)^3 \\
				&=: - T B_{iT}^{(1)} - T B_{iT}^{(2)} - T B_{iT}^{(3)},
			\end{aligned} 
		\end{align}
		for some random variable $\widetilde{\mtheta}_{iT}$ between $\widehat{\mtheta}_{iT}$ and $\bar \mtheta_i$ by a second-order Taylor expansion and a mean-value theorem.
		
		In the following, we show that $T B_{iT}^{(3)} = o_\P(1)$, whereas $T B_{iT}^{(1)} = \mathcal{O}_\P(1)$ and $T B_{iT}^{(2)} = \mathcal{O}_\P(1)$ contribute to the asymptotic distribution of $T \big( \widehat{\MCB}_{iT} - \MCBp_{i}  \big)$.
		
		By Assumption~\ref{ass:AsymptoticsDSC0}~\ref{item:BahadurMestJoint}, equation~\eqref{eq:MCB0_mestimators}, it holds that
		\begin{align}
			\label{eq:MestExpansion}
			\sqrt{T} \big(\widehat{\mtheta}_{iT} - \bar \mtheta_i \big) 
			= - \left( \frac{1}{T} \sum_{t=1}^T \myS''(\mW_{it}^\top \bar \mtheta_i, Y_t)  \mW_{it} \mW_{it}^\top\right)^{-1}  \left( T^{-1/2} \sum_{t=1}^T \myS'(\mW_{it}^\top \bar \mtheta_i, Y_t) \mW_{it} \right) + o_\P(1) .
		\end{align}
		Hence, for $B_{iT}^{(1)}$, by using \eqref{eq:MestExpansion}, we get that
		\begin{align*}
			- T B_{iT}^{(1)} 
			&= - T^{-1/2} \sum_{t=1}^T \myS'(\mW_{it}^\top \bar \mtheta_i, Y_t) \mW_{it}^\top \times  \sqrt{T} \big(  \widehat{\mtheta}_{iT} - \bar \mtheta_i \big) \\
			&= \left( T^{-1/2} \sum_{t=1}^T \myS'(\mW_{it}^\top \bar \mtheta_i, Y_t) \mW_{it} \right)^\top \times \left( \frac{1}{T} \sum_{t=1}^T \myS''(\mW_{it}^\top \bar \mtheta_i, Y_t)  \mW_{it} \mW_{it}^\top \right)^{-1}  \\
			&\qquad \times \left( T^{-1/2} \sum_{t=1}^T \myS'(\mW_{it}^\top \bar \mtheta_i, Y_t) \mW_{it} \right) + o_\P(1).
		\end{align*}
		
		\allowdisplaybreaks
		Similarly, for $B_{iT}^{(2)}$, using \eqref{eq:MestExpansion} twice, we get that
		\begin{align*}
			- T B_{iT}^{(2)} 
			&= - \frac{1}{2} \sum_{t=1}^T  \myS''(\mW_{it}^\top \bar \mtheta_i, Y_t)  \big(  \mW_{it}^\top \widehat{\mtheta}_{iT} -  \mW_{it}^\top \bar \mtheta_i \big)^2 \\
			&= - \frac{1}{2} \big( \sqrt{T} ( \widehat{\mtheta}_{iT} - \bar \mtheta_i) \big)^\top \times \left( \frac{1}{T} \sum_{t=1}^T \myS''(\mW_{it}^\top \bar \mtheta_i, Y_t)  \mW_{it} \mW_{it}^\top \right) \times  \big( \sqrt{T} ( \widehat{\mtheta}_{iT} - \bar \mtheta_i) \big) \\
			&= - \frac{1}{2} \left( T^{-1/2} \sum_{t=1}^T \myS'(\mW_{it}^\top \bar \mtheta_i, Y_t) \mW_{it} \right)^\top \times \left( \frac{1}{T} \sum_{t=1}^T \myS''(\mW_{it}^\top \bar \mtheta_i, Y_t)  \mW_{it} \mW_{it}^\top\right)^{-1}  \\
			&\qquad \times \left( T^{-1/2} \sum_{t=1}^T \myS'(\mW_{it}^\top \bar \mtheta_i, Y_t) \mW_{it} \right)  + o_\P(1).
		\end{align*}
		
		Finally, for the absolute value of $T B_{iT}^{(3)}$, we get that
		\begin{align*}
			\big| T B_{iT}^{(3)} \big| &=  \left| \frac{1}{6} \sum_{t=1}^T \myS'''(\mW_{it}^\top  \widetilde{\mtheta}_{iT}, Y_t)  \big( \mW_{it}^\top \widehat{\mtheta}_{iT} - \mW_{it}^\top \bar \mtheta_i \big)^3  \right| \\
			&\le \frac{1}{6} \sum_{t=1}^T \big| \myS'''(\mW_{it}^\top  \widetilde{\mtheta}_{iT}, Y_t) \big|  \times \big|  \mW_{it}^\top \widehat{\mtheta}_{iT} - \mW_{it}^\top \bar \mtheta_i \big|^3  \\
			&\le \frac{1}{6} T^{-1/2} \frac{1}{T} \sum_{t=1}^T \big| \myS'''(\mW_{it}^\top  \widetilde{\mtheta}_{iT}, Y_t) \big| \Vert \mW_{it} \Vert^3 \times \Vert \sqrt{T} (\widehat{\mtheta}_{iT} - \bar \mtheta_i ) \Vert^3 \\
			&= {o}_\P(1) \times \mathcal{O}_\P(1) =  {o}_\P(1),
		\end{align*}
		\sloppy
		where we now provide details for the first ${o}_\P(1)$ term.
		For $h_T(\mtheta) := \frac{1}{T} \sum_{t=1}^T \big| \myS'''(\mW_{it}^\top \mtheta, Y_t) \big| \Vert \mW_{it} \Vert^3$, we have that for any $\varepsilon > 0$ small enough
		\begin{align}
			\begin{aligned}
				\label{eqn:ProofOP1CMT}
				&\P \left( T^{-1/2} h_T(\widetilde{\mtheta}_{iT}) > \varepsilon \right) \\
				&\qquad\le \P \left( T^{-1/2} \big| h_T(\widetilde{\mtheta}_{iT}) - h_T(\bar{\mtheta}_{i}) \big| > \varepsilon/2 \right) 
				+ \P \left( T^{-1/2} h_T(\bar{\mtheta}_{i}) > \varepsilon/2 \right) \\
				&\qquad\le \P \left(  T^{-1/2} \sup_{\Vert \mtheta - \bar\mtheta_i \Vert \le \delta}  \big| h_T(\mtheta ) - h_T(\bar{\mtheta}_{i}) \big| > \varepsilon/2 \right) 
				+ \P \left( T^{-1/2} h_T(\bar{\mtheta}_{i}) > \varepsilon/2 \right) 
				+ \P \left( \Vert \widetilde{\mtheta}_{iT} - \bar \mtheta_i \Vert > \delta \right) \\ 
				&\qquad= o(1),
			\end{aligned}
		\end{align}
		where the first two terms above are $o(1)$ by Assumption~\ref{ass:AsymptoticsDSC0}~\ref{item:SmoothSDSC} and the third is $o(1)$ by the consistency of $\widetilde{\mtheta}_{iT}$.

		Combining the previous three terms, we get that
		\begin{align*}
			T \big( \widehat{\MCB}_{iT} - {\MCBp_{i}}  \big)  	
			&= \frac{1}{2} \left( T^{-1/2} \sum_{t=1}^T \myS'(\mW_{it}^\top \bar \mtheta_i, Y_t) \mW_{it} \right)^\top \times \left( \frac{1}{T} \sum_{t=1}^T \myS''(\mW_{it}^\top \bar \mtheta_i, Y_t)  \mW_{it} \mW_{it}^\top \right)^{-1}  \\
			&\qquad \times \left( T^{-1/2} \sum_{t=1}^T \myS'(\mW_{it}^\top \bar \mtheta_i, Y_t) \mW_{it} \right) + o_\P(1).
		\end{align*}
		Then, given a Gaussian random variable $\bm N_{i} \sim \mathcal{N}(0, \bm\Pi_{i})$, the quantity of interest $T \big( \widehat{\MCB}_{iT} - {\MCBp_{i}} \big)$ is approximated in distribution by
		\begin{align*}
			T \left( \widehat{\MCB}_{iT} - \MCBp_{i}  \right)
			\toD \frac{1}{2} \bm N_{i}^\top \bm\Upsilon_{i}^{-1} \bm N_{i},
		\end{align*}
		which concludes the first part of the proof.

		We continue with the second part, the asymptotic distribution of $T \, \widehat{\DSC}_{iT}$ if $\DSCp_{i} = 0$.
		We then have that
		\begin{align*}
			\E \big[ \myS(\bar r, Y_t) \big] =  \E \big[ \myS(\mW_{it}^\top \bar \mtheta_i, Y_t)  \big] \qquad \forall t \in \N.
		\end{align*}
		By the uniqueness  in \eqref{eqn:PseudoTrue}, we obtain $\bar r = \mW_{it}^\top \bar \mtheta_i$ a.s.\ for all $t \in \N$, which will be used below.
		Hence, we have that
		\begin{align*}
			T \big( {\DSCs}_{iT} - \DSCp_{i} \big)
			&= \sum_{t=1}^T \big( \myS(\widehat{r}_T, Y_t) -  \myS(\bar r,Y_t) \big) +    \sum_{t=1}^T \left(\myS( \mW_{it}^\top \bar\mtheta_i, Y_t)  - \myS (\mW_{it}^\top \widehat{\mtheta}_{iT} , Y_t ) \right) \\
			& =  T D_{T} - T B_{iT}.
		\end{align*}
		A third order Taylor expansion of $T D_{T}$ gives that for some $\widetilde{r}_{T}$ on the line between $\widehat{r}_T$ and $\bar{r}$,
		\begin{align*}
			T D_T 
			&= \sum_{t=1}^T \myS (\widehat{r}_T, Y_t ) - \myS(\bar{r},Y_t) \\ 
			&= \sum_{t=1}^T \myS'(\bar{r}, Y_t)  \big( \widehat{r}_T  - \bar{r} \big)  
			+ \frac{1}{2} \sum_{t=1}^T \myS''(\bar{r}, Y_t)  \big( \widehat{r}_T - \bar{r} \big)^2 
			+ \frac{1}{6} \sum_{t=1}^T \myS'''(\widetilde{r}_{T}, Y_t)  \big( \widehat{r}_T - \bar{r} \big)^3 \\
			&=:	T D_{T}^{(1)} + T D_{T}^{(2)} + T D_{T}^{(3)}. 
		\end{align*}
		Together with the expansion of $B_{iT}$ in \eqref{eqn:BiT_Expansion}, this implies
		\begin{align}
			\label{eqn:DSCdecompProof}
			T \big( {\DSCs}_{iT} - \DSCp_{i} \big) 
			&= T \big( D_{T}^{(1)}  - B_{iT}^{(1)} \big) +  T \big( D_{T}^{(2)}  - B_{iT}^{(2)} \big) +  T \big( D_{T}^{(3)}  - B_{iT}^{(3)} \big).
		\end{align}  
		
		For the vector consisting of the first terms in \eqref{eqn:DSCdecompProof}, using \eqref{eq:comb_mestimators} in Assumption~\ref{ass:AsymptoticsDSC0}~\ref{item:BahadurMestJoint}, we get
		\allowdisplaybreaks
		\begin{align} 
			T \big( D_{T}^{(1)}  - B_{iT}^{(1)} \big)
			&= \begin{pmatrix} 1 & -1 \end{pmatrix}  \times 
			\begin{pmatrix}
				T D_{T}^{(1)}  \\
				T B_{iT}^{(1)}
			\end{pmatrix} \nonumber \\
			&= \begin{pmatrix} 1 & -1 \end{pmatrix}  \times 	
			\sum_{t=1}^T 
			\begin{pmatrix}
				\myS'(\bar{r}, Y_t) (\widehat{r}_T -\bar{r}) \\
				\myS'(\bar{r}, Y_t) \mW_{it}^\top (\widehat{\mtheta}_{iT} -\bar{\mtheta}_{i}) \\
			\end{pmatrix}\nonumber \\
			% new equation
			&=  \begin{pmatrix} 1 & -1 \end{pmatrix} 	 \times 
			T^{-1/2} \sum_{t=1}^T \begin{pmatrix}
				\myS'(\bar r,Y_t) & \bm{0} \\
				0 & \myS'(\bar r,Y_t)\mW_{it}^\top 
			\end{pmatrix} \times 	\sqrt{T} 
			\begin{pmatrix}
				\widehat{r}_T -\bar{r} \\
				\widehat{\mtheta}_{iT} -\bar{\mtheta}_{i} \\
			\end{pmatrix}\nonumber\\
			% new equation
			&= - \begin{pmatrix} 1 & -1 \end{pmatrix} \times 
			T^{-1/2} \sum_{t=1}^T 
			\begin{pmatrix}
				\myS'(\bar r,Y_t) & \bm{0} \\
				0 & \myS'(\bar r,Y_t)\mW_{it}^\top
			\end{pmatrix} 	
			\nonumber \\ 
			&\qquad    \times		
			\begin{pmatrix}
				\left( \frac{1}{T} \sum_{t=1}^T \myS''( \bar r, Y_t) \right)^{-1}  + o_\P(1) & \boldsymbol{0} \\
				\multicolumn{2}{c}{\left( \frac{1}{T} \sum_{t=1}^T \myS''(\bar r, Y_t)   \mW_{it} \mW_{it}^\top\right)^{-1} + o_\P(1)}
			\end{pmatrix} 
			\times T^{-1/2} \sum_{t=1}^T \myS'(\bar r, Y_t) \mW_{it} \nonumber \\
			% new equation
			&= - T^{-1/2} \sum_{t=1}^T 
			\begin{pmatrix}
				\myS'(\bar r,Y_t), & - \myS'(\bar{r},Y_t)\mW_{it}^\top
			\end{pmatrix} 	
			\nonumber \\ 
			&\qquad    \times		
			\begin{pmatrix}
				\left( \frac{1}{T} \sum_{t=1}^T \myS''( \bar r, Y_t) \right)^{-1}  & \boldsymbol{0} \\
				\multicolumn{2}{c}{\left( \frac{1}{T} \sum_{t=1}^T \myS''(\bar r, Y_t)   \mW_{it} \mW_{it}^\top\right)^{-1}}
			\end{pmatrix} 
			\times T^{-1/2} \sum_{t=1}^T \myS'(\bar r, Y_t) \mW_{it} + o_\P(1) \nonumber \\
			% new equation
			&= \left( T^{-1/2} \sum_{t=1}^T \myS'(\bar r, Y_t) \mW_{it}^\top \right) \\
			&\qquad \times \left( \left(  \frac{1}{T} \sum_{t=1}^T \myS''(\bar r, Y_t)\mW_{it}\mW_{it}^\top \right)^{-1}  -\begin{pmatrix}
				\left(  \frac{1}{T} \sum_{t=1}^T \myS''(\bar{r}, Y_t)\right)^{-1} & \boldsymbol{0} \\ \boldsymbol{0} & \boldsymbol{0}
			\end{pmatrix} \right) \nonumber \\
			&\qquad \times \left( T^{-1/2} \sum_{t=1}^T \myS'(\bar r, Y_t) \mW_{it} \right) + o_\P(1). \nonumber
		\end{align}
		In the last step, we have used the fact that for any (column vector) $\boldsymbol{a} \in \mathbb{R}^k$ and matrix $\boldsymbol{B} \in \mathbb{R}^{k \times k}$ with first/upper-left entries  $a_1, B_{11} \in \mathbb{R}$, respectively, it holds that
		\begin{align}
			\label{eqn:FormulaWeirdOuterProduct}
			- \begin{pmatrix}
				a_1 & -\boldsymbol{a}^\top
			\end{pmatrix}
			\times 
			\begin{pmatrix}
				B_{11} \;\; \boldsymbol{0} \\
				\mathbf{B}
			\end{pmatrix}
			\times \boldsymbol{a}
			= 
			\boldsymbol{a}^\top 
			\times 
			\left( \mathbf{B} - 
			\begin{pmatrix}
				B_{11} & \boldsymbol{0} \\
				\boldsymbol{0} & \boldsymbol{0}
			\end{pmatrix}
			\right)
			\times \boldsymbol{a},
		\end{align}
		where the $\boldsymbol{0}$ entries have the appropriate dimensions.
		This reformulation allows to express the term above as a quadratic form.
		
		For the second terms in \eqref{eqn:DSCdecompProof}, similar derivations yield that
		\begin{align*} 
			T &\big( D_{T}^{(2)}  - B_{iT}^{(2)} \big) \\
			&= \begin{pmatrix} 1 & -1 \end{pmatrix}  \times 
			\begin{pmatrix}
				T D_{T}^{(2)}  \\
				T B_{iT}^{(2)}
			\end{pmatrix} \\
			&= \frac{1}{2} \begin{pmatrix} 1 & -1 \end{pmatrix}  \times  \sum_{t=1}
			\begin{pmatrix}
				\myS''(\bar{r}, Y_t) \big( \widehat{r}_T-\bar{r} \big)^2 \\
				\myS''(\bar{r}, Y_t)  \big( \mW_{it}^\top\widehat{\mtheta}_{iT}-\mW_{it}^\top\bar{\mtheta}_i \big)^2
			\end{pmatrix} \nonumber \\
			&= \frac{1}{2} \begin{pmatrix} 1 & -1 \end{pmatrix}  \times 
			\begin{pmatrix}
				\left(\sqrt{T} \left(\widehat{r}_T - \bar{r}\right) \right) \times \left( \frac{1}{T} \sum_{t=1}^{T} \myS'' (\bar{r}, Y_t)\right) 	\times \left(\sqrt{T} \left(\widehat{r}_T - \bar{r}\right)\right) \\
				\left(\sqrt{T} \left(  \widehat{\mtheta}_{iT} - \bar{\mtheta}_i  \right)\right)^\top \times  \left( \frac{1}{T} \sum_{t=1}^T  \myS''(\bar{r}, Y_t) \mW_{it}\mW_{it}^\top \right) \times 	\left(\sqrt{T} \left(  \widehat{\mtheta}_{iT} - \bar{\mtheta}_i  \right)\right)
			\end{pmatrix} \\
			&= 
			\frac{1}{2} \begin{pmatrix} 1 & -1 \end{pmatrix}  
			\times 
			\begin{pmatrix}
				\sqrt{T} \big(\widehat{r}_T - \bar{r}\big) & \boldsymbol{0} \\
				0 & \sqrt{T} \big(\widehat{\mtheta}_{iT} - \bar{\mtheta}_i  \big)^\top
			\end{pmatrix}
			\\
			&\qquad \times 
			\begin{pmatrix}
				\frac{1}{T} \sum_{t=1}^{T} \myS'' (\bar{r}, Y_t) & \boldsymbol{0} \\
				\boldsymbol{0} & \frac{1}{T} \sum_{t=1}^T  \myS''(\bar{r}, Y_t) \mW_{it}\mW_{it}^\top
			\end{pmatrix}
			\times 
			\begin{pmatrix}
				\sqrt{T} \big(\widehat{r}_T - \bar{r}\big) \\
				\sqrt{T} \big(\widehat{\mtheta}_{iT} - \bar{\mtheta}_i  \big)
			\end{pmatrix}  \\
			% new equation
			&= 
			\frac{1}{2} 
			\times 
			\begin{pmatrix}
				\sqrt{T} \big( \widehat{r}_T - \bar{r}\big) \\
				\sqrt{T} \big( \widehat{\mtheta}_{iT} - \bar{\mtheta}_i  \big)
			\end{pmatrix}^\top
			\times 
			\begin{pmatrix}
				\frac{1}{T} \sum_{t=1}^{T} \myS'' (\bar{r}, Y_t) & \boldsymbol{0} \\
				\boldsymbol{0} & - \frac{1}{T} \sum_{t=1}^T  \myS''(\bar{r}, Y_t) \mW_{it}\mW_{it}^\top
			\end{pmatrix}
			\times 
			\begin{pmatrix}
				\sqrt{T} \big(\widehat{r}_T - \bar{r}\big) \\
				\sqrt{T} \big(\widehat{\mtheta}_{iT} - \bar{\mtheta}_i  \big)
			\end{pmatrix} \\
			% new equation
			&= 
			\frac{1}{2} 
			T^{-1/2} \sum_{t=1}^T \myS'(\bar r, Y_t) \mW_{it}^\top
			\times 
			\begin{pmatrix}
				\left( \frac{1}{T} \sum_{t=1}^T \myS''( \bar r, Y_t) \right)^{-1}   & \boldsymbol{0} \\
				\multicolumn{2}{c}{\left( \frac{1}{T} \sum_{t=1}^T \myS''(\bar{r}, Y_t)   \mW_{it} \mW_{it}^\top\right)^{-1}}
			\end{pmatrix}^\top
			\\ 
			&\qquad \times 
			\begin{pmatrix}
				\frac{1}{T} \sum_{t=1}^{T} \myS'' (\bar{r}, Y_t) & \boldsymbol{0} \\
				\boldsymbol{0} & - \frac{1}{T} \sum_{t=1}^T  \myS''(\bar{r}, Y_t) \mW_{it}\mW_{it}^\top
			\end{pmatrix} \\
			&\qquad \times 
			\begin{pmatrix}
				\left( \frac{1}{T} \sum_{t=1}^T \myS''( \bar r, Y_t) \right)^{-1}   & \boldsymbol{0} \\
				\multicolumn{2}{c}{\left( \frac{1}{T} \sum_{t=1}^T \myS''(\bar{r}, Y_t)   \mW_{it} \mW_{it}^\top\right)^{-1}}
			\end{pmatrix} 
			\times 
			T^{-1/2} \sum_{t=1}^T \myS'(\bar r, Y_t) \mW_{it}  + o_\P(1) \\
			% new equation
			&= - \frac{1}{2} \left( T^{-1/2} \sum_{t=1}^T \myS'(\bar r, Y_t) \mW_{it}^\top \right) \\
			&\qquad \times \left( \left(  \frac{1}{T} \sum_{t=1}^T \myS''(\bar{r}, Y_t)\mW_{it}\mW_{it}^\top \right)^{-1}  -\begin{pmatrix}
				\left(  \frac{1}{T} \sum_{t=1}^T \myS''(\bar{r}, Y_t)\right)^{-1} & \boldsymbol{0} \\ \boldsymbol{0} & \boldsymbol{0}
			\end{pmatrix} \right) \nonumber \\
			&\qquad \times \left( T^{-1/2} \sum_{t=1}^T \myS'(\bar r, Y_t) \mW_{it}^\top \right) + o_\P(1). \nonumber
		\end{align*}
		In the penultimate equality, we have plugged in the (joint) asymptotic expansion of the estimators from \eqref{eq:comb_mestimators}.
		In the last equality above, we have used the fact that for a (column) vector  ${\boldsymbol{b}} = (b_{1}, \; \boldsymbol{0})^\top \in \mathbb{R}^{k}$ with first entry $b_{1} \in \mathbb{R}$, $b_{1} \not= 0$ and an invertible matrix $\boldsymbol{B} \in \mathbb{R}^{k \times k}$, we have that 
		\begin{align*}
			%\label{eqn:FormulaWeirdOuterProduct2}
			\begin{pmatrix}
				{\boldsymbol{b}} & \boldsymbol{B}
			\end{pmatrix}
			\times 
			\begin{pmatrix}
				{b}_{1}^{-1} & \boldsymbol{0} \\
				\boldsymbol{0} & - \boldsymbol{B}^{-1}
			\end{pmatrix}
			\times
			\begin{pmatrix}
				{\boldsymbol{b}^\top} \\ \boldsymbol{B}
			\end{pmatrix}
			= 
			\begin{pmatrix}
				{b}_{1}^{-1} {\boldsymbol{b}} & -\boldsymbol{I}_k
			\end{pmatrix}
			\times
			\begin{pmatrix}
				{\boldsymbol{b}}^\top \\ \boldsymbol{B}
			\end{pmatrix}
			= 
			{b}_{1}^{-1}  {\boldsymbol{b}} {\boldsymbol{b}}^\top - \boldsymbol{B}= 
			\begin{pmatrix}
				{b}_{1} & \boldsymbol{0} \\
				\boldsymbol{0} & \boldsymbol{0}
			\end{pmatrix} - \boldsymbol{B}.
		\end{align*}
		
		For the third term in \eqref{eqn:DSCdecompProof}, we get that
		\begin{align}
			\left| \begin{pmatrix} T D_{T}^{(3)} \\ T B_{iT}^{(3)} \end{pmatrix}  \right| 
			&= \dfrac{1}{6} 
			\begin{pmatrix}
				\left|   \sum_{t=1}^T \myS''' (\widetilde{r}_{T}, Y_t) (\widehat{r}_T - \bar{r} )^3 \right| \\
				\left| \sum_{t=1}^T \myS''' (\mW_{it}^\top  \widetilde{\mtheta}_{iT}, Y_t) (\mW_{it}^\top \widehat{\mtheta}_{iT} - \mW_{it}^\top \bar{\mtheta}_i )^3 \right| 
			\end{pmatrix} \nonumber \\
			&\leq \dfrac{1}{6} 
			\begin{pmatrix}
				\sum_{t=1}^T \left|  \myS'''(\widetilde{r}_{T}, Y_t) \right| \cdot \left|  \widehat{r}_t - \bar{r} \right|^3 \\
				\sum_{t=1}^T \left|  \myS'''(\mW_{it}^\top \widetilde{\mtheta}_{iT}, Y_t) \right| \cdot \left|  \mW_{it}^\top \widehat{\mtheta}_{iT} - \mW_{it}^\top \bar{\mtheta}_i \right|^3 \\
			\end{pmatrix} \nonumber \\
			&\leq \dfrac{1}{6} T^{-1/2} 
			\left\{ \dfrac{1}{T} \sum_{t=1}^T
			\begin{pmatrix}
				| \myS'''(\widetilde{r}_{T}, Y_t) | & 0   \\
				0 & | \myS'''(\widetilde{\mtheta}_{iT}, Y_t) | \cdot \Vert \mW_{it}^\top \Vert^3 
			\end{pmatrix} \right\}  
			\begin{pmatrix}
				\big( \sqrt{T}  \Vert \widehat{r}_t - \bar{r} \Vert \big)^3 \\
				\big( \sqrt{T}  \big\Vert \widehat{\mtheta}_{iT} - \bar{\mtheta}_i \big\Vert \big)^3
			\end{pmatrix} \nonumber \\
			&= o_{\P}(1)  \times \begin{pmatrix}
				\mathcal{O}_{\P}(1) \\   \mathcal{O}_{\P}(1)
			\end{pmatrix} \nonumber \\
			&= o_\P(1),
		\end{align}
		where the proof of the $o_\P(1)$-term follows analogously to \eqref{eqn:ProofOP1CMT}.
		Notice for this that as $\mW_{it}$ contains a constant by assumption,    the procedure in  \eqref{eqn:ProofOP1CMT} together with Assumption~\ref{ass:AsymptoticsDSC0}~\ref{item:SmoothSDSC} also implies that $T^{-1/2} \frac{1}{T} \sum_{t=1}^T |\myS'''(\widetilde{r}_{T}, Y_t) | = o_{\P}(1)$.
		
		Hence, from the above three derivations together with \eqref{eqn:DSCdecompProof}, we obtain 
		\begin{align*}
			T \left( {\DSCs}_{iT} - \DSCp_{i} \right) 
			&= \frac{1}{2} \left( T^{-1/2} \sum_{t=1}^T \myS'(\bar r, Y_t) \mW_{it}^\top \right) \\ 
			&\qquad \times \left( \left(  \frac{1}{T} \sum_{t=1}^T \myS''(\bar{r}, Y_t)\mW_{it}\mW_{it}^\top \right)^{-1}  -\begin{pmatrix}
				\left(  \frac{1}{T} \sum_{t=1}^T \myS''(\bar{r}, Y_t)\right)^{-1} & \boldsymbol{0} \\ \boldsymbol{0} & \boldsymbol{0}
			\end{pmatrix} \right) \\
			&\qquad \times \left( T^{-1/2} \sum_{t=1}^T \myS'(\bar r, Y_t) \mW_{it} \right) + o_\P(1), 
		\end{align*}
		such that the claim of the theorem follows by applying a CLT to the outer terms and a law of large numbers to the inner one.
	\end{proof}

	\begin{proof}[\bf Proof of Proposition \ref{prop:verification_phi}] 
		
		After laying out some preliminaries, we first show the statement on Assumption~\ref{ass:Normality}, followed by the second statement on the verification of Assumption~\ref{ass:AsymptoticsDSC0}.
		
		Notice that $\phi(Y_t)$  and $\phi(\mW_{it}^\top\mtheta)$ are measurable functions of $\a$-mixing processes of size $-\size/(\size-2),\size>2$, and thus itself mixing of the same size by \citet[Theorem 3.49]{W_2001} together with Assumption \ref{ass:Philowlevel}\ref{ass:itemlowlevePhimixing}. Next, consider the objective functions
		\begin{align*}
			\widehat{Q}_{iT}(\mtheta) &:= \frac{1}{T} \sum_{t=1}^T \myS(\mW_{it}^\top \mtheta,Y_t)  
			\qquad \text{ and } \qquad \\
			\overline{Q}_{iT}(\mtheta) &:= \E \left[ \widehat{Q}_{iT}(\mtheta) \right] = \E \left[\frac{1}{T} \sum_{t=1}^T  \myS(\mW_{it}^\top \mtheta,Y_t) \right]
			= \E \left[\myS(\mW_{it}^\top \mtheta,Y_t) \right]. 
		\end{align*}
		The first and second derivatives of the objective functions with respect to $\mtheta$ are
		\begin{align*}
			\nabla_{\mtheta}\widehat{Q}_{iT}(\mtheta) &:= \frac{1}{T} \sum_{t=1}^T   \phi''(\mW_{it}^\top\mtheta) \mW_{it}(\mW_{it}^\top\mtheta - Y_t), \\
			\nabla_{\mtheta}\overline{Q}_{iT}(\mtheta) &:= \E\left[ \nabla_{\mtheta}\widehat{Q}_{iT}(\mtheta) \right] = \E \left[ \phi''(\mW_{it}^\top\mtheta) \mW_{it}(\mW_{it}^\top\mtheta - Y_t)  \right] \\
			\nabla_{\mtheta\mtheta}\widehat{Q}_{iT}(\mtheta) &:= \frac{1}{T} \sum_{t=1}^T \phi'''(\mW_{it}^\top\mtheta)\mW_{it}\mW_{it}^\top(\mW_{it}^\top\mtheta - Y_t) + \phi''(\mW_{it}^\top\mtheta)\mW_{it}\mW_{it}^\top, \\
			\nabla_{\mtheta\mtheta}\overline{Q}_{iT}(\mtheta) &:= \E\left[ \nabla_{\mtheta\mtheta}\widehat{Q}_{iT}(\mtheta) \right] = \E \left[   \phi'''(\mW_{it}^\top\mtheta)\mW_{it}\mW_{it}^\top(\mW_{it}^\top\mtheta - Y_t) + \phi''(\mW_{it}^\top\mtheta)\mW_{it}\mW_{it}^\top\right].
		\end{align*}
		Notice, when working with the derivatives of the objective function  at the population level, we interchange differentiation and expectation, i.e., $\E\left[ \nabla_{\mtheta}\widehat{Q}_{iT}(\mtheta) \right]=  \nabla_{\mtheta}\E\left[\widehat{Q}_{iT}(\mtheta) \right]$,  using a measure-theoretic version of the Leibniz rule,  which requires the derivative to exist and to be absolutely bounded  by an integrable function independent of $\mtheta$.
		For $\nabla_{\mtheta}\overline{Q}_{iT}(\mtheta) $ and $\nabla_{\mtheta\mtheta}\overline{Q}_{iT}(\mtheta) $, this is ensured by 
		\begin{align*}
			d_1(\mW_{it}, Y_t) &= \sup_{\mtheta\in\mTheta} \left\Vert  \phi''(\mW_{it}^\top\mtheta) \mW_{it}(\mW_{it}^\top\mtheta - Y_t)  \right\Vert \quad\text{and} \\
			d_2(\mW_{it}, Y_t)  &= \sup_{\mtheta\in\mTheta} \left\Vert  \phi'''(\mW_{it}^\top\mtheta)\mW_{it}\mW_{it}^\top(\mW_{it}^\top\mtheta - Y_t) + \phi''(\mW_{it}^\top\mtheta)\mW_{it}\mW_{it}^\top \right\Vert ,
		\end{align*}
		together with the moment conditions in Assumption~\ref{ass:Philowlevel}\ref{ass:itemlowlevePhiadditionalmoms}.

		Note, throughout the proof, the $L_s$ vector norm is denoted by $\| \cdot \|_s$ for $s > 0$. In particular, for $s = 2$, we denote the Euclidean norm by $\| \cdot \| := \| \cdot \|_2$.
		We proceed to verify the high-level assumptions from Section \ref{sec:estimation}, following the order in which they are presented.

		\bigskip
		\noindent
		\textbf{Verification of Assumption~\ref{ass:Normality}:} \\
		For Assumption \ref{ass:Normality}\ref{item:EstOP1}, we start to show consistency and convergence of the M-estimators using Theorems 2.1 and 3.1 of \citet{NM_1994}. This directly implies the statement in  high level Assumption \ref{ass:Normality}\ref{item:EstOP1}. 
		
		We start by confirming conditions (i)--(iv) of \citet[Theorem 2.1]{NM_1994} for the parameter $\mtheta$. 
		Condition (i) of \citet[Theorem 2.1]{NM_1994} holds as $\bar\mtheta_i$ is the unique minimizer of $\overline{Q}_{iT}(\mtheta)$ by Assumption~\ref{ass:TrueLinearSpecification}.
		
		Condition (ii) of \citet[Theorem 2.1]{NM_1994}, i.e., compactness of $\mTheta$, holds by imposing Assumption \ref{ass:TrueLinearSpecification} and condition (iii) requires continuity of the objective function $\overline{Q}_{iT}(\mtheta)$ which is fulfilled as per Assumption \ref{ass:Philowlevel}\ref{ass:itemlowlevePhicont}.
		
		Condition (iv) of \citet[Theorem 2.1]{NM_1994} requires $ \sup_{ \mtheta \in \mTheta}  \left| \widehat{Q}_{iT}(\mtheta) -  \overline{Q}_{iT}(\mtheta)\right| = o_\P(1)$ which follows from \citet[Theorem 21.9]{Davidson1994} given that
		\begin{enumerate}
			\item[(a)] $\widehat{Q}_{iT}(\mtheta) -  \overline{Q}_{iT}(\mtheta) = o_\P(1)$ for all $\mtheta \in \mTheta$, and 
			\item[(b)] $\big\{ \widehat{Q}_{iT}(\cdot) -  \overline{Q}_{iT}(\cdot) \big\}$ is stochastically equicontinuous.
		\end{enumerate} 
		We first prove the point-wise law of large numbers for the objective function in (a).
		From \citet[Theorem 3.49]{W_2001}, it follows that $ \myS(\mW_{it}^\top \mtheta,Y_t)$ is stationary and $\alpha$-mixing of size $-\size/(\size-2),\; \size >2,$ by Assumptions \ref{ass:TrueLinearSpecification} and \ref{ass:Philowlevel}\ref{ass:itemlowlevePhimixing}. 
		Further, as of \citet[Proposition 3.44]{W_2001}, $ \myS(\mW_{it}^\top \mtheta,Y_t)  $ is ergodic because it is a measurable function of $(Y_t, \mW_{it}^\top)$. 
		Since 
		\begin{align}
			\E\left\vert \myS(\mW_{it}^\top \mtheta,Y_t) \right\vert &=   \E\left\vert  \phi(Y_t) - \phi(\mW_{it}^\top \mtheta) - \phi'(\mW_{it}^\top \mtheta)(Y_t-\mW_{it}^\top \mtheta) \right\vert \nonumber \\
			\label{eqn:phiobjectivetrianlge}
			&\leq   \E\left\vert\phi\left(Y_t\right) \right\vert +   \E\left\vert\phi\left(\mW_{it}^\top \mtheta\right) \right\vert +   \E \left\vert\phi' \left(\mW_{it}^\top \mtheta\right)  \left(Y_t- \mW_{it}^\top \mtheta \right)\right\vert < \infty, 
		\end{align}
		the ergodicity Theorem 3.34 in \citet{W_2001} is verified and the pointwise law of large numbers in (a) follows.

		Concerning condition (b) of \citet[Theorem 21.9]{Davidson1994}, we show that 
		\[
		\left| \frac{1}{T} \sum_{t=1}^T \myS\left(\mW_{it}^\top \mtheta,Y_t\right) -  \myS\left(\mW_{it}^\top \widetilde\mtheta,Y_t\right)  \right| \leq B_T \left\Vert \mtheta - \widetilde\mtheta \right\Vert \quad\text{for all ~} \mtheta,\widetilde{\mtheta}\in \mTheta, 
		\]
		where $B_T>0$ is a stochastic sequence independent of $\mtheta$ and $\widetilde{\mtheta}$, for which we have $B_T=\mathcal{O}_\P(1)$. 
		By the mean value theorem, which can be applied due to convexity of $\mTheta$ as per Assumption \ref{ass:TrueLinearSpecification}, together with triangle inequality, we get for some mean value $\mtheta^*$ on the line joining $\widetilde\mtheta$ and $\mtheta$, that
		\begin{align} \label{eqn:SEphi}
			\begin{split}
				&\left| \frac{1}{T}\sum_{t=1}^T \myS\left(\mW_{it}^\top \mtheta,Y_t\right) -  \frac{1}{T}\sum_{t=1}^T \myS\left(\mW_{it}^\top \widetilde\mtheta,Y_t\right) \right| 
				= 
				\left|  \frac{1}{T}\sum_{t=1}^T \phi''(\mW_{it}^\top\mtheta^*) \mW_{it}^\top (Y_t-\mW_{it}^\top\mtheta^*) \left( \mtheta - \widetilde\mtheta\right)\right| 
				\\
				&\leq \frac{1}{T}\sum_{t=1}^T \left\vert\phi''(\mW_{it}^\top\mtheta^*) \right\vert \cdot \left\Vert\mW_{it}\right\Vert \cdot \left| \mW^\top_{it}\mtheta^*-Y_t \right| \cdot \left\Vert \mtheta - \widetilde\mtheta\right\Vert \\
				&\leq B_T \left\Vert \mtheta - \widetilde\mtheta\right\Vert,
			\end{split}
		\end{align}
		where
		\begin{align}
			B_T :=  \frac{1}{T}\sum_{t=1}^T \sup_{\mtheta \in \mTheta}\left\vert\phi''(\mW_{it}^\top\mtheta) \right\vert \left\Vert\mW_{it}\right\Vert \cdot \left| \mW_{it}^\top \mtheta - Y_t \right|.
		\end{align}
		Then, $B_T = \mathcal{O}_\P(1)$ follows from \citet[Theorem 12.11]{Davidson1994} as
		$\sup_{T\in\N} \mathbb{E}[B_T] < \infty$ is implied by Assumption \ref{ass:Philowlevel}\ref{ass:itemlowlevePhiadditionalmoms}.
		It follows that $\big\{ \widehat{Q}_{iT}(\cdot) -  \overline{Q}_{iT}(\cdot) \big\}$ is stochastically equicontinuous.
		Hence, we can apply Theorem 2.1 of \citet{NM_1994} to conclude that the M-estimator is consistent, i.e.\ $\widehat{\mtheta}_{iT} \toP \bar{\mtheta}_i$.

		Secondly, we check conditions (i)--(v) of \citet[Theorem 3.1]{NM_1994}.  
		Since $\bar\mtheta_i\in\interior(\mTheta)$ by Assumption \ref{ass:TrueLinearSpecification}, condition (i) holds. 
		By Assumption \ref{ass:Philowlevel}\ref{ass:itemlowlevePhicont}, the objective function $\widehat{Q}_{iT}(\mtheta)$ is twice continuously differentiable, ensuring that condition (ii) is fulfilled.
		To verify condition (iii), we have to show that a central limit theorem holds for $\sqrt{T} \nabla_{\mtheta}\widehat{Q}_{iT}(\bar\mtheta_{i})$. 
		By the Cramér–Wold device, this reduces to show univariate convergence, i.e.,
		\begin{align} 
			\label{eqn:Phiunivconv}
			T^{-1/2} \mlambda^\top \mathbf{V}_T^{-1/2} \sum_{t=1}^T \phi''\left(\mW_{it}^\top\bar\mtheta_i\right) \mW_{it}\left( Y_t  - \mW^\top_{it}\bar\mtheta_i \right) \toD \mathcal{N}(0,1), \qquad T\to\infty,
		\end{align} for any $\mlambda\in\R^k$ satisfying $\mlambda^\top\mlambda=1$. 
		We show this by verifying the required conditions of the CLT in \citet[Theorem 5.20]{W_2001}.
		Note that by \citet[Theorem 3.49]{W_2001} and Assumption \ref{ass:Philowlevel}\ref{ass:itemlowlevePhimixing}, the sequences $\left\{ \phi''\left(\mW_{it}^\top\bar\mtheta_i\right) \mW_{it}\left( Y_t  - \mW^\top_{it}\bar\mtheta_i \right) \right\}$ and $\left\{ \mlambda^\top \mathbf{V}_T^{-1/2} \phi''\left(\mW_{it}^\top\bar\mtheta_i\right) \mW_{it}\left( Y_t  - \mW^\top_{it}\bar\mtheta_i \right) \right\}$ are $\alpha$-mixing of size $-\size/(\size-2)$, $\size>2$. 
		Since $\mtheta=\bar\mtheta_i$ uniquely minimizes the population objective,
		\begin{align*}
			\bm 0 
			= \nabla_{\mtheta}\overline{Q}_{iT}(\bar\mtheta_i) &= \E\left[ \frac{1}{T}\sum_{t=1}^T\phi''(\mW_{it}^\top\bar\mtheta_i) \mW_{it}(\mW_{it}^\top\bar\mtheta_i - Y_t)\right],
			% = \bm 0,
		\end{align*} 
		which further implies that $\E\left[ \mlambda^\top \mathbf{V}_T^{-1/2} \phi''\left(\mW_{it}^\top\bar\mtheta_i\right) \mW_{it}\left( Y_t  - \mW^\top_{it}\bar\mtheta_i \right) \right] = \bm{0}$.
		By sub-multiplicativity of norms and Assumption \ref{ass:Philowlevel}\ref{ass:itemlowlevePhiadditionalmoms}, we get that 
		\begin{align*}
			\E\left[ \left| \mlambda^\top \mathbf{V}_T^{-1/2}  \phi''\left(\mW_{it}^\top\bar\mtheta_i\right)\mW_{it}\left( Y_t  - \mW^\top_{it}\bar\mtheta_i \right) \right|^{\size}\right] 
			&\leq \|\mlambda^\top \mathbf{V}_T^{-1/2}\|^\size \E\left[ \left\|  \phi''\left(\mW_{it}^\top\bar\mtheta_i\right)\mW_{it}\left( Y_t - \mW_{it}^\top \bar\mtheta_i \right) \right\|^\size \right]
		\end{align*} 
		is finite for $\size>2,\;\forall t$.
		Therefore, given Assumption \ref{ass:Philowlevel}\ref{ass:itemlowlevePhiCLTcramer}, the univariate convergence result in \eqref{eqn:Phiunivconv} follows. Further, by the Cramér–Wold device, the conditions of the univariate central limit theorem for mixing series hold for all linear combinations, leading to the desired result
		\begin{align}\label{eqn:phiCLT}
			T^{-1/2} \mathbf{V}_T^{-1/2} \sum_{t=1}^T \phi''\left(\mW_{it}^\top\bar\mtheta_i\right) \mW_{it}\left( Y_t  - \mW^\top_{it}\bar\mtheta_i \right) = -\mathbf{V}_T^{-1/2} \sqrt{T}\nabla_{\mtheta}\widehat{Q}_{iT}(\bar\mtheta_i)\toD  \mathcal{N}(0,I_k).
		\end{align}

		For the fourth condition of \citet[Theorem  3.1]{NM_1994}, we use \citet[Lemma 2.4]{NM_1994} to show uniform convergence in probability for the Hessian, i.e.
		\begin{align}\label{eqn:phiLLNhesse}
			\sup_{\mtheta \in \mTheta} \left\Vert  \nabla_{\mtheta\mtheta}\widehat{Q}_{iT}(\mtheta) -  \nabla_{\mtheta\mtheta}\overline{Q}_{iT}(\mtheta)  \right\Vert    \toP0.
		\end{align} 
		Since we consider stationary random variables, a compact $\mTheta$ by Assumption \ref{ass:TrueLinearSpecification}, and the objective function is continuous for all $\mtheta\in\mTheta$ by Assumption \ref{ass:Philowlevel}\ref{ass:itemlowlevePhicont}, we are left to show that there exists a dominating function $d_3(\mW_{it}, Y_t) \geq \left\Vert \nabla_{\mtheta\mtheta}\widehat{Q}_{iT}(\mtheta)  \right\Vert$ for all $\mtheta\in\mTheta$, with $\E\left[d_3(\mW_{it}, Y_t) \right] < \infty$, which holds for
		\begin{align}
			\begin{split}
				d_3(\mW_{it}, Y_t) &:=
				\sup_{\mtheta \in \mTheta}\left( \left\vert \phi'''\left(\mW_{it}^\top\mtheta\right) \right\vert \cdot \left\Vert \mW_{it}\right\Vert^2 \cdot\left\vert \mW_{it}^\top\mtheta - Y_t\right\vert\right) + \sup_{\mtheta \in \mTheta} \left\vert \phi''\left(\mW_{it}^\top\mtheta\right) \right\vert  \cdot \left\Vert \mW_{it}\right\Vert^2,
			\end{split}
		\end{align}
		whose expectation is finite by Assumption \ref{ass:Philowlevel}\ref{ass:itemlowlevePhiadditionalmoms}.
		Hence, the desired convergence result in \eqref{eqn:phiLLNhesse} follows.
		
		\sloppy
		Moreover, the non-singular expected Hessian, $\nabla_{\mtheta\mtheta}\overline{Q}_{iT}(\bar\mtheta_i) =  \E \left[   \phi'''(\mW_{it}^\top\bar\mtheta_i)\mW_{it}\mW_{it}^\top(\mW_{it}^\top\bar\mtheta_i - Y_t) + \phi''(\mW_{it}^\top\bar\mtheta_i)\mW_{it}\mW_{it}^\top\right]$, required in condition (v) of \citet[Theorem  3.1]{NM_1994} holds by Assumption~\ref{ass:Philowlevel}\ref{ass:itemlowlevePhiCLTcramer}.
		We conclude the verification of Assumption \ref{ass:Normality}\ref{item:EstOP1} by noting that the above results generalize to the case where the reference forecast is considered, which corresponds to the setting $\mW_{it}=1$.

		We continue to verify the high-level Assumption \ref{ass:Normality}\ref{item:CLT}, which invokes a CLT for the vector of scores $T^{-1/2} \sum_{t=1}^T \big(a_{1t}, c_{1t}, a_{2t}, c_{2t}, e_{t} \big)$. 
		For this, we use the Cramér-Wold Device under which the problem reduces to showing that 
		\begin{align}
			\label{eqn:Phiscorecrameruniv}
			T^{-1/2} \mlambda^\top  \mOmega_T^{-1/2} \sum_{t=1}^T \big(a_{1t}, c_{1t}, a_{2t}, c_{2t}, e_{t} \big)^\top \toD \mathcal{N}(0, 1),
		\end{align}
		for all $\mlambda\in\R^5$ with $\mlambda^\top \mlambda=1$. 
		
		Since the scores are measurable functions of the $\a$-mixing variables by Assumption \ref{ass:Philowlevel}\ref{ass:itemlowlevePhimixing}, it follows from \citet[Theorem 3.49]{W_2001} that the components of the vector $\big(a_{1t}, c_{1t}, a_{2t}, c_{2t}, e_{t} \big)$ are also \(\alpha\)-mixing of size $-\size/(\size-2)$ for $\size>2$. 
		Thus, we can apply the standard central limit theorem for mixing variables in \citet[Theorem 5.20]{W_2001}. 
		To do so, note that $a_{it}, c_{it}, e_t$, $i=1,2$,  are defined as difference in scores between the sample and population values. Hence these terms are centered around zero by construction and thus $\E\left[  \mlambda^\top \big(a_{1t}, c_{1t}, a_{2t}, c_{2t}, e_{t} \big)^\top\right]=0$. 
		It further holds that
		\begin{align}
			\E\left\vert \mlambda^\top \mOmega_T^{-1/2}\big(a_{1t}, c_{1t}, a_{2t}, c_{2t}, e_{t} \big)^\top \right\vert^\size \leq \left\Vert \mlambda^\top \mOmega_T^{-1/2} \right\Vert^\size \E\left\Vert \big(a_{1t}, c_{1t}, a_{2t}, c_{2t}, e_{t} \big) \right\Vert^\size < \infty, 
		\end{align}
		which validates the required moment conditions for \citet[Theorem 5.20]{W_2001}. 
		
		Since  $\left\Vert \mlambda^\top \mOmega_T^{-1/2} \right\Vert^\size$ is finite by Assumption \ref{ass:itemlowlevePhicltvar}, we particularly analyze the vector $(a_{1t}, c_{1t}, a_{2t}, c_{2t}, e_{t})$  and note that by linearity of expectations
		\begin{align}\label{eqn:Phiscorecrinequ}
			\E \left\|  \big(a_{1t}, c_{1t}, a_{2t}, c_{2t}, e_{t} \big)^\top \right\|_{\size}^\size =  \E|a_{1t}|^\size + \E|c_{1t}|^\size +   \E|a_{2t}|^\size + \E|c_{2t}|^\size + \E|e_{t}|^\size < \infty,
		\end{align}
		provided that each $\E|a_{it}|^\size, \E|c_{it}|^\size , \E|e_{t}|^\size < \infty$. To verify this, we detail the bound for $\E|c_{it}|^\size$ and show that it is finite by applying the $c_\size$-inequality: 
		\begin{align}
			\E|c_{it}|^\size &= \E\left|\myS(\mW_{it}^\top\mtheta,Y_t) - \E [ \myS(\mW_{it}^\top\mtheta,Y_t)]\right|^\size \nonumber \\ \label{eqn:Phiscore_c}
			&\leq 2^{\size-1} \bigg( \E\left[ |\myS(\mW_{it}^\top\mtheta,Y_t)|^{\size}\right] + \left| \E\left[ \myS(\mW_{it}^\top\mtheta,Y_t)\right]\right|^\size\bigg) < \infty.
		\end{align}
		Finiteness of the sum in \eqref{eqn:Phiscore_c} follows from the moment conditions in Assumptions \ref{ass:Philowlevel}\ref{ass:itemlowlevePhiadditionalmoms} by again applying the $c_\size$-inequality, i.e.,
		\begin{align*}%\label{eqn:Phiscorecrinequ2}
			\E\left|\myS(\mW_{it}^\top\mtheta,Y_t) \right|^\size \leq 3^{2\size-1} \left( \E \left[ \left\vert\phi(Y_t) \right\vert^\size\right] +  \E \left[ \left\vert\phi(\mW_{it}^\top\mtheta) \right\vert^\size\right] + \E \left[ \left\vert\phi'(\mW_{it}^\top\mtheta)\left( Y_t- \mW_{it}^\top\mtheta\right) \right\vert^\size\right]\right) < \infty.
		\end{align*}
		The same reasoning applies to $\E|a_{it}|^\size$ and $\E|e_{t}|^\size$. Since all norms on a finite-dimensional space are equivalent, there exists a constant  $K>0$  such that, for the vector $(a_{1t}, c_{1t}, a_{2t}, c_{2t}, e_{t})$, the Euclidean norm is at most $K$ times the $L_\size$-norm, i.e., we can write $\left\|  \big(a_{1t}, c_{1t}, a_{2t}, c_{2t}, e_{t} \big)^\top \right\| \leq K\left\|  \big(a_{1t}, c_{1t}, a_{2t}, c_{2t}, e_{t} \big)^\top \right\|_{\size}$. 
		Raising this expression to the $\size$-th power, taking expectations, and using  \eqref{eqn:Phiscorecrinequ}, we get that
		\begin{align*}
			\E\left\|  \big(a_{1t}, c_{1t}, a_{2t}, c_{2t}, e_{t} \big)^\top \right\|^\size \leq  K^\size \E\left\|  \big(a_{1t}, c_{1t}, a_{2t}, c_{2t}, e_{t} \big)^\top \right\|_\size^\size \leq K^\size \Big[ \E|a_{1t}|^\size + \ldots +   \E|e_{t}|^\size \Big] < \infty,
		\end{align*}
		because $K^{\size}$ is a constant.  Hence, the moment condition of \citet[Theorem 5.20]{W_2001} is fulfilled and the univariate convergence result in \eqref{eqn:Phiscorecrameruniv} follows, given Assumption \ref{ass:itemlowlevePhicltvar}. Further, by the Cramér–Wold device, the conditions of the univariate central limit theorem for mixing series hold for all linear combinations, leading to the desired result,
		\begin{align} 
			T^{-1/2}  \mOmega_T^{-1/2} \sum_{t=1}^T \big(a_{1t}, c_{1t}, a_{2t}, c_{2t}, e_{t} \big)^\top \toD \mathcal{N}(0, I_5). 
		\end{align}

		For the verification of the high-level Assumption \ref{ass:Normality}\ref{item:ContExpScore}, we notice that by Assumption \ref{ass:Philowlevel}\ref{ass:itemlowlevePhicont}, 
		$\phi(\mW_{it}^\top\mtheta)$ is continuously differentiable in $\mtheta$, 
		implying that $\myS(\mW_{it}^\top\mtheta,Y_t)$ is also continuously differentiable in $\mtheta$ with derivative 
		\begin{align*}
			\nabla_{\mtheta} \myS(\mW_{it}^\top\mtheta,Y_t) =  \phi''(\mW_{it}^\top\mtheta) \mW_{it}(\mW_{it}^\top\mtheta - Y_t),
		\end{align*}
		which can be bounded by
		\begin{align*}
			\Vert \phi''(\mW_{it}^\top\mtheta) \mW_{it}(\mW_{it}^\top\mtheta - Y_t) \Vert 
			\le 
			\sup_{\mtheta \in \mTheta}
			\Vert \phi''(\mW_{it}^\top\mtheta) \mW_{it}(\mW_{it}^\top\mtheta - Y_t) \Vert,
		\end{align*}
		whose expectation is finite by Assumption~\ref{ass:Philowlevel}\ref{ass:itemlowlevePhiadditionalmoms}.
		Hence, differentiability carries over to the expectation by the Leibniz integral rule. 
		Moreover, under the same assumptions and reasoning, both the first- and second-order derivatives of the expected
		score with respect to $\mtheta$ remain continuously differentiable.

		For the verification of high-level Assumption \ref{ass:Normality}\ref{item:SE}, and similar to Lemma~\ref{lem:GPL_Equicontinuous} below, we get Lipschitz continuity,
		\begin{align*}
			\left| \myS \big(\mW_{it}^\top \mtheta, Y_t \big) - \myS \big(\mW_{it}^\top \bm\tau, Y_t \big) \right| 
			\le \bar b_t \big\Vert \mtheta - \bm\tau \big\Vert
		\end{align*}
		where
		\begin{align*}
			\bar b_t := \sup_{\mtheta \in \mTheta}\left\vert\phi''(\mW_{it}^\top\mtheta) \right\vert \left\Vert\mW_{it}\right\Vert \cdot \left| \mW_{it}^\top \mtheta - Y_t \right|.
		\end{align*}
		Hence, we employ \citet[Theorem~3]{hansen1996stochastic} for mixing arrays, for which we verify his Assumption~4 using the function space $F := \big\{ \myS(\mW_{it}^\top\mtheta, Y_t): \; \mtheta \in \mTheta \big\}$.
		First, the mixing condition from Assumption~\ref{ass:Philowlevel} implies the first condition of \citet[Assumption~4]{hansen1996stochastic}.
		The second condition of \citet[Assumption~4]{hansen1996stochastic} holds by stationarity and 
		\begin{align*}
			\limsup_{T\to\infty} \left\{ \frac{1}{T} \sum_{t=1}^T \left( \E \left\vert \myS(\mW_{it}^\top\mtheta, Y_t) \right\vert^s \right)^{2/s} \right\}^{1/2}
			= \left( \E \left\vert \myS(\mW_{it}^\top\mtheta, Y_t) \right\vert^s \right)^{1/s} 
			< \infty.
		\end{align*}
		Third, a similar condition holds for the Lipschitz constant
		\begin{align*}
			\limsup_{T\to\infty} \left\{ \frac{1}{T} \sum_{t=1}^T \left( \E \left\vert \sup_{\mtheta \in \mTheta}\left\vert\phi''(\mW_{it}^\top\mtheta) \right\vert \left\Vert\mW_{it}\right\Vert \cdot \left| \mW_{it}^\top \mtheta - Y_t \right| \right\vert^s \right)^{2/s} \right\}^{1/2}
			< \infty,
		\end{align*} 
		as $\E \left\vert \sup_{\mtheta \in \mTheta}\left\vert\phi''(\mW_{it}^\top\mtheta) \right\vert \left\Vert\mW_{it}\right\Vert \cdot \left| \mW_{it}^\top \mtheta - Y_t \right| \right\vert^s < \infty$ by assumption.
		
		Hence, we can employ \citet[Theorem~3]{hansen1996stochastic} to conclude that their Condition~1 holds with the $L_q$-norm $\rho_q(\cdot) = \left( \E \big\vert \cdot \big\vert^q\right)^{1/q}$ (see the proof of Lemma~\ref{lem:GPL_Equicontinuous} for further details).

		Thus, we have shown that Assumption~\ref{ass:Normality} holds such that Theorem \ref{thm:joint_nomal_mcb_dsc} applies under the given conditions.
		We now continue to verify the high-level conditions in Assumption~\ref{ass:AsymptoticsDSC0}.
		
		\bigskip
		\noindent
		\textbf{Verification of Assumption~\ref{ass:AsymptoticsDSC0}:} \\
		For the verification of the high-level Assumption~\ref{ass:AsymptoticsDSC0}\ref{item:SmoothSDSC}, recall that 
		\begin{align}
			\begin{split}
				h_T(\mtheta) &=  \frac{1}{T} \sum_{t=1}^T   \left\vert \myS'''\left( \mW_{it}^\top\mtheta, Y_t\right)  \right\vert \left\Vert \mW_{it} \right\Vert^3 \\  
				&= \frac{1}{T}\sum_{t=1}^T \left\vert 2 \phi'''(\mW_{it}^\top\mtheta) +  \phi''''(\mW_{it}^\top\mtheta) (\mW_{it}\mtheta-Y_t) \right\vert \cdot \left\Vert \mW_{it} \right\Vert^3 \\
				&=: \frac{1}{T} \sum_{t=1}^T  g_t(\mtheta), 
			\end{split}
		\end{align}
		and note that
		\begin{align*}
			\begin{split}
				h_T'(\mtheta) &= \frac{1}{T} \sum_{t=1}^T \sgn\left\{ \myS'''\left( \mW_{it}^\top\mtheta, Y_t\right) \right\} \cdot \left( \phi''''' \left(\mW_{it}^\top\mtheta\right) \left( \mW_{it}^\top\mtheta - Y_t\right) + 3\phi''''\left(\mW_{it}^\top\mtheta\right) \right)  \cdot \mW_{it} \cdot \left\Vert \mW_{it} \right\Vert^3 \\
				&=: \frac{1}{T} \sum_{t=1}^T  g'_t(\mtheta).
			\end{split}
		\end{align*}
		
		Concerning the first part of the high level Assumption~\ref{ass:AsymptoticsDSC0}\ref{item:SmoothSDSC}, we use that $h_T(\mtheta)$ is Lipschitz-continuous with (random) Lipschitz constant $L = \sup_{\mtheta \in \mTheta} \Vert h_T'(\mtheta)\Vert$ as $\phi$ is assumed to be five times differentiable  with bounded derivatives; see Assumption~\ref{ass:Philowlevel}\ref{ass:itemlowlevePhiadditionalmoms}. Therefore, we get that
		\begin{align*}
			T^{-1/2} \sup_{\Vert \mtheta - \bar\mtheta_i \Vert \le \delta}  \big| h_T(\mtheta ) - h_T(\bar{\mtheta}_{i}) \big|
			\;\le\; T^{-1/2} \sup_{\Vert \mtheta - \bar\mtheta_i \Vert \le \delta}  L \big\Vert \mtheta - \bar{\mtheta}_{i} \big\Vert  
			\;\le\; T^{-1/2} L \delta.
		\end{align*}
		
		To establish that $T^{-1/2} L \delta = o_\P(1)$, it suffices to verify a uniform law of large numbers for $\Vert h_T'(\mtheta) \Vert$. By taking the supremum of $\Vert g'_t(\mtheta) \Vert$, we obtain the dominating function
		\begin{align*}
			d_4\left(\mW_{it}^\top\mtheta, Y_t\right) := \sup_{\mtheta \in \mTheta} \left\vert  \phi'''''\left(\mW_{it}^\top\mtheta\right) \cdot \left(\mW_{it}^\top\mtheta - Y_t\right)  + 3\phi''''\left(\mW_{it}^\top\mtheta\right) \right\vert \cdot\left\Vert \mW_{it} \right\Vert^4, 
		\end{align*}
		which is bounded in expectation, i.e., $\E\left[d_4\left(\mW_{it}^\top\mtheta, Y_t\right)\right]<\infty$, given Assumptions \ref{ass:Philowlevel}\ref{ass:itemlowlevePhiadditionalmoms}.
		By Assumption \ref{ass:TrueLinearSpecification} the data is stationary and the parameter space $\mTheta$ is compact. As the function $\Vert g'_t(\mtheta)\Vert$ is continuous for all $\mtheta\in\mTheta$ by Assumption \ref{ass:Philowlevel}\ref{ass:itemlowlevePhicont}, we can apply the uniform law of large numbers in \citet[Lemma 2.4]{NM_1994} to get that 
		\begin{align}
			\sup_{\mtheta\in\mTheta} \left\Vert \frac{1}{T}\sum_{t=1}^T \Vert g'_t(\mtheta) \Vert  - \E\big[\Vert g'_t(\mtheta) \Vert\big] \right\Vert \toP 0. 
		\end{align}
		Hence, the desired stochastic Lipschitz bound follows.

		Concerning the second part of the high level Assumption~\ref{ass:AsymptoticsDSC0}\ref{item:SmoothSDSC}, we show that $T^{-1/2} h_T(\bar{\mtheta}_{i}) = o_\P(1)$ by establishing that $h_T(\bar{\mtheta}_{i}) $ satisfies the pointwise law of large numbers.
		Under Assumptions~\ref{ass:TrueLinearSpecification} and \ref{ass:Philowlevel}\ref{ass:itemlowlevePhiadditionalmoms}, \citet[Theorem 3.49]{W_2001} implies that $g_t(\bar{\mtheta}_{i})$ is stationary and $\alpha$-mixing of size $-\size/(\size-2), \size>2$. 
		Moreover, ergodicity of $g_t(\bar{\mtheta}_{i})$ follows since the function itself is a  measurable transformation of $\left(\mW_{it}^\top, Y_t\right)$ \citep[Proposition 3.44]{W_2001}. We apply the triangle inequality to $\left\vert g_t(\bar{\mtheta}_{i})\right\vert$ which, together with Assumption~\ref{ass:Philowlevel}\ref{ass:itemlowlevePhiadditionalmoms} results in
		\begin{align}\label{eqn:hbound}
			\E\left[ g_t(\mtheta)\right] \leq \E\left[ \left\vert \phi'''\left(\mW_{it}\bar\mtheta_i \right)\right\vert \left\Vert \mW_{it} \right\Vert^3  \right] + \E\left[ \left\vert \phi''''\left(\mW_{it}\bar\mtheta_i \right)\right\vert \cdot \left\vert \mW_{it}\bar\mtheta_i - Y_t\right\vert \cdot  \left\Vert \mW_{it} \right\Vert^3  \right] < \infty.
		\end{align}
		Given \eqref{eqn:hbound}, Theorem 3.34 in \citet{W_2001} implies the point-wise law of large numbers and we have that $h_T(\bar\mtheta_i)= \E\left[g_t(\bar\mtheta_i)\right] + o_\P(1)$. Hence, $T^{-1/2} h_T(\bar\mtheta_i)= o_\P(1)$, as $T\to\infty$.
		\color{black}
		
		\bigskip
		For the verification of the high-level Assumption \ref{ass:AsymptoticsDSC0}\ref{item:BahadurMestJoint}, first note that the true parameter $\bar \mtheta_i$ lies in $\interior(\mTheta)$ such that consistency ensures that $\widehat\mtheta_{iT}$ is in the interior of the parameter space with probability approaching one as $T \to \infty$ and the first-order condition $\bm 0 = \nabla_{\mtheta}\widehat{Q}_{iT}(\widehat\mtheta_{iT})$ holds with probability approaching one.\footnote{\label{fn:UniqueMinMeanReg} For \emph{correctly specified} models, the uniqueness of the minimizer in \eqref{eqn:PseudoTrue} follows naturally. Since the score is strictly consistent for the mean, \citet[Theorem 1]{HE_2014} implies that under correct model specification, 
			$\E\left[ \myS(\mW_{it}^\top \bar\mtheta_i,Y_t)  \right] \leq \E\left[ \myS(\mW_{it}^\top \mtheta,Y_t)  \right]$ for all $\mtheta \in \mTheta$, 
			with equality if and only if $\mW^\top_{it}\bar\mtheta_i \overset{\text{a.s.}}{=} \mW^\top_{it}\mtheta$,
			which implies $\left(\bar\mtheta_i -\mtheta\right)^\top \E\left[\mW_{it}\mW^\top_{it}\right]\left(\bar\mtheta_i -\mtheta\right) = 0$.
			As the matrix $\E\left[\mW_{it}\mW^\top_{it}\right]$ is positive definite, this can hold only if $\bar\mtheta_i-\mtheta = \bm 0$ such that $\mtheta=\bar\mtheta_i$ is the \emph{unique} minimizer of $\overline{Q}_{iT}(\mtheta)$.
		}

		We apply the mean-value theorem to get that  
		\begin{align}
			\label{eqn:MeanValueExpansionLoss}
			\bm 0 &= 
			\nabla_{\mtheta}\widehat{Q}_{iT} (\widehat\mtheta_{iT})
			= \nabla_{\mtheta}\widehat{Q}_{iT}\left(\bar\mtheta_{i}\right)+\nabla_{\mtheta\mtheta}\widehat{Q}_{iT}\left(\widetilde\mtheta_{iT}\right)\left(\widehat\mtheta_{iT} - \bar{\mtheta}_{i}\right),
		\end{align}
		where $\widetilde{\mtheta}_{iT}$ is a (component-wise differing) mean value between $\widehat{\mtheta}_{iT}$ and $\bar{\mtheta}_{i}$, such that $\left\Vert \widetilde{\mtheta}_{iT}- \bar\mtheta_i \right\Vert \leq \left\Vert \widehat{\mtheta}_{iT}- \bar\mtheta_i \right\Vert$. 
		Thus, consistency of $\widehat{\mtheta}_{iT}$ and  continuous mapping implies $\left[\nabla_{\mtheta\mtheta}\widehat Q_{iT}\left( \widetilde\mtheta_{iT}\right)\right]^{-1}  = \left[\nabla_{\mtheta\mtheta}\widehat Q_{iT}\left( \bar\mtheta_i\right)\right]^{-1} + o_\P(1)$. 
		Notice that the empirical Hessian $\nabla_{\mtheta\mtheta}\widehat Q_{iT}\left( \widetilde\mtheta_{iT}\right)$ is invertible for $T$ large enough as $\widetilde{\mtheta}_{iT} \toP \bar{\mtheta}_{i}$ and the empirical Hessian converges to the expected Hessian  
		\begin{align*}
			\nabla_{\mtheta\mtheta}\overline{Q}_{iT}(\bar\mtheta_i)  
			&=\E \left[ \phi''(\mW_{it}^\top\bar\mtheta_i)\mW_{it}\mW_{it}^\top + \phi'''(\mW_{it}^\top\bar\mtheta_i)\mW_{it}\mW_{it}^\top(\mW_{it}^\top\bar\mtheta_i - Y_t)  \right]
		\end{align*}
		% $\nabla_{\mtheta\mtheta}\widehat Q_{iT}(\bar{\mtheta}_{i})$ 
		at the pseudo-true parameter, which is invertible by Assumption~\ref{ass:Philowlevel}\ref{ass:itemlowlevePhiCLTcramer}.

		Therefore, rearranging \eqref{eqn:MeanValueExpansionLoss} gives
		\begin{align}
			\label{eqn:phimest}
			\begin{split}
				\sqrt{T}\left(\widehat\mtheta_{iT} - \bar{\mtheta}_{i}\right) &= - \left(\left[ \nabla_{\mtheta\mtheta}\widehat{Q}_{iT}\left(\bar\mtheta_{i}\right)\right]^{-1} + o_\P(1) \right)\sqrt{T}\nabla_{\mtheta}\widehat{Q}_{iT}\left(\bar\mtheta_{i}\right) \\ 
				&=- \left[ \nabla_{\mtheta\mtheta}\widehat{Q}_{iT}\left(\bar\mtheta_{i}\right)\right]^{-1} \sqrt{T}\nabla_{\mtheta}\widehat{Q}_{iT}\left(\bar\mtheta_{i}\right) + o_\P(1) \\
				&= - \left(\frac{1}{T}\sum_{t=1}^T  \phi''(\mW_{it}^\top\bar\mtheta_i)\mW_{it}\mW_{it}^\top + \phi'''(\mW_{it}^\top\bar\mtheta_i)\mW_{it}\mW_{it}^\top(\mW_{it}^\top\bar\mtheta_i - Y_t)  \right)^{-1} \times \\ 
				&\qquad\qquad T^{-1/2} \sum_{t=1}^T \phi''(\mW_{it}^\top \bar\mtheta_{i} ) \mW_{it}(\mW_{it}^\top \bar\mtheta_{i}  - Y_t)   + o_\P(1) \\
				&= - \left( \frac{1}{T} \sum_{t=1}^T \myS''(\mW_{it}^\top \bar \mtheta_i, Y_t)  \mW_{it} \mW_{it}^\top\right)^{-1}  \left( T^{-1/2} \sum_{t=1}^T \myS'(\mW_{it}^\top \bar \mtheta_i, Y_t) \mW_{it} \right) + o_\P(1).
			\end{split}
		\end{align}

		For the case $\DSCp_i = 0$, we first note that $\bar r = \mW_{it}^\top \bar \mtheta_i$ as in the proof of Theorem~\ref{thm:MCBandDSC0}.
		The asymptotic behavior of the unconditional expectation $\bar{r}$ follows as above for $\mW_{it}=1$ and by noting that $\frac{1}{T}\sum_{t=1}^T\phi'''(\bar r)(\bar r - Y_t) \toP 0$. 
		Based on this, we stack the estimators emerging from \eqref{eqn:phimest} to get that 
		\begin{align}
			\label{eqn:Mstacked}
			\begin{split}
				\sqrt{T}	
				\begin{pmatrix}
					\widehat{r}_{T} - \bar r   \\
					\widehat{\mtheta}_{iT} - \bar\mtheta_{i} 
				\end{pmatrix} 
				&  = -
				\begin{pmatrix}
					\left( \frac{1}{T} \sum_{t=1}^T \phi''(\bar r) \right)^{-1},   & 0 \\
					\multicolumn{2}{c}{\left(\frac{1}{T}\sum_{t=1}^T  \phi''(\bar r)\mW_{it}\mW_{it}^\top + \phi'''(\bar r)\mW_{it}\mW_{it}^\top(\bar r - Y_t)  \right)^{-1}}
				\end{pmatrix} \times 
				\\&\qquad
				T^{-1/2} \sum_{t=1}^T \phi''(\bar r) \mW_{it}(\bar r  - Y_t)    + o_\P(1). 
			\end{split}
		\end{align}
		The law of large numbers established in \eqref{eqn:phiLLNhesse} ensures convergence of the first terms in \eqref{eqn:phimest} and \eqref{eqn:Mstacked} and invertibility follows from Assumption \ref{ass:Philowlevel}\ref{ass:itemlowlevePhihessian} for $T$ large enough.
		Asymptotic normality of the second terms in \eqref{eqn:phimest} and  \eqref{eqn:Mstacked} follows from the central limit theorem in \eqref{eqn:phiCLT}.
		This verifies Assumption \ref{ass:AsymptoticsDSC0}\ref{item:BahadurMestJoint} and hence concludes this proof.
	\end{proof}

	\begin{proof}[Proof of Proposition~\ref{prop:Quantiles_Verification}]
		We check the conditions of Assumption~\ref{ass:Normality}, for which we use
		\begin{align*}
			\widehat{Q}_{iT}(\mtheta) &:= \frac{1}{T} \sum_{t=1}^T \myS(\mW_{it}^\top \mtheta,Y_t)  
			\qquad \text{ and } \qquad \\
			\overline{Q}_{iT}(\mtheta) &:= \E \left[ \widehat{Q}_{iT}(\mtheta) \right] = \E \left[\frac{1}{T} \sum_{t=1}^T  \myS(\mW_{it}^\top \mtheta,Y_t) \right]
			= \E \left[\myS(\mW_{it}^\top \mtheta,Y_t) \right],
		\end{align*}
		with 
		\begin{align*}
			\myS(\mW_{it}^\top \mtheta,Y_t)
			=  \big( \mathds{1}\{\mW_{it}^\top \mtheta \ge Y_t\} - \alpha \big) \big( g(\mW_{it}^\top \mtheta) - g(Y_t) \big).
		\end{align*}
		While the scoring function $\myS$ itself is non-smooth (at $Y_t = \mW_{it}^\top \mtheta$), we show that its expectation is twice (as required below) continuously differentiable everywhere, which implies condition~\ref{item:ContExpScore} of Assumption~\ref{ass:Normality}.
		For this, we use the law of iterated expectations and write 
		\begin{align}
			\nonumber 
			\overline{Q}_{iT}(\mtheta) 
			% = \E \big[ \myS\big(\mW_{it}^\top \mtheta, Y_t \big) \big] 
			&= \E \Big[ \E \big[
			(\mathds{1}\{Y_t \le \mW_{it}^\top \mtheta \} - \alpha) (g(\mW_{it}^\top \mtheta) - g(Y_t)) \mid \mathcal{W}_{it} \big] \Big] \\
			&= \E \Big[ \big( F_{it}(\mW_{it}^\top \mtheta) - \alpha \big) g(\mW_{it}^\top \mtheta)\Big] 
			- \E \left[ \int_{-\infty}^{\mW_{it}^\top \mtheta} g(y) f_{it}(y) \mathrm{d} y \right] + \alpha \E[g(Y_t)].
			\label{eqn:GPL_Differentiable}
		\end{align}
		We interchange expectation and differentiation by the dominated convergence theorem  as there exist functions $G$ and $G'$ such that $\Vert \mW_{it} g(\mW_{it}^\top \mtheta) \Vert  \le G(\mW_{it})$ with $\E|G(\mW_{it})| < \infty$ (as $0 < f_{it}(\cdot) \le K < \infty$ on its support) and $\Vert  \mW_{it} g'(\mW_{it}^\top \mtheta) \Vert \le G'(\mW_{it})$ with $\E|G'(\mW_{it})| < \infty$.
		The first term in \eqref{eqn:GPL_Differentiable} is continuously differentiable as $F_{it}$ and $g$ are continuously differentiable, and has derivative $\E \big[ \mW_{it} f_{it}(\mW_{it}^\top \mtheta) g(\mW_{it}^\top \mtheta) + \mW_{it} ( F_{it}(\mW_{it}^\top \mtheta) - \alpha ) g'(\mW_{it}^\top \mtheta) \big]$,
		and the third term does not depend on $\mtheta$.
		For the middle term, as $g$ and $f_{it}$ are continuous, the fundamental theorem of calculus implies
		$\nabla_{\mtheta} \E \left[  \int_{-\infty}^{\mW_{it}^\top \mtheta} g(y) f_{it}(y) \mathrm{d} y \right] = \E \left[ \mW_{it} g(\mW_{it}^\top \mtheta) f_{it}(\mW_{it}^\top \mtheta) \right]$.
		Hence, we obtain differentiability such that condition~\ref{item:ContExpScore} of Assumption~\ref{ass:Normality} is satisfied and get that 
		\begin{align}
			\label{eqn:QuantNablaQ}
			\nabla_{\mtheta} \overline{Q}_{iT}(\mtheta) = \E \left[ \mW_{it} g'(\mW_{it}^\top \mtheta) \big( F_{it}(\mW_{it}^\top \mtheta) - \alpha \big) \right]. 
		\end{align}
		For the second derivative, which will be used below, we similarly get that
		\begin{align}
			\label{eqn:QuantNabla2Q}
			\nabla_{\mtheta \mtheta} \overline{Q}_{iT}(\mtheta) = \E \Big[ (\mW_{it} \mW_{it}^\top) \left\{ g''(\mW_{it}^\top \mtheta) \big( F_{it}(\mW_{it}^\top \mtheta) - \alpha \big) + g'(\mW_{it}^\top \mtheta) f_{it}(\mW_{it}^\top \mtheta)\right\} \Big],
		\end{align} 
		which is positive definite at $\mtheta = \bar\mtheta_i$ by assumption.
		(Under correct model specification, we have $F_{it}(\mW_{it}^\top \bar\mtheta_i) = \alpha$, and using the tick loss, we have $g''(z) = 0$. In either of these cases, we obtain the simplification $\nabla_{\mtheta \mtheta} \overline{Q}_{iT}(\bar\mtheta_i) = \E \left[ (\mW_{it} \mW_{it}^\top) g'(\mW_{it}^\top \bar\mtheta_i) f_{it}(\mW_{it}^\top \bar\mtheta_i) \right]$.)
		
		We next show condition~\ref{item:SE} of Assumption~\ref{ass:Normality}, i.e., stochastic equicontinuity of the empirical process $\nu_{iT}(\mtheta)$ from \eqref{eqn:EmpProcessSE}.
		For this, we first show the following lemma:
		\begin{lemma}
			\label{lem:GPL_Lipschitz}
			Given that the function $g$ is Lipschitz with constant $L_g$, the GPL loss \eqref{eqn:GPLLoss} is Lipschitz with constant $L_g \max(\alpha, 1-\alpha)$.
		\end{lemma}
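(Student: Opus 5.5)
The claim is that $x \mapsto \myS(x,y) = (\mathds{1}\{y \le x\} - \alpha)(g(x) - g(y))$ is Lipschitz in its first argument, uniformly in $y$, with constant $L_g \max(\alpha,1-\alpha)$. I will fix $y$, take two actions $x_1 \le x_2$ without loss of generality, and split into three cases according to where $y$ lies relative to $x_1$ and $x_2$.

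\emph{Case 1: $y \le x_1 \le x_2$.} Both indicators equal one, so
\[
\myS(x_2,y)-\myS(x_1,y)=(1-\alpha)\big(g(x_2)-g(x_1)\big).
\]
Its absolute value is at most $(1-\alpha)L_g|x_2-x_1|$.

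\emph{Case 2: $x_1 \le x_2 < y$.} Both indicators vanish, and the difference is $-\alpha(g(x_2)-g(x_1))$. This is bounded by $\alpha L_g |x_2-x_1|$.

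\emph{Case 3: $x_1 < y \le x_2$.} This is the only step needing care, because the indicator jumps between the two arguments. Here
\[
\myS(x_2,y)-\myS(x_1,y)=(1-\alpha)\big(g(x_2)-g(y)\big)+\alpha\big(g(x_1)-g(y)\big).
\]
Since $g$ is increasing, the first summand is nonnegative and the second is nonpositive, so they partially cancel. The absolute value is therefore at most the larger of the two summands in magnitude:
\[
\max\big((1-\alpha)(g(x_2)-g(y)),\ \alpha(g(y)-g(x_1))\big).
\]
Using $|x_2-y| \le |x_2-x_1|$ and $|y-x_1| \le |x_2-x_1|$, this is at most $\max(\alpha,1-\alpha)L_g|x_2-x_1|$.

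This opposite-sign cancellation is the key point. Bounding the two summands separately with the triangle inequality would only give the constant $L_g$, not $L_g\max(\alpha,1-\alpha)$. Equivalently, one can note that $\myS(\cdot,y)$ is continuous at $x=y$ (the value there is $0$ from both sides) and piecewise $L_g\max(\alpha,1-\alpha)$-Lipschitz on each side of $y$, and chain the two pieces through the point $y$.

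The cases cover every configuration (the endpoint $y=x_1$ falls in Case 1). Combining them gives the claimed constant. The bound is uniform in $y$, which is what the subsequent stochastic equicontinuity argument via \citet{hansen1996stochastic} needs, applied with $x=\mW_{it}^\top\mtheta$ so that the Lipschitz constant in $\mtheta$ becomes $L_g\max(\alpha,1-\alpha)\Vert\mW_{it}\Vert$.
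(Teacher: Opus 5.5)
Your proof is correct and is essentially the paper's argument. The paper writes $\myS(x,y)=\rho(g(x),g(y))$, where $\rho(\cdot,v)$ is continuous and piecewise linear with slopes of absolute value $\alpha$ and $1-\alpha$, hence $\max(\alpha,1-\alpha)$-Lipschitz, and then composes with the $L_g$-Lipschitz $g$; your three cases verify the same piecewise-linear bound directly in $x$.

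One correction to your commentary: the opposite-sign cancellation in Case 3 is not needed. The triangle inequality with the exact split $|x_2-x_1|=|x_2-y|+|y-x_1|$ already gives $(1-\alpha)L_g|x_2-y|+\alpha L_g|y-x_1|\le\max(\alpha,1-\alpha)L_g|x_2-x_1|$, which is precisely your chaining-through-$y$ remark. Only the cruder bound of each summand by $|x_2-x_1|$ loses the factor.
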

		
		\begin{proof}[Proof of Lemma~\ref{lem:GPL_Lipschitz}]
			We start to write the function $\myS$ from \eqref{eqn:GPLLoss} as $\myS(x,y) = \rho(g(x), g(y))$ with
			\begin{align}
				\rho(u,v) = 
				\begin{cases} 
					\alpha (v-u), \quad &u \le v, \\
					(1-\alpha) (u-v), \quad &u > v,
				\end{cases}    
			\end{align}
			which is a piecewise linear function with slopes $\alpha$ and $1-\alpha$, respectively.
			Hence, we get that for all $u_1, u_2, v \in \R$ that
			$\big\vert \rho(u_1,v) - \rho(u_2,v) \big\vert \le \max(\alpha, 1-\alpha) \big\vert  u_1 - u_2 \big\vert$ and as a direct consequence,
			\begin{align*}
				\big\vert \myS(x_1,y) - \myS(x_2,y) \big\vert
				&= \big\vert \rho(g(x_1), g(y)) - \rho(g(x_2), g(y)) \big\vert \\
				&\le \max(\alpha, 1-\alpha) \big\vert g(x_1) - g(x_2) \big\vert \\
				&\le L_g \max(\alpha, 1-\alpha) \big\vert x_1 - x_2 \big\vert
			\end{align*}
			holds for all $x_1, x_2, y \in \R$, which concludes the proof of the lemma.
		\end{proof}
		
		The following lemma shows equicontinuity of $\nu_{iT} (\mtheta)$ from \eqref{eqn:EmpProcessSE}.
		
		\begin{lemma}
			\label{lem:GPL_Equicontinuous}
			Assume that the function $g$ from \eqref{eqn:GPLLoss} is Lipschitz continuous, $(Y_t, \mW^\top_{it})$ is $\alpha$-mixing of size $-(s-\tilde{k})/(\tilde{k}s)$ for some  $s > \tilde k > k \ge 2$ (where $\mTheta \subseteq \mathbb{R}^k$), and $\E \big\Vert \mW_{it} \big\Vert^s < \infty$, $\E \big| g(\mW_{it}^\top\mtheta) - g(Y_t) \big|^s < \infty$.
			Then, the empirical process $\nu_{iT} (\mtheta)$ from \eqref{eqn:EmpProcessSE} based on the GPL loss $\myS$ from \eqref{eqn:GPLLoss} is stochastically equicontinuous; see \eqref{eqn:StochasticEquicontinuity}.
		\end{lemma}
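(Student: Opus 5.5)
The plan is to mirror the argument used for the Bregman case in the proof of Proposition~\ref{prop:verification_phi}: we invoke \citet[Theorem~3]{hansen1996stochastic} for mixing arrays, and we use the Lipschitz-in-parameter property that comes from Lemma~\ref{lem:GPL_Lipschitz}. Stochastic equicontinuity of $\nu_{iT}(\mtheta)$ then follows from a bracketing/Lipschitz empirical process bound, which needs the three conditions of \citet[Assumption~4]{hansen1996stochastic} for the class $F := \{\myS(\mW_{it}^\top\mtheta, Y_t): \mtheta\in\mTheta\}$.

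\textbf{Step 1 (Lipschitz in $\mtheta$).} By Lemma~\ref{lem:GPL_Lipschitz} and Cauchy--Schwarz, for all $\mtheta,\bm\tau\in\mTheta$,
\begin{align*}
\big|\myS(\mW_{it}^\top\mtheta,Y_t)-\myS(\mW_{it}^\top\bm\tau,Y_t)\big|
\le L_g\max(\alpha,1-\alpha)\,\big|\mW_{it}^\top(\mtheta-\bm\tau)\big|
\le b_t\,\Vert\mtheta-\bm\tau\Vert,
\end{align*}
with $b_t := L_g\max(\alpha,1-\alpha)\Vert\mW_{it}\Vert$. This random Lipschitz constant does not depend on $\mtheta$, and $\E|b_t|^s<\infty$ by the moment assumption on $\mW_{it}$. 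This gives the functional class the Lipschitz-in-parameter structure required by Hansen, with parameter dimension $k$.

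\textbf{Step 2 (verify Hansen's Assumption~4).} The mixing size $-(s-\tilde k)/(\tilde k s)$ with $s>\tilde k>k$ is exactly Hansen's first condition, once we note via \citet[Theorem~3.49]{W_2001} that measurable functions of the mixing process inherit the mixing size. For the envelope/moment condition, stationarity gives
\begin{align*}
\limsup_{T\to\infty}\Big\{\tfrac1T\sum_{t=1}^T(\E|\myS(\mW_{it}^\top\mtheta,Y_t)|^s)^{2/s}\Big\}^{1/2}=(\E|\myS(\mW_{it}^\top\mtheta,Y_t)|^s)^{1/s}.
\end{align*}
This is finite because $|\myS(x,y)|\le\max(\alpha,1-\alpha)|g(x)-g(y)|$ and $\E|g(\mW_{it}^\top\mtheta)-g(Y_t)|^s<\infty$. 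The analogous bound for the Lipschitz constants $b_t$ follows from $\E\Vert\mW_{it}\Vert^s<\infty$.

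\textbf{Step 3 (conclude).} Hansen's Theorem~3 then yields his Condition~1, namely stochastic equicontinuity of $\nu_{iT}$ with respect to the $L_q$ pseudo-metric $\rho_q(\cdot)=(\E|\cdot|^q)^{1/q}$. It remains to translate this into the Euclidean-ball statement \eqref{eqn:StochasticEquicontinuity}. Step~1 gives
\begin{align*}
\rho_q\big(\myS(\mW_{it}^\top\mtheta,Y_t)-\myS(\mW_{it}^\top\bm\tau,Y_t)\big)\le(\E b_t^q)^{1/q}\Vert\mtheta-\bm\tau\Vert,
\end{align*}
so any Euclidean $\delta$-neighborhood lies inside a $\rho_q$-neighborhood of radius $C\delta$. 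Choosing $\delta$ small then gives \eqref{eqn:StochasticEquicontinuity}.

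The main obstacle is Step~2: matching the exact form of Hansen's conditions, in particular the relation between $s$, $\tilde k$, $k$ and the mixing size, and which $q$ is used. The rest of the argument is bookkeeping. Non-smoothness of $\myS$ causes no difficulty here, because Lemma~\ref{lem:GPL_Lipschitz} provides global Lipschitz continuity.
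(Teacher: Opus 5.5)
Your proposal is correct and follows the paper's proof essentially step for step: the Lipschitz bound with constant $L_g\max(\alpha,1-\alpha)\Vert\mW_{it}\Vert$ from Lemma~\ref{lem:GPL_Lipschitz}, the verification of the three parts of Hansen's Assumption~4 via stationarity and the $s$-th moment conditions, and the containment of Euclidean neighborhoods in $\rho_q$-neighborhoods to recover \eqref{eqn:StochasticEquicontinuity}. The only extra detail in the paper is an explicit appeal to Markov's inequality, which passes from Hansen's $L_q$-type conclusion to the probability statement.
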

		
		\begin{proof}[Proof of Lemma~\ref{lem:GPL_Equicontinuous}]
			Using Lemma~\ref{lem:GPL_Lipschitz}, we get for all $\mtheta, \bm{\tau} \in \mTheta$ that 
			\begin{align*}
				\left| \myS \big(\mW_{it}^\top \mtheta, Y_t \big) - \myS \big(\mW_{it}^\top \bm\tau, Y_t \big) \right| 
				&\le \big|\mW_{it}^\top \mtheta - \mW_{it}^\top \bm\tau \big| L_g \max(\alpha, 1-\alpha) \\
				&\le L_g \max(\alpha, 1-\alpha)  \big\Vert \mW_{it} \big\Vert  \big\Vert \mtheta - \bm\tau \big\Vert
			\end{align*}
			is Lipschitz continuous.
			Hence, we employ \citet[Theorem~3]{hansen1996stochastic} for mixing arrays, for which we verify his Assumption~4.
			For this, we choose 
			the function space $F := \big\{ \myS(\mW_{it}^\top\mtheta, Y_t): \; \mtheta \in \mTheta \big\}$.
			
			First, the mixing condition from Assumption~\ref{ass:QuantilesLowLevel} implies the first condition of \citet[Assumption~4]{hansen1996stochastic}.
			The second condition of \citet[Assumption~4]{hansen1996stochastic} holds as we impose stationarity and $\E \big| g(\mW_{it}^\top\mtheta) - g(Y_t) \big|^s < \infty$ such that for any $\myS(\mW_{it}^\top\mtheta, Y_t) \in F$,
			\begin{align*}
				\limsup_{T\to\infty} \left\{ \frac{1}{T} \sum_{t=1}^T \left( \E \left\vert \myS(\mW_{it}^\top\mtheta, Y_t) \right\vert^s \right)^{2/s} \right\}^{1/2}
				= \left( \E \left\vert \myS(\mW_{it}^\top\mtheta, Y_t) \right\vert^s \right)^{1/s} 
				< \infty.
			\end{align*}
			Third, a similar condition holds for the Lipschitz constant
			\begin{align*}
				\limsup_{T\to\infty} \left\{ \frac{1}{T} \sum_{t=1}^T \left( \E \left\vert L_g \max(\alpha, 1-\alpha)  \big\Vert \mW_{it} \big\Vert \right\vert^s \right)^{2/s} \right\}^{1/2}
				< \infty,
			\end{align*} 
			as $\E \big\Vert \mW_{it} \big\Vert^s < \infty$ by assumption.
			
			Hence, we can employ \citet[Theorem~3]{hansen1996stochastic} to conclude that their Condition~1 holds with the $L_q$-norm $\rho_q(\cdot) = \left( \E \big\vert \cdot \big\vert^q\right)^{1/q}$.
			To see that the formulation of stochastic equicontinuity in \citet[Condition~1]{hansen1996stochastic} implies \eqref{eqn:StochasticEquicontinuity}, notice that
			\begin{align*}
				\Big\{ f_1, f_2 \in F: \; \rho_q(f_1 - f_2) < \delta  \Big\}
				&= \Big\{ \mtheta, \bm{\tau} \in \mTheta: \; \rho_q \big( \myS(\mW_{it}^\top\mtheta, Y_t)  - \myS(\mW_{it}^\top\bm{\tau}, Y_t)  \big) < \delta  \Big\} \\
				&= \Big\{ \mtheta, \bm{\tau} \in \mTheta: \; \E \big\vert \myS(\mW_{it}^\top\mtheta, Y_t)  - \myS(\mW_{it}^\top\bm{\tau}, Y_t) \big\vert^q  < \delta^q  \Big\} \\
				&\supseteq \left\{ \mtheta, \bm{\tau} \in \mTheta: \; \big\Vert \mtheta - \bm{\tau} \big\Vert^q  < \frac{\delta^q}{L_g^q \max(\alpha, 1-\alpha)^q \E \Vert \mW_{it} \Vert^q}  \right\},
			\end{align*}
			such that the supremum over the the left-hand side (in \citet{hansen1996stochastic}) is larger or equal than the supremum over the right-hand side (in \eqref{eqn:StochasticEquicontinuity}).
			Furthermore, Markov's inequality can be used to transfer the $L_q$-statement in \citet{hansen1996stochastic} to the probability statement in \eqref{eqn:StochasticEquicontinuity}, which concludes this proof.
		\end{proof}

		We continue to show consistency of $\widehat{\mtheta}_{iT}$ through \citet[Theorem~2.1]{NM_1994}, which is required for the verification of Assumption~\ref{ass:Normality}~\ref{item:EstOP1}. 
		Notice that $\mTheta$ is compact by assumption and that the population objective $\overline{Q}_{iT}(\mtheta)$ is continuous on $\mTheta$ as shown above.
		The objective function $\mtheta \mapsto \overline{Q}_{iT}(\mtheta) = \E \left[\myS(\mW_{it}^\top \mtheta ,Y_t) \right]$ is \emph{uniquely} minimized by $\bar \mtheta_i$ by Assumption~\ref{ass:TrueLinearSpecification}.\footnote{
			Under correct specification of the linear quantile regression, the uniqueness follows immediately:
			As the GPL scoring function $\myS$ from \eqref{eqn:GPLLoss} is strictly consistent for the $\alpha$-quantile, we employ \citet[Theorem~1]{HE_2014} to conclude that
			$\E \left[\myS(\mW_{it}^\top \bar \mtheta_i,Y_t) \right] 
			\le \E \left[\myS(\mW_{it}^\top \mtheta,Y_t) \right]$,
			with equality if and only if $\mW_{it}^\top \mtheta = \mW_{it}^\top \bar \mtheta_i$, which is equivalent to $\mtheta = \bar \mtheta_i$ as shown in footnote \ref{fn:UniqueMinMeanReg} in the proof of Proposition~\ref{prop:verification_phi}.
			Thus, the objective function $\mtheta \mapsto \overline{Q}_{iT}(\mtheta) = \E \left[\myS(\mW_{it}^\top \mtheta ,Y_t) \right]$ is \emph{uniquely} minimized by $\bar \mtheta_i$.
		}

		It remains to show the ULLN, $\sup_{ \mtheta \in \mTheta}  \left| \widehat{Q}_{iT}(\mtheta) -  \overline{Q}_{iT}(\mtheta)\right| = o_\P(1)$ 
		through \citet[Theorem 21.9]{Davidson1994}.
		For this, the pointwise LLN, $\widehat{Q}_{iT}(\mtheta) -  \overline{Q}_{iT}(\mtheta) = o_\P(1)$ for all $\mtheta \in \mTheta$, follows by \citet[Theorem~3.34]{W_2001} as $\myS(\mW_{it}^\top \mtheta,Y_t)$ is ergodic by \citet[Proposition~3.44]{W_2001} and as $\E \left| \myS(\mW_{it}^\top \mtheta,Y_t) \right| \le \E \big| g(\mW_{it}^\top \mtheta) \big| + \E \big| g(Y_t) \big|  < \infty$ by assumption.
		Furthermore, the second condition of \citet[Theorem 21.9]{Davidson1994}, that $\big\{ \widehat{Q}_{iT}(\mtheta) - \overline{Q}_{iT}(\mtheta) \big\}$ is stochastically equicontinuous, follows immediately as 
		\begin{align*}
			\nu_{iT} (\mtheta) 
			= T^{-1/2} \sum_{t=1}^T \myS( \mW_{it}^\top \mtheta, Y_t) - \E \big[ \myS (\mW_{it}^\top \mtheta , Y_t ) \big] = T^{1/2} \left( \widehat{Q}_{iT}(\mtheta) - \overline{Q}_{iT}(\mtheta) \right)
		\end{align*}
		is shown to be stochastically equicontinuous in  Lemma~\ref{lem:GPL_Equicontinuous} above.
		
		Thus, consistency holds and we continue to show asymptotic normality through \citet[Theorem~7.1]{NM_1994}.
		The conditions $\widehat{Q}_{iT}(\widehat{\mtheta}_{iT}) \le \inf_{\mtheta \in \mTheta} \widehat{Q}_{iT}(\mtheta) - o_P(T^{-1})$ is satisfied as $\widehat{\mtheta}_{iT}$ is defined as the minimizer of $\widehat{Q}_{iT}(\mtheta)$ over $\mTheta$.
		
		Item~(i) of \citet[Theorem~7.1]{NM_1994} holds as $\overline{Q}_{iT}(\mtheta)$ is minimized on $\mTheta$ at $\bar \mtheta_i$.
		Item~(ii) of \citet[Theorem~7.1]{NM_1994}, $\bar \mtheta_i \in \operatorname{int}(\mTheta)$ holds by assumption, and item~(iii),
		twice continuous differentiability of $\overline{Q}_{iT}(\mtheta)$ with second derivative $\nabla_{\mtheta \mtheta} \overline{Q}_{iT}(\bar\mtheta_i)$ (at $\bar \mtheta_i$), is shown above around \eqref{eqn:QuantNabla2Q} and the matrix $\nabla_{\mtheta \mtheta} \overline{Q}_{iT}(\bar\mtheta_i)$ is non-singular by assumption.
		
		We continue to show item~(iv), i.e., that a multivariate CLT holds for 
		\begin{align}
			\sqrt{T} \widehat{D}_{iT} (\bar\mtheta_{i}) =  T^{-1/2} \sum_{t=1}^T \mW_{it} g'(\mW_{it}^\top \bar\mtheta_{i}) \big( \mathds{1}\{Y_t \le \mW_{it}^\top \bar\mtheta_i \} - \alpha \big),
		\end{align} 
		which is the derivative of $\widehat{Q}_{iT}(\bar\mtheta_{i})$ almost surely (except when $Y_t = \mW_{it}^\top \bar\mtheta_{i}$).
		We use the time-series CLT in \citet[Theorem 5.20]{W_2001}
		and note that $\nabla_{\mtheta}\overline{Q}_{iT}(\bar\mtheta_i) = \E[ \widehat{D}_{iT} (\bar\mtheta_{i})] = 0$; see \eqref{eqn:QuantNablaQ}. 
		For this, let $\mlambda\in\R^k$ with $\mlambda^\top \mlambda=1$ and consider $\mlambda^\top \sqrt{T} \widehat{D}_{iT} (\bar\mtheta_{i})$ through the Cramér-Wold device.

		First, by the mixing condition from Assumption~\ref{ass:QuantilesLowLevel} with mixing coefficients $\alpha_m$, we have that $\sum_{m=1}^\infty \alpha_m^{(s-\tilde{k})/(\tilde{k}s)} < \infty$ for $s > \tilde k > k \ge 2$.
		As $\frac{s-\tilde{k}}{\tilde{k}s} < \frac{s-2}{s}$ for any $s > 2$ and $\tilde{k} > 2$, and $\alpha_m < 1$, we also have that     
		$\sum_{m=1}^\infty \alpha_m^{(s-2)/{s}} < \infty$, such that the sequence $(Y_t, \mW^\top_{it})_{t \in \N}$ is also strong mixing of size $-s/(s-2)$.
		\citet[Theorem 3.49]{W_2001} further implies that the sequences $\mlambda^\top \mW_{it} g'(\mW_{it}^\top \bar\mtheta_{i}) \big( \mathds{1}\{Y_t \le \mW_{it}^\top \bar\mtheta_i \} - \alpha \big)$ are mixing of the same size as the underlying random variables.
		Furthermore, for $s >2$, we have
		\begin{align*}
			\E \left[ \left| g'(\mW_{it}^\top\bar\mtheta_i) \mlambda^\top \mW_{it}
			\big( \mathds{1}\{Y_t \le \mW_{it}^\top \bar\mtheta_i \} - \alpha \big) \right|^s \right] 
			\le \E \left[ \left| g'(\mW_{it}^\top\bar\mtheta_i) \mlambda^\top \mW_{it} \right|^s \right]  
			\le \E \left[ \Vert g'(\mW_{it}^\top\bar\mtheta_i) \mW_{it} \Vert^s \right],
		\end{align*}
		which is finite by assumption, as $g'$ is bounded from above.
		Using that    
		\begin{align*}
			\mlambda^\top \mathbf{V}_T \mlambda = \var \left( T^{-1/2} \sum_{t=1}^{T} g'(\mW_{it}^\top\bar\mtheta_i) 
			\big( \mathds{1}\{Y_t \le \mW_{it}^\top \bar\mtheta_i \} - \alpha \big) \mlambda^\top \mW_{it} \right) 
		\end{align*}
		is uniformly bounded away from zero by assumption, \citet[Theorem 5.20]{W_2001} together with the Cramér-Wold device yields that
		\begin{align}
			\sqrt{T} \, \mathbf{V}_{iT}^{-1/2} \widehat{D}_{iT} (\bar\mtheta_{i}) \toD \mathcal{N}(0, I_k).
		\end{align}
		
		Finally, item~(v) of \citet[Theorem~7.1]{NM_1994} is implied by stochastic equicontinuity of the process $\nu_{iT}$ shown in Lemma~\ref{lem:GPL_Equicontinuous}; see the discussion on page 2187 of \citet{NM_1994}.
		
		Hence, we can conclude that
		\begin{align*}
			\sqrt{T}  \Big( \big(\nabla_{\mtheta \mtheta} \overline{Q}_{iT}(\bar\mtheta_i) \big)^{-1} \mathbf{V}_{iT} \big(\nabla_{\mtheta \mtheta} \overline{Q}_{iT}(\bar\mtheta_i) \big)^{-1} \Big)^{-1/2} \left( \widehat{\mtheta}_{iT} - \bar \mtheta_i \right) \toD \mathcal{N}(0, I_k),
		\end{align*}
		which particularly implies $\sqrt{T} \big( \widehat{\mtheta}_{iT} - \bar\mtheta_i \big) = \mathcal{O}_\P(1)$.
		Invoking $\mW_{it} = 1$, this also implies $\sqrt{T} \big( \widehat{r}_T - \bar r \big) = \mathcal{O}_\P(1)$ such that condition (i) of Assumption~\ref{ass:Normality} holds.
		
		The high-level Assumption \ref{ass:Normality}\ref{item:CLT} holds by the same arguments following \eqref{eqn:Phiscorecrameruniv}, applied to the GPL loss function and using the respective moment conditions from Assumption~\ref{ass:QuantilesLowLevel}.
	\end{proof}

	\begin{proof}[\bf Proof of Proposition \ref{prop:sim}]
		Consider the squared error scoring function and notice that
		\begin{align} \label{eq:eq:appsim_mcb}
			\begin{split}
				\MCBp_{i} &= \E \Bigr[ \myS(X_{it},Y_t) -  \myS(\E[Y_t \mid \mW_{it}],Y_t)  \Bigr]\\
				&=  \E \Bigr[ (X_{it} - Y_t)^2 -  (\E[Y_t \mid \mW_{it}] - Y_t)^2  \Bigr] \\
				%&=  \E\Bigr[X_{it}^2 - \E[Y_{t}\mid \mW_{it}]^2 - 2X_{it}Y_t + 2Y_t\E[Y_t\mid  \mW_{it} ]\Bigr] \\
				&=  \E[X_{it}^2] - \E[\E[Y_t\mid \mW_{it}]^2] - 2\E[X_{it}Y_t] + 2\E[Y_t\E[Y_{t}\mid \mW_{it}]] \\
				&=  \E[X_{it}^2] + \E[\E[Y_t\mid \mW_{it}]^2] - 2\E[X_{it}Y_t] ,  
			\end{split}
		\end{align}
		and
		\begin{align}\label{eq:eq:appsim_dsc}
			\begin{split}
				\DSCp_i &= \E \Bigr[\myS(\E[Y_t],Y_t)\Bigr] - \E \Bigr[\myS(\E[Y_t \mid \mW_{it}],Y_t)\Bigr] \\
				&= \E \Bigr[ ( \E[Y_t]-Y_t )^2  - ( \E[Y_t \mid \mW_{it}]-Y_t )^2 \Bigr]  \\
				%&= 	\E \Bigr[  \E[Y_t]^2 - 2\E[Y_t]Y_t + Y_t^2 -  \E[Y_t\mid  \mW_{it} ]^2 + 2\E[Y_t\mid \mW_{it}]Y_t - Y_t^2 \Bigr]   \\
				&= - \E\Bigr[ \E[Y_t \mid \mW_{it} ] ^2\Bigr] + 2 \E\Bigr[Y_t \E[Y_t \mid \mW_{it}]\Bigr]  \\
				&=   \E\Bigr[ \E[Y_t \mid \mW_{it} ] ^2\Bigr],  
			\end{split}
		\end{align}
		where the final equality follows from the law of iterated expectations.
		
		Recall that $ \varsigma = \var(M_t) =  \var(K_t)   = \var(L_t)  = \var(N_t)  = \dfrac{1}{1-\beta^2} $. 
		As the processes $ M_t, K_t,  L_t, N_t$ are independent, it also follows that  $\E[M_t K_t] = \E[M_t L_t] = \E[M_t N_t] = \E[K_t L_t] = \E[K_t N_t] = \E[L_t N_t] = 0$. 
		Hence, for forecaster $i=1$, we have $\E[X_{1t}] = \delta_0$ and 
		\begin{align}  
			\label{eq:appsim_X1sq}
			\E[X_{1t}^2]  &= \delta_0^2+(\delta_M^2+\delta_K^2+\delta_N^2)  \varsigma = \delta_0^2+\varsigma \cdot \bm{\delta}^\top \bm\delta, \qquad\text{and} \\
			\label{eq:appsim_X1Ycov}
			\E[X_{1t}Y_t] &= (\delta_K\gamma_K+\gamma_M\delta_M+\gamma_N\delta_N) \varsigma = \varsigma\cdot \bm\delta^\top \bm\gamma.
		\end{align}
		For forecaster $ i=2 $, we accordingly have $\E[X_{2t}] = \xi_0$ and 
		\begin{align}
			\label{eq:appsim_X2sq}
			\E[X_{2t}^2]  &= \xi_0^2+(\xi_M^2+\xi_L^2+ \xi_N^2) \varsigma =  \xi_0^2 + \varsigma\cdot\bm\xi^\top \bm\xi, \qquad\text{and} \\
			\label{eq:appsim_X2Ycov}
			\E[X_{2t}Y_t] &= (\xi_M\gamma_M+\gamma_L\xi_L+\xi_N\gamma_N) \varsigma = \varsigma\cdot \bm\xi^\top \bm\gamma.
		\end{align}
		
		For the term $\E\bigr[ \E[Y_t \mid \mW_{it} ]^2 \bigr]$, with $ \mW_{it} = (1,X_{it})^\top $, we use $ \E[Y_t \mid \mW_{it}] =  \E[Y_t \mid X_{it}]$. 
		Given the linearity of our DGP in \eqref{eqn:SimProcess} and as $\E[Y_t] = 0$, we get that  
		\begin{align*}
			\E[Y_t \mid \mW_{it}] =  \mW^\top_{it}  \bar\mtheta_i =  \bar{k}_i +  \bar\vartheta_{i} X_{it}, 
		\end{align*}
		with 
		\begin{align*}
			\bar\vartheta_{i} =  \cov(X_{it},Y_t) / \var(X_{it}) \quad \text{and} \quad  \bar{k}_i =  - \bar\vartheta_{i} \cdot \E[X_{it}].
		\end{align*}
		
		Using the expressions \eqref{eq:appsim_X1sq}--\eqref{eq:appsim_X2Ycov}, we get
		\begin{align} 
			\label{eq:appsim_YgivenX1sq}
			\E \big[ \E[Y_t \mid \mW_{1t}]^2 \big] &= 
			\E \left[ \left(\frac{\bm\delta^\top \bm\gamma}{\bm{\delta}^\top \bm{\delta}} \right)^2 (X_{1t} - \delta_0)^2 \right]
			= \varsigma \cdot \frac{(\bm\delta^\top \bm\gamma)^2}{\bm{\delta}^\top \bm{\delta}}, \\
			\label{eq:appsim_YgivenX2sq}
			\E \big[ \E[Y_t \mid  \mW_{2t}]^2 \big] &=      \E \left[ \left(\frac{\bm\xi^\top \bm\gamma}{\bm{\xi}^\top \bm{\xi}} \right)^2 (X_{2t} - \xi_0)^2 \right]
			= \varsigma \cdot \frac{(\bm\xi^\top \bm\gamma)^2}{\bm{\xi}^\top \bm{\xi}}.
		\end{align} 
		Hence, the expression for $\MCBp_1$ from \eqref{eqn:MCBDSCSimFormula} follows by substituting \eqref{eq:appsim_X1sq}, \eqref{eq:appsim_X1Ycov} and \eqref{eq:appsim_YgivenX1sq} into \eqref{eq:eq:appsim_mcb} and by using  the identity $a + b^2/a - 2b = (a-b)^2/a$ with $a=\bm{\delta}^\top \bm{\delta}$ and $b=\bm{\delta}^\top \bm{\gamma}$.
		The formula for forecaster $X_{2t}$ follows analogously.
		
		For the uncertainty component under squared error loss  $ \UNCp = \E \big[ (\E[Y_t] - Y_t)^2 \big] = \E[Y^2_t] $ since $ \E[Y_t] = 0 $. 
		Since $ \varepsilon_{Y,t} \sim \mathcal{N}(0,1) $ and $ \var(\varepsilon_{Y,t}) = \E[\varepsilon_{Y,t}^2] = 1 $, the uncertainty component is given by
		\begin{align*}
			\UNCp &= \E[Y^2_t] \\
			&= \gamma_K^2 \E[K_t^2] + \gamma_L^2 \E[L_t^2]  + \gamma_M^2 \E[M_t^2]  + \gamma_N^2 \E[N_t^2] + \E[\varepsilon_{Y,t}^2] \\
			&= (  \gamma_K^2 + \gamma_L^2 + \gamma_M^2  + \gamma_N^2) \varsigma + 1 \\
			&= \varsigma \cdot \bm{\gamma}^\top \bm{\gamma} + 1,
		\end{align*}
		which concludes the proof.
	\end{proof}

	\section{Relation to other decompositions}
	\label{sec:connecttolit}
	
	Several early contributions to the forecast evaluation literature propose score decompositions that are structurally similar to the one employed in this paper. 
	
	First, the population decomposition of the squared error score in \citet{M_1973} coincides with the population score decomposition in \eqref{eqn:PopScoreDecomp}.
	For the squared error score, \eqref{eqn:PopScoreDecomp} simplifies to 
	$\Sp_{i} = \E\!\left[ \myS(X_{it}, X^{\textsf{C}}_{it})\right] - \E\!\left[ \myS(X^{\textsf{C}}_{it}, \bar{r})\right] + \UNCp$.
	\citet{M_1973} labels these components calibration, resolution, and uncertainty.
	This representation is prevalent in the literature and it is often applied to the \citet{B_1950} score, which describes the squared error in the context of probability  forecasts for binary outcomes; see e.g., \citet{ET_2016} and \citet{Pohle2020}.
	As such, our paper augments the original \citet{M_1973} decomposition with inference under linear recalibration.

	Second, the squared error score decomposition of \citet[Equation 5a]{MZ_1969} reads as
	\begin{align}
		\label{eqn:originalMZdec}
		\text{Average Score} = \text{Mean Component} + \text{Slope Component} + \text{Residual Component}. 
	\end{align}
	In terms of our notation, and the refinement of unconditional and conditional miscalibration terms of \citet{GR_2023}, the components in \eqref{eqn:originalMZdec}  are  $\Sp_{i} \;=\; \uMCBp_{i} \;+\; \cMCBp_{i} \;+\; \left(\UNCp\;-\;\DSCp_i\right)$,
	where $\uMCBp_{i}$ and $\cMCBp_{i}$ are denoted as unconditional and conditional miscalibration components, which are non-negative and satisfy $\MCBp_{i} = \uMCBp_{i} + \cMCBp_{i}$ given that they are obtained from
	\begin{align*}
		\uMCBp_{i} = \Sp_i \;-\; \E[\myS(X_{it}+c,Y_t)] \qquad\text{and}\qquad \cMCBp_{i} = \E\left[\myS(X_{it}+c,Y_t)\right] \;-\; \Sp_i^{\textsf{C}},
	\end{align*}
	with constant $c$ chosen such that $X_{it}+c$ is unconditionally calibrated, i.e., $\E[X_{it}+c] = \E[Y_t]$.
	
	In this sense, the score decomposition in \citet{MZ_1969} closely mirrors the structure considered here. 
	Although the residual component in  \eqref{eqn:originalMZdec} conflates uncertainty and discrimination, this is unproblematic for comparative evaluation, as the uncertainty term is forecast-invariant and cancels out in pairwise comparisons.
	Accordingly, the residual component can be viewed as a re-normalized version of discrimination. 
	\citet{MZ_1969} however place little emphasis on this residual component, referring to it merely as a residual term without further interpretation or diagnostic use. 
	
	In summary, the aforementioned decompositions can be viewed as special cases of the general score decomposition in \eqref{eqn:PopScoreDecomp}. The asymptotic theory developed in this paper is sufficiently general to be readily applicable to these alternative representations.
	For example, in the case of the \citet{MZ_1969} decomposition, inference for the combined term $\left(\UNCp-\DSCp_i\right)$ follows directly from our results, and inference for $\uMCBp_{i}$ and $\cMCBp_{i}$ only requires one additional unconditional estimation step for the constant $c$.

	\section{Details on the simulation setup for quantile forecasts}
	\label{app:QuantileSimDGP}
	
	Here, we provide detailed calculations for the population $\MCBp_i$ and $\DSCp_i$ components on the check loss decomposition for quantile forecasts for the simulations from Section~\ref{sec:sim}.
	
	\begin{prop}
		\label{prop:sim_q}
		Consider the DGP in \eqref{eqn:SimProcess} and \eqref{eqn:SimFcastq}. For $\alpha\in(0,1)$, define $z_\alpha=\Phi^{-1}(\alpha)$.
		Using $\mathcal W_{it}=\sigma\{\mW_{it}\}$ with $\mW_{it}=(1,X^\alpha_{it})^\top$ and the check loss $\myS_\alpha(x,y)=(\mathds{1}\{x \ge y\}-\alpha)(x-y)$, we get
		\begin{align}
			\label{eq:MCBq_closed}
			\MCBp_1
			&= m_1\Bigl(\Phi(\kappa_1)+\alpha-1\Bigr)+s_1\,\phi(\kappa_1)-\sigma_{1\mid X}\,\phi(z_\alpha), \\
			\label{eq:DSCq_sim}
			\DSCp_1
			&=\phi(z_\alpha)\bigl(\sigma_Y-\sigma_{1\mid X}\bigr), 
		\end{align}
		and $\UNCp = \sigma_Y\,\phi(z_\alpha)$.
		In the above,
		$\sigma_Y^2 = \var(Y_t) = \varsigma\,\bm\gamma^\top\bm\gamma+1$ with $\varsigma=\frac{1}{1-\beta^2}$, and
		\begin{align*}
			\sigma_{1\mid X}^2 &:= \var(Y_t\mid X^\alpha_{1t})
			= \sigma_Y^2-\varsigma\frac{(\bm\delta^\top\bm\gamma)^2}{\bm\delta^\top\bm\delta} \\
			m_1 &:= \E[Y_t- X^\alpha_{it}] = -(\delta_0+z_\alpha), \\
			s_1^2 &:= \var(Y_t - X^\alpha_{it}) = 1+\varsigma\bigl(\bm\gamma^\top\bm\gamma+\bm\delta^\top\bm\delta-2\bm\delta^\top\bm\gamma\bigr), \\
			\kappa_1 &:= \frac{m_1}{s_1}.
		\end{align*}
		It further holds that $\DSCp_1$ is increasing with $\frac{(\bm\delta^\top\bm\gamma)^2}{\bm\delta^\top\bm\delta}$ and $\MCBp_1$ is increasing in $s_1$ and in $\frac{(\bm\delta^\top\bm\gamma)^2}{\bm\delta^\top\bm\delta}$.
		For the second forecaster $X_{2t}^\alpha$, equivalent statements hold with $(\delta_0,\bm\delta)$ replaced by $(\xi_0,\bm\xi)$.
	\end{prop}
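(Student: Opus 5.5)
The computation rests on joint Gaussianity. Since $\mathbf{V}_t$ and $\varepsilon_{Y,t}$ are independent zero-mean Gaussians, $(Y_t, X^\alpha_{1t})$ is bivariate normal. It has $\E[Y_t]=0$, $\var(Y_t)=\sigma_Y^2=\varsigma\bm\gamma^\top\bm\gamma+1$, $\E[X^\alpha_{1t}]=\delta_0+z_\alpha$, $\var(X^\alpha_{1t})=\varsigma\bm\delta^\top\bm\delta$ and $\cov(X^\alpha_{1t},Y_t)=\varsigma\bm\delta^\top\bm\gamma$, exactly as in the proof of Proposition~\ref{prop:sim}. Hence the conditional law of $Y_t$ given $X^\alpha_{1t}$ is normal. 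Its mean is linear in $X^\alpha_{1t}$ and its variance is constant, equal to $\sigma_{1\mid X}^2=\sigma_Y^2-\varsigma(\bm\delta^\top\bm\gamma)^2/(\bm\delta^\top\bm\delta)$.

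First I identify the recalibrated and reference forecasts. The conditional quantile is $Q_\alpha(Y_t\mid X^\alpha_{1t})=\E[Y_t\mid X^\alpha_{1t}]+\sigma_{1\mid X}z_\alpha$, which is linear in $X^\alpha_{1t}$. The linear recalibration is therefore correctly specified, and by strict consistency \eqref{eqn:StrictConsistencyConditional} together with uniqueness of the pseudo-true parameter, $X^{\mathsf C}_{1t}$ equals this conditional quantile. Likewise the reference forecast is $\bar r=\sigma_Y z_\alpha$.

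The workhorse is the following identity. For $Z\sim\mathcal N(m,s^2)$ and a constant $x$, set $u=(x-m)/s$; then
\begin{align*}
\E[(\mathds 1\{x\ge Z\}-\alpha)(x-Z)] = (x-m)\bigl(\Phi(u)-\alpha\bigr)+s\,\phi(u).
\end{align*}
This follows by writing $Z=m+sU$ and using $\E[U\mathds 1\{U\le u\}]=-\phi(u)$. At $x=m+sz_\alpha$ the first term vanishes and the loss equals $s\phi(z_\alpha)$. Applying this conditionally on $X^\alpha_{1t}$ to the recalibrated forecast gives $\Sp^{\mathsf C}_1=\sigma_{1\mid X}\phi(z_\alpha)$, since the conditional variance is constant. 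Applying it unconditionally to $\bar r$ gives $\UNCp=\Sp^{\mathsf R}=\sigma_Y\phi(z_\alpha)$. Subtracting yields \eqref{eq:DSCq_sim}.

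For $\Sp_1$, I use $\myS(x,y)$ as a function of the error $D_t=Y_t-X^\alpha_{1t}$ alone, namely $(\mathds 1\{D_t\le 0\}-\alpha)(-D_t)$. Here $D_t\sim\mathcal N(m_1,s_1^2)$ unconditionally, with $m_1=-(\delta_0+z_\alpha)$ and $s_1^2$ as stated. The identity applied with $x=0$ then gives $\Sp_1=-m_1(\Phi(-\kappa_1)-\alpha)+s_1\phi(\kappa_1)$. Using $\Phi(-\kappa_1)=1-\Phi(\kappa_1)$ and subtracting $\Sp^{\mathsf C}_1$ gives \eqref{eq:MCBq_closed}.

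Next come the monotonicity claims. $\DSCp_1$ is increasing in $(\bm\delta^\top\bm\gamma)^2/(\bm\delta^\top\bm\delta)$ because $\sigma_{1\mid X}$ decreases in it. For $\MCBp_1$, the $s_1$-dependence enters only through $h(s)=m_1(\Phi(m_1/s)+\alpha-1)+s\phi(m_1/s)$. Differentiating gives $h'(s)=\phi(m_1/s)>0$, because the $\Phi$-derivative term cancels the term arising from $\phi'$. The term $-\sigma_{1\mid X}\phi(z_\alpha)$ increases as the ratio grows, since $\sigma_{1\mid X}$ falls.

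The main obstacle is the bookkeeping of signs in the check-loss identity, including the orientation $x-y$ versus $y-x$ and the $\Phi(\kappa)$ versus $\Phi(-\kappa)$ conversion. The monotonicity in $s_1$ treated as a free argument also needs care, since $s_1$ and the ratio both depend on $\bm\delta$; I interpret the claim as partial monotonicity. The second forecaster follows by symmetry.
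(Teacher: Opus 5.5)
Your proposal is correct and follows essentially the same route as the paper. Both arguments use joint Gaussianity with constant conditional variance, the Gaussian check-loss value $\sigma\phi(z_\alpha)$ at the true (conditional) quantile, the rewriting of the original score through the error $Y_t - X^\alpha_{1t} \sim \mathcal N(m_1,s_1^2)$, and the derivative $\phi(\kappa_1)>0$ for monotonicity in $s_1$. Your single identity for arbitrary $x$ and your explicit argument that the linear recalibration is correctly specified (so $X^{\mathsf C}_{1t}$ equals the conditional quantile) are small tidy-ups of steps the paper carries out separately or leaves implicit.
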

	
	For the four parameterizations given in Table~\ref{tab:SimQuantiles}, we continue to simplify the closed-form expressions for $\MCBp_i$ and $\DSCp_i$ from Proposition~\ref{prop:sim_q} as functions of $k$, $\alpha$, and $\varsigma$ such that the $\MCBp$ and $\DSCp$ relations from Table~\ref{tab:SimQuantiles} can be verified.
	Notice that in the following, the expressions for $\kappa(k)$ vary for the different DGPs denoted 1), 2), 4) and 5).
	
	First, for the check loss decomposition for DGP 1) in Table~\ref{tab:SimQuantiles}, we have
	\begin{align}
		\begin{aligned}
			\label{eqn:MCBDSC_QuantDGP1}
			\MCBp_1(k) &= \MCBp_2(k) = m \Bigl(\Phi\!\Bigl(\tfrac{m}{s(k)}\Bigr) + \alpha - 1\Bigr)
			+ s(k)\, \phi\!\Bigl(\tfrac{m}{s(k)}\Bigr)
			- \sqrt{1 + \tfrac{\varsigma}{16}} \, \phi(z_\alpha), \\
			\DSCp_1 &= \DSCp_2
			= \phi(z_\alpha) \left( \sqrt{1 + \tfrac{3 \varsigma}{16}} - \sqrt{1 + \tfrac{\varsigma}{16}} \right), \\
			m &= -\sqrt{1+ \frac{\varsigma}{16}} z_\alpha, \\
			s(k) &= \sqrt{\var(Y_t - X^\alpha_{it})} = \sqrt{\,1 + \varsigma \left( \tfrac{k^2}{2} + \tfrac{1}{16} \right)\,}.
		\end{aligned}
	\end{align}
	Here, $s(k)$ is increasing in $k$, such that $\MCBp_1(k) = \MCBp_2(k)$ are increasing in $k$.
	$\MCBp_1(0) = \MCBp_2(0)$ can be obtained by simplifying the above formula for $k=0$.
	Furthermore, $\DSCp_1 = \DSCp_2 > 0$ does not depend on $k$.

	Second, for the check loss decomposition for DGP 2) in Table~\ref{tab:SimQuantiles}, we have
	\begin{align}
		\begin{aligned}
			\label{eqn:MCBDSC_QuantDGP2}
			\MCBp_1 &= 0, \\[0.5ex]
			\MCBp_2(k)
			&= m_2 \Bigl(\Phi\!\Bigl(\tfrac{m_2}{s_2(k)}\Bigr) + \alpha - 1\Bigr)
			+ s_2(k)\, \phi\!\Bigl(\tfrac{m_2}{s_2(k)}\Bigr)
			- \sqrt{1 + \tfrac{\varsigma}{16}} \, \phi(z_\alpha), \\[1ex]
			\DSCp_1 &= \DSCp_2
			= \phi(z_\alpha) \left( \sqrt{1 + \tfrac{3 \varsigma}{16}} - \sqrt{1 + \tfrac{\varsigma}{16}} \right), \\[1ex]
			m_2 &= -\sqrt{1+ \frac{\varsigma}{16}} z_\alpha, \\
			s_2(k) &= \sqrt{\var(Y_t - X^\alpha_{2t})} = \sqrt{\,1 + \tfrac{\varsigma}{16} + \tfrac{\varsigma k^2}{2}\,}.
		\end{aligned}
	\end{align}
	As before, $s_2(k)$ is increasing in $k$, such that $\MCBp_2(k)$ is increasing in $k$, and $\DSCp_1 = \DSCp_2 > 0$ are independent of $k$.
	$\MCBp_2(0)$ can again be obtained by simplifying the above formula for $k=0$.

	Third, for the check loss decomposition for DGP 4) in Table~\ref{tab:SimQuantiles}, we have
	\begin{align}
		\begin{aligned}
			\label{eqn:MCBDSC_QuantDGP3}
			\MCBp_1(k) = \MCBp_2(k)
			&= m_\alpha(k) \Bigl(\Phi\!\Bigl(\tfrac{m_\alpha(k)}{s(k)}\Bigr) + \alpha - 1\Bigr)
			+ s(k)\, \phi\!\Bigl(\tfrac{m_\alpha(k)}{s(k)}\Bigr)
			- \sqrt{\tfrac{\varsigma k^2}{2} + 1} \, \phi(z_\alpha), \\
			\DSCp_1(k) = \DSCp_2(k)
			&= \phi(z_\alpha) \left( \sqrt{\varsigma k^2 + 1}
			- \sqrt{\tfrac{\varsigma k^2}{2} + 1} \right), \\
			m_\alpha(k) &= -\sqrt{1+ \frac{8}{15}k^2} z_\alpha, \\
			s(k) &= \sqrt{\var(Y_t - X^\alpha_{it})} = \sqrt{\,1 + \tfrac{\varsigma k^2}{2} + \tfrac{\varsigma}{8}\,}.
		\end{aligned}
	\end{align}
	Here, $\DSCp_1(k) = \DSCp_2(k)$ are increasing in $k$ (as $k^2$ grows faster than $k^2/2$) and we have $\DSCp_1(0) = \DSCp_2(0) = 0$.
	Furthermore, $\MCBp_1(k) = \MCBp_2(k) > 0$ holds by \eqref{eqn:MCBDSC_QuantDGP3}.

	Fourth, for the check loss decomposition for DGP 5) in Table~\ref{tab:SimQuantiles} with $\xi_0 = \xi_0(k,\alpha)$, we have
	\begin{align}
		\begin{aligned}
			\label{eqn:MCBDSC_QuantDGP5}
			\DSCp_1 &= 0, \\
			\DSCp_2(k) &= \phi(z_\alpha) \left( \sqrt{\varsigma k^2 + 1}
			- \sqrt{\tfrac{\varsigma k^2}{2} + 1} \right), \\
			\MCBp_1(k) &= - (\delta_0 + z_\alpha) \Bigl(\Phi\!\Bigl(\tfrac{- (\delta_0 + z_\alpha)}{s_1(k)}\Bigr) + \alpha - 1\Bigr)
			+ s_1(k)\, \phi\!\Bigl(\tfrac{- (\delta_0 + z_\alpha)}{s_1(k)}\Bigr)
			- \sqrt{\varsigma k^2 + 1} \, \phi(z_\alpha), \\
			\MCBp_2(k) &= - (\xi_0 + z_\alpha) \Bigl(\Phi\!\Bigl(\tfrac{- (\xi_0 + z_\alpha)}{s_2(k)}\Bigr) + \alpha - 1\Bigr) + s_2(k)\, \phi\!\Bigl(\tfrac{- (\xi_0 + z_\alpha)}{s_2(k)}\Bigr)
			- \sqrt{\tfrac{\varsigma k^2}{2} + 1} \, \phi(z_\alpha), \\
			s_1(k) &:= \sqrt{\var(Y_t - X^\alpha_{1t})} = \sqrt{\,1 + \varsigma k^2 + \tfrac{\varsigma}{8}\,}, \\
			s_2(k) &:= \sqrt{\var(Y_t - X^\alpha_{2t})} = \sqrt{\,1 + \tfrac{\varsigma k^2}{2} + \tfrac{\varsigma}{8}\,}.
		\end{aligned}
	\end{align}
	As before, $\DSCp_2(k)$ is increasing in $k$ (as $k^2$ grows faster than $k^2/2$) and we have $\DSCp_2(0) = 0$.
	Notice for this that the choice of $\xi_0$ does not affect the discrimination of the second forecast at all.
	
	\begin{figure}[tbp]
		\centering
		\includegraphics[width=\textwidth]{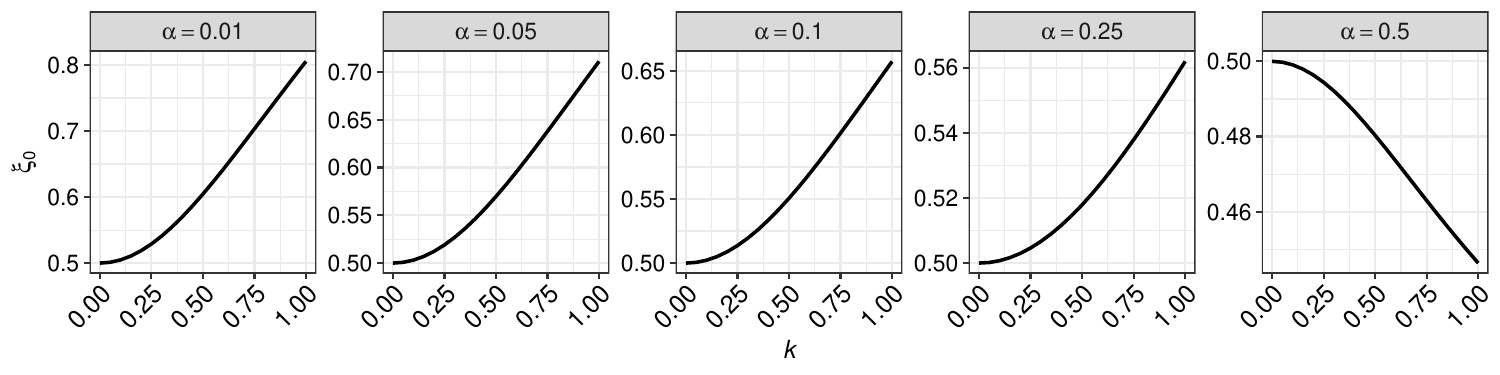}
		\caption{Values for $\xi_0(k,\alpha)$ used for the quantile DGP 5) in Table~\ref{tab:SimQuantiles} for $k \in [0,1]$ and $\alpha \in \{0.01, 0.05, 0.1, 0.25, 0.5 \}$, such that $\MCBp_1(k) = \MCBp_2(k)$ holds as accurately as possible for all $k$ and $\alpha$ for the quantities from \eqref{eqn:MCBDSC_QuantDGP5}.
		}
		\label{fig:xi0_choice}
	\end{figure}
	
	Equality of the $\MCBp_i$, $i=1,2$ terms in \eqref{eqn:MCBDSC_QuantDGP5} is difficult to obtain analytically. 
	Hence, we choose $\xi_0 = \xi_0(k,\alpha)$ depending on $\alpha \in \{0.01,\, 0.05,\, 0.1,\, 0.25,\, 0.5 \}$ and a sequence of values of $k \in [0,1]$ 
	numerically such that $\MCBp_1(k,\alpha) = \MCBp_2(k,\alpha)$ holds as accurately as possible.
	Figure~\ref{fig:xi0_choice} displays these numeric solutions for  $\xi_0 = \xi_0(k,\alpha)$ as functions of $k$ and $\alpha$.

	\begin{figure}[tbp]
		\centering
		\includegraphics[width=\textwidth]{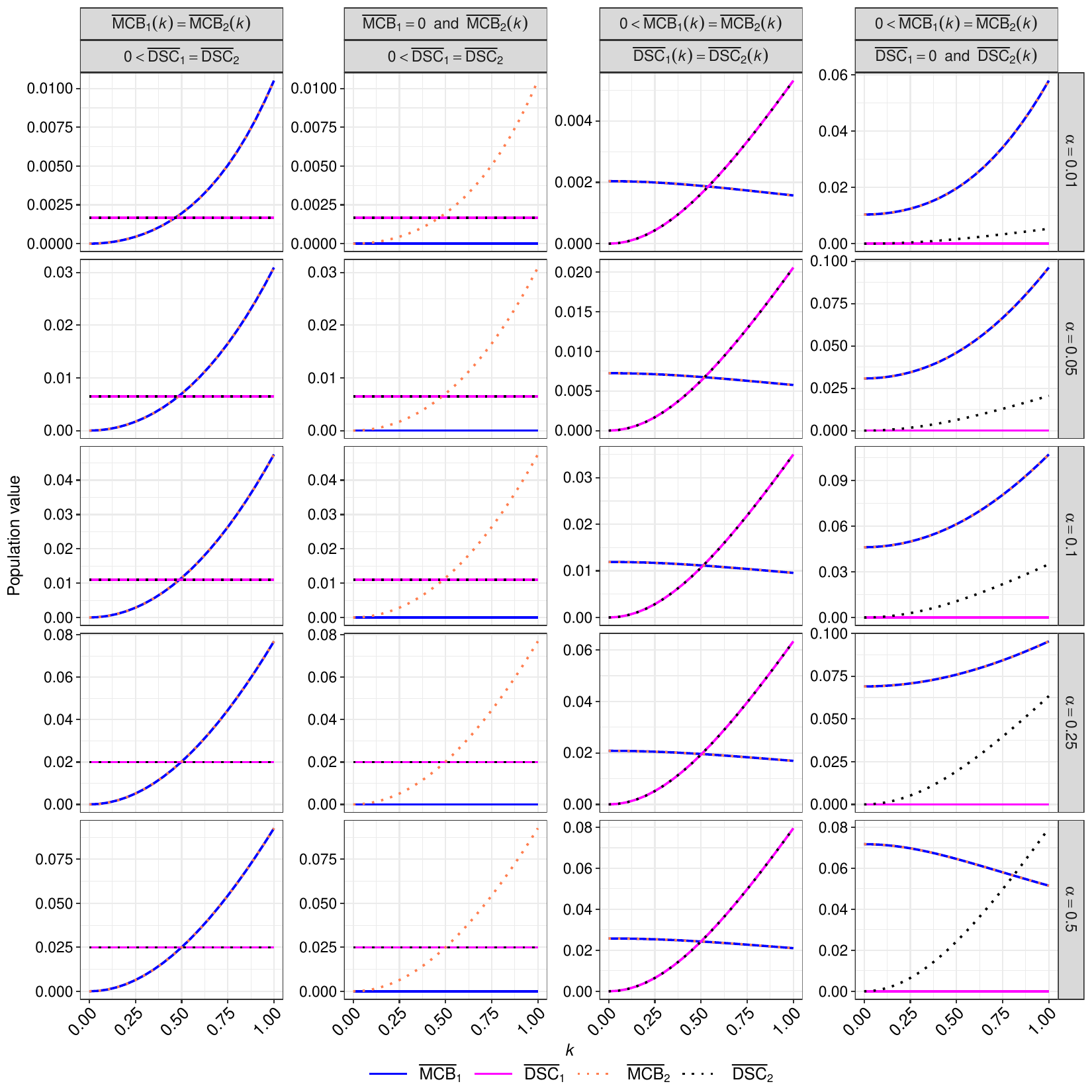}
		\caption{Population values of $\MCBp_i$ and $\DSCp_i$, $i=1,2$ from \eqref{eqn:MCBDSC_QuantDGP1}--\eqref{eqn:MCBDSC_QuantDGP5} for a range of values for $k$, our five choices of $\alpha$, and the parameter values $\xi_0(k,\alpha)$ displayed in Figure \ref{fig:xi0_choice}.}
		\label{fig:TrueMCBDSC_QuantDPG}
	\end{figure}

	Figure~\ref{fig:TrueMCBDSC_QuantDPG} displays the expressions of $\MCBp_i$ and  $\DSCp_i$, $i=1,2$ from \eqref{eqn:MCBDSC_QuantDGP1}--\eqref{eqn:MCBDSC_QuantDGP5} for the considered values of $k$ and $\alpha$, which confirms the properties of our simulation setup discussed above.
	In particular, Figure~\ref{fig:TrueMCBDSC_QuantDPG} confirms that the numeric choices of $\xi_0(k,\alpha)$ yield the desired results.

	\begin{proof}[Proof of Proposition \ref{prop:sim_q}]
		We prove the claims for $i=1$; the case $i=2$ follows by replacing $(\delta_0,\bm\delta)$ with $(\xi_0,\bm\xi)$.
		
		First, under \eqref{eqn:SimProcess}, the vector $\mathbf V_t$ has mean zero and covariance matrix $\varsigma I_4$, hence
		$\bm\gamma^\top\mathbf V_t\sim \mathcal N(0,\varsigma\,\bm\gamma^\top\bm\gamma)$ and
		\[
		Y_t=\bm\gamma^\top\mathbf V_t+\varepsilon_{Y,t}\sim \mathcal N(0,\sigma_Y^2),
		\qquad 
		\sigma_Y^2=\varsigma\,\bm\gamma^\top\bm\gamma+1.
		\]
		Moreover, with
		$X_{1t}^\alpha =  \mathbf{V}_t^\top \bm{\delta} + \delta_0  + z_\alpha$
		we have
		\begin{align*}
			\var(X_{1t}^\alpha)=\varsigma\,\bm\delta^\top\bm\delta,
			\qquad \text{ and } \qquad
			\cov(Y_t,X_{1t}^\alpha)=\varsigma\,\bm\delta^\top\bm\gamma.
		\end{align*}
		Since $(Y_t,X_{1t}^\alpha)$ is jointly Gaussian, $Y_t\mid X_{1t}^\alpha$ is Gaussian with constant conditional variance
		\begin{align}
			\label{eqn:CondVarianceSimQuant}
			\sigma_{1\mid X}^2
			=\var(Y_t)-\frac{\cov(Y_t,X_{1t}^\alpha)^2}{\var(X_{1t}^\alpha)}
			=\sigma_Y^2-\varsigma\frac{(\bm\delta^\top\bm\gamma)^2}{\bm\delta^\top\bm\delta}.
		\end{align}
		
		We continue by deriving the formula for $\UNCp$.
		Since the $\alpha$-quantile functional is strictly consistent for the check loss, the expected loss $\E[\myS_\alpha(x,Y_t)]$ is uniquely minimized at the true conditional quantile $Q_\alpha(Y_t) = \sigma_Y z_\alpha = \sigma_Y \Phi^{-1}(\alpha)$.

		To obtain an explicit formula for the check loss at the true quantile, let $Y_t = \sigma_Y Z_t$ with $Z_t \sim \mathcal N(0,1)$.
		Then,
		\begin{align*}
			\E[\myS_\alpha(\sigma_Y z_\alpha, Y_t)]
			&= \E\big[(\alpha - \mathds{1}\{Z_t \le z_\alpha\})(\sigma_Y Z_t - \sigma_Y z_\alpha)\big] \\
			&= \sigma_Y \E\big[(\alpha - \mathds{1}\{Z_t \le z_\alpha\})(Z_t - z_\alpha)\big] \\
			&= \sigma_Y \left( \alpha \E[Z_t - z_\alpha] - \E\big[(Z_t - z_\alpha)\mathds{1}\{Z_t \le z_\alpha\} \big] \right).
		\end{align*}
		Since $\E[Z_t]=0$, the first term becomes $-\alpha z_\alpha$.
		For the second term,
		\[
		\E\big[(Z_t - z_\alpha) \mathds{1}\{Z_t \le z_\alpha\}\big]
		= \int_{-\infty}^{z_\alpha} (z - z_\alpha)\phi(z)\,dz 
		= -\phi(z_\alpha) - z_\alpha \alpha,
		\]
		which yields
		\begin{align}
			\label{eqn:CheckLossGaussian}
			\E[\myS_\alpha(\sigma_Y z_\alpha,Y_t)]
			= \sigma_Y \big[-\alpha z_\alpha + \phi(z_\alpha) + z_\alpha \alpha \big]
			= \sigma_Y \phi(z_\alpha).
		\end{align}

		We continue to derive the close-form expression for the discrimination component.
		Because $(Y_t,X_{1t}^\alpha)$ is jointly Gaussian, the conditional distribution
		$Y_t\mid X_{1t}^\alpha$ is normal with conditional variance
		$\sigma_{1\mid X}^2=\var(Y_t\mid X_{1t}^\alpha)$ from \eqref{eqn:CondVarianceSimQuant} that does not depend on $X_{1t}^\alpha$.
		Hence,
		\[
		Y_t\mid X_{1t}^\alpha \sim \mathcal{N} \big( \E[Y_t\mid X_{1t}^\alpha], \; \sigma_{1\mid X}^2 \big)
		\qquad\text{and}\qquad
		Q_\alpha(Y_t\mid X_{1t}^\alpha)=\E[Y_t\mid X_{1t}^\alpha]+\sigma_{1\mid X} z_\alpha.
		\]
		By the check loss identity for Gaussian distribution from \eqref{eqn:CheckLossGaussian}, we obtain
		\[
		\E\!\bigl[\myS_\alpha \big(Q_\alpha(Y_t\mid X_{1t}^\alpha),Y_t\big) \mid X_{1t}^\alpha\bigr]
		=   \sigma_{1\mid X} \phi(z_\alpha) \quad \text{a.s.},
		\]
		and therefore also
		\[
		\E\!\bigl[\myS_\alpha\bigl(Q_\alpha(Y_t\mid X_{1t}^\alpha),Y_t\bigr)\bigr]
		=\sigma_{1\mid X}\phi(z_\alpha).
		\]
		Consequently,
		\[
		\DSCp_1 =\UNCp - \E\!\bigl[\myS_\alpha \bigl(Q_\alpha(Y_t\mid X_{1t}^\alpha),Y_t\bigr)\bigr]
		=\sigma_Y\phi(z_\alpha)-\sigma_{1\mid X}\phi(z_\alpha),
		\]
		which is \eqref{eq:DSCq_sim}, 
		and
		\[
		\MCBp_1=\E[\myS_\alpha(X_{1t}^\alpha,Y_t)]-\E[\myS_\alpha(Q_\alpha(Y_t\mid X_{1t}^\alpha),Y_t)]
		=\E[\myS_\alpha(X_{1t}^\alpha,Y_t)]-\sigma_{1\mid X}\phi(z_\alpha),
		\]
		for which we derive a closed form for $\E[\myS_\alpha(X_{1t}^\alpha,Y_t)]$ in the following.
		
		For this, define the forecast error $E_{1t}:=Y_t- X_{1t}^\alpha$.
		Then $E_{1t}$ is Gaussian with mean
		\[
		m_1=\E[E_{1t}] = \E[Y_t] - \E[ X_{1t}^\alpha] =-(\delta_0+z_\alpha),
		\]
		and variance
		\begin{align*}
			s_1^2
			&=\var(Y_t-X_{1t}^\alpha)
			=\var(Y_t)+\var(X_{1t}^\alpha)-2\cov(Y_t,X_{1t}^\alpha) \\
			&=1+\varsigma\bigl(\bm\gamma^\top\bm\gamma
			+\bm\delta^\top\bm\delta
			-2\bm\delta^\top\bm\gamma\bigr),
		\end{align*}
		
		To compute $\E[\myS_\alpha(X_{1t}^\alpha,Y_t)]$, note that for $U=x+E$,
		\[
		\myS_\alpha(x,U)
		=(\mathds{1}\{U<x\}-\alpha)(x-U)
		=(\alpha-\mathds{1}\{E<0\})E.
		\]
		Hence,
		\[
		\E[\myS_\alpha(x,U)]
		=\alpha\,\E[E]-\E\!\bigl[E\,\mathds{1}\{E<0\}\bigr].
		\]
		For $E\sim\mathcal N(m,s^2)$, a direct calculation yields
		\[
		\E\!\bigl[E\,\mathds{1}\{E<0\}\bigr]
		= m\,\Phi(-m/s)-s\,\phi(m/s),
		\]
		and therefore
		\[
		\E[\myS_\alpha(x,U)]
		= m\bigl(\Phi(m/s)+\alpha-1\bigr)+s\,\phi(m/s).
		\]
		Applying this with $x=X^\alpha_{1t}$ and $U=Y_t$ and using $\kappa_1=m_1/s_1$ gives
		\[
		\E[\myS_\alpha(X^\alpha_{1t},Y_t)]
		= m_1\bigl(\Phi(\kappa_1)+\alpha-1\bigr)+s_1\,\phi(\kappa_1).
		\]
		
		We continue to establish monotonicity of $\MCBp_1$ and $\DSCp_1$ with respect to $\kappa_1$ and $R_1 = \frac{(\bm\delta^\top\bm\gamma)^2}{\bm\delta^\top\bm\delta}$.
		For $\DSCp_1$, it is immediate that $\sigma_{1|X}$ is decreasing in $R_1$, such that $\DSCp_1$ is increasing in $R_1$.
		The same argument applies to $\MCBp_1$ being increasing in $R_1$ as only the latter part in \eqref{eq:MCBq_closed} depends on $\sigma_{1|X}$.
		
		From \eqref{eq:MCBq_closed}, only the first two terms of $\MCBp_1$ depend on $s_1$. 
		Define
		\[
		g(s_1)
		:= m_1\bigl(\Phi(\kappa_1)+\alpha-1\bigr)
		+ s_1\,\phi(\kappa_1),
		\qquad
		\kappa_1=\frac{m_1}{s_1}.
		\]
		Differentiating with respect to $s_1$ and using $\frac{d\kappa_1}{ds_1}=-\kappa_1/s_1$,  we obtain
		\begin{align*}
			\frac{dg(s_1)}{ds_1}
			&= m_1 \phi(\kappa_1)\frac{d\kappa_1}{ds_1}
			+ \phi(\kappa_1)
			+ s_1 \phi'(\kappa_1)\frac{d\kappa_1}{ds_1} \\
			&= -\frac{m_1\kappa_1}{s_1}\phi(\kappa_1)
			+ \phi(\kappa_1)
			+ \kappa_1^2 \phi(\kappa_1) \\
			&= \phi(\kappa_1) > 0,
		\end{align*}
		where we used $\phi'(x)=-x\phi(x)$ and $\kappa_1=m_1/s_1$.
		Since the remaining term $-\sigma_{1\mid X}\phi(z_\alpha)$ does not depend on $s_1$, $\MCBp_1$ is strictly increasing in $s_1$.
	\end{proof}

	\section{Additional Figures}
	\label{sec:AddFigs}

	\begin{figure}[tbp]
		\centering
		\includegraphics[width=\textwidth]{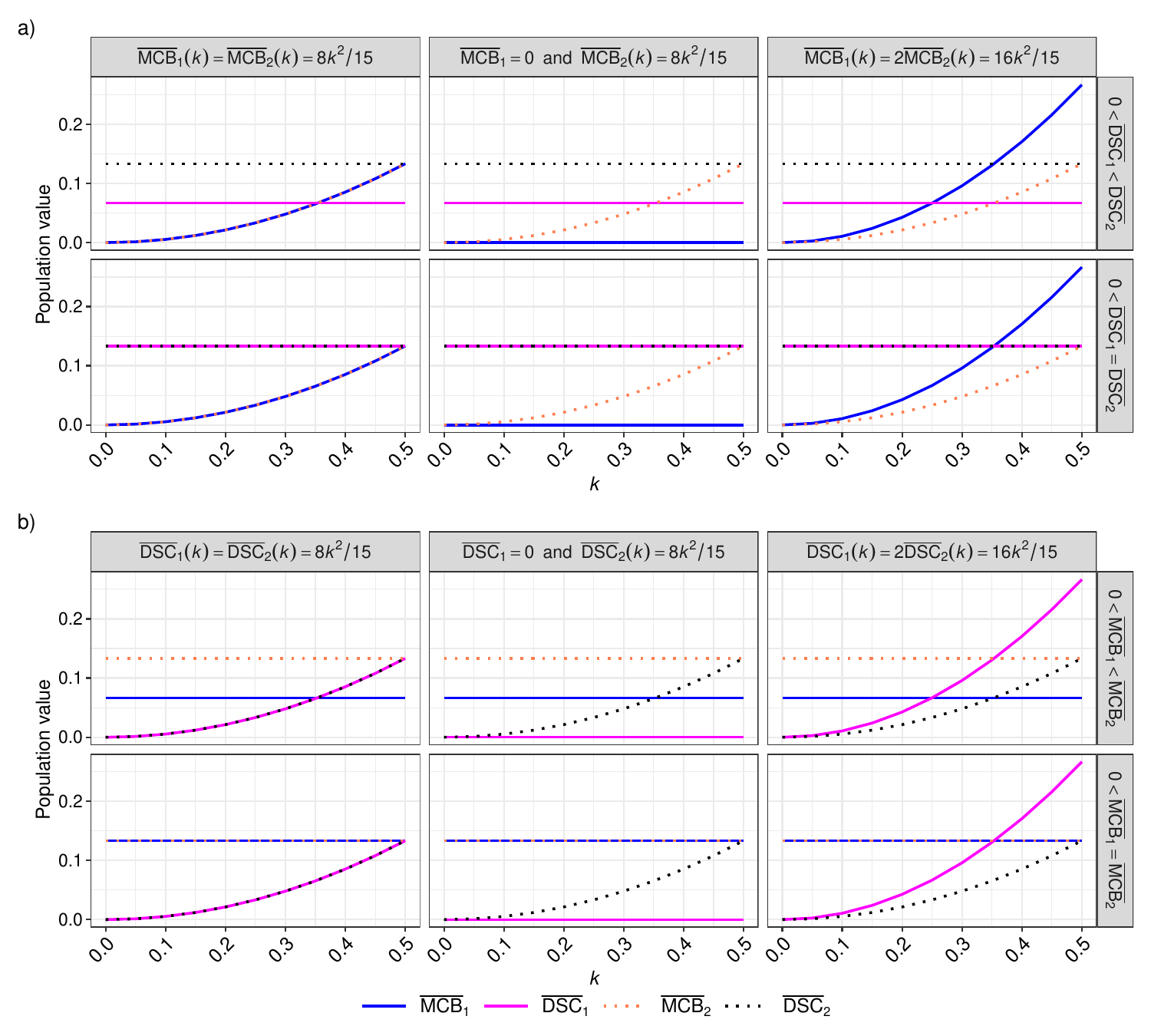}
		\caption{Population values of $\MCBp_i$ and $\DSCp_i$, $i=1,2$ for the mean forecasts from the DGP in \eqref{eqn:SimProcess}--\eqref{eqn:SimFcast2} for the twelve parametrizations from Table~\ref{tab:main}, following the close-form expressions of Proposition~\ref{prop:sim}.}
		\label{fig:TrueMCBDSC_MeanDPG}
	\end{figure}
	
	Figure~\ref{fig:TrueMCBDSC_MeanDPG} plots the population values for $\MCBp_i$ and $\DSCp_i$, $i=1,2$ for the mean forecasts for the twelve DGPs from Table~\ref{tab:main}.

	\begin{figure}[tbp]
		\centering
		\includegraphics[width=\textwidth]{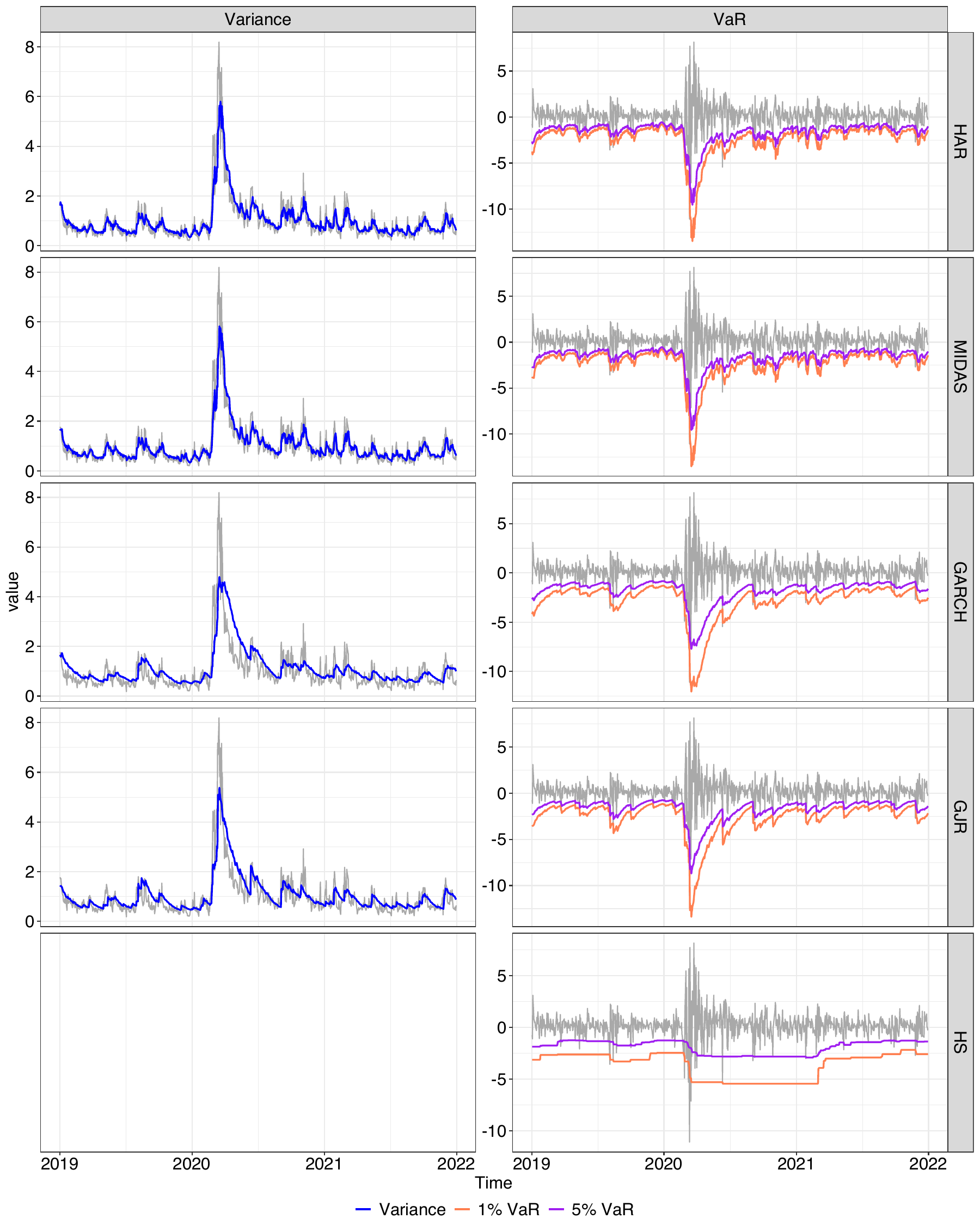}
		\caption{Daily E-mini returns (right) and Realized Variances (left) in gray, together with the corresponding variance forecasts in blue (left) and VaR forecast at levels $1\%$ in orange and $5\%$ in purple (right) for the forecasting models considered in Section~\ref{sec:appl_vola} and the time period after the year 2019.}
		\label{fig:RiskApplicationTimeSeriesPlot}
	\end{figure}
	
	Figure~\ref{fig:RiskApplicationTimeSeriesPlot} plots the time series of the variance and VaR forecasts together with their corresponding evaluation targets for the application of Section~\ref{sec:appl_vola}.
	
	\begin{figure}[tbp]
		\centering
		\includegraphics[width=\textwidth]{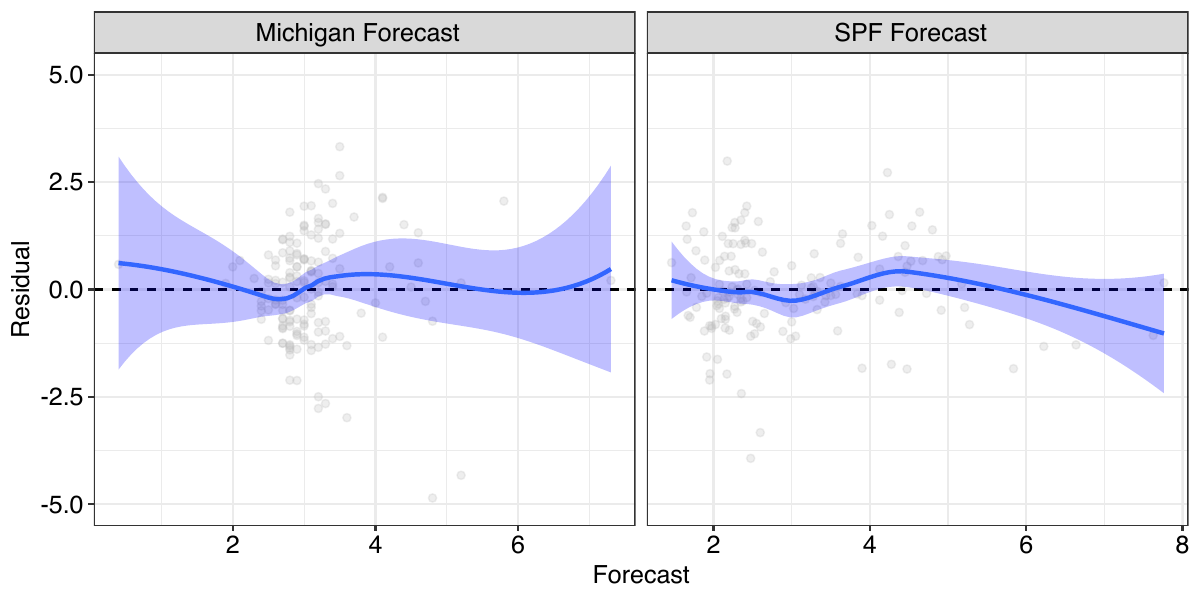}
		\caption{Model diagnostics for the linear recalibration regression for the inflation rate forecasts in Section~\ref{sec:appl_infl}. We plot the model residuals against the forecasts (as the covariates of the underlying recalibration regression) together with a nonparametric polynomial regression and its $95\%$ confidence bands. 
			The zero-line being mostly contained in the confidence bands implies no evidence against correct model specification. For further details, see the text of Appendix~\ref{sec:AddFigs}.}
		\label{fig:MZdiagnosticsInflation}
	\end{figure}

	\begin{figure}[tbp]
		\centering
		\includegraphics[width=\textwidth]{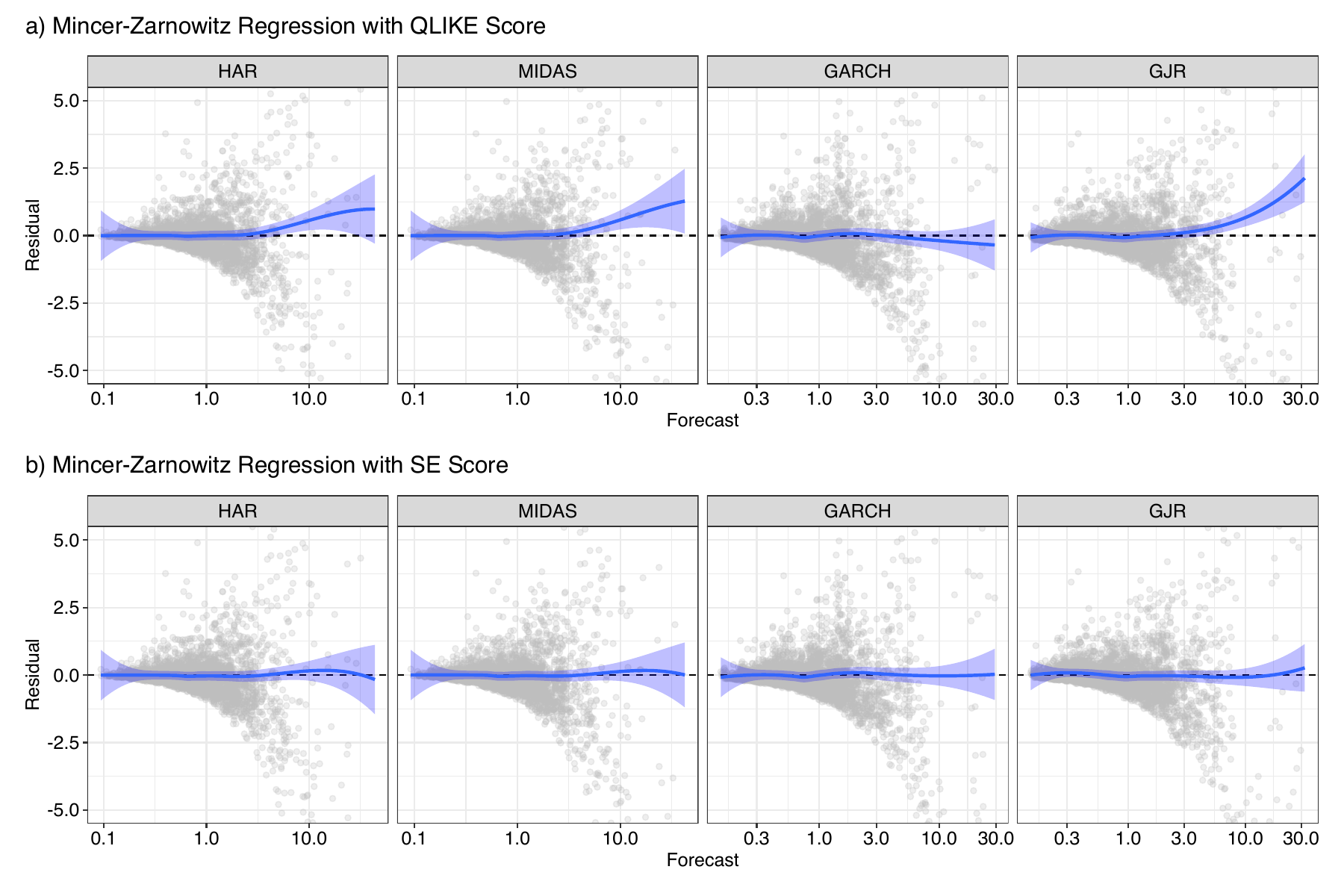}
		\caption{Model diagnostics for the linear recalibration regression for the variance forecasts in Section~\ref{sec:appl_vola}. We plot the model residuals against the forecasts (as the covariates of the underlying recalibration regression) on a log-transformed $x$-axis together with a nonparametric polynomial regression and its $95\%$ confidence bands. 
			The zero-line being mostly contained in the confidence bands implies no evidence against correct model specification. For further details, see the text of Appendix~\ref{sec:AddFigs}.}
		\label{fig:MZdiagnosticsVola}
	\end{figure}

	\begin{figure}[tbp]
		\centering
		\includegraphics[width=\textwidth]{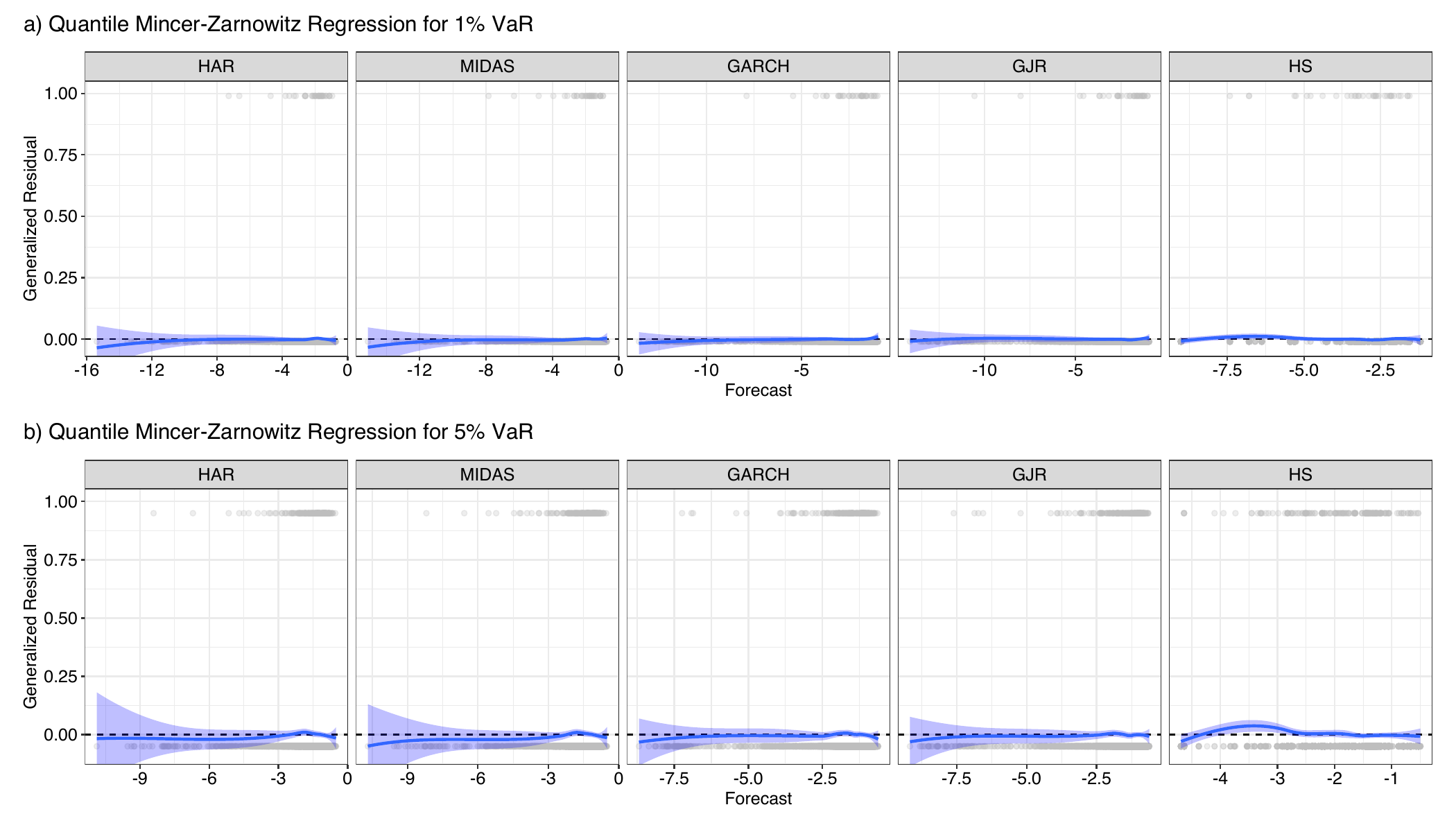}
		\caption{Model diagnostics for the linear recalibration regression for the VaR forecasts in Section~\ref{sec:appl_vola}. We plot the generalized model residuals against the forecasts (as the covariates of the underlying recalibration regression) together with a nonparametric polynomial regression and its $95\%$ confidence bands. 
			The zero-line being mostly contained in the confidence bands implies no evidence against correct model specification. For further details, see the text of Appendix~\ref{sec:AddFigs}.}
		\label{fig:MZdiagnosticsVaR}
	\end{figure}

	Figures~\ref{fig:MZdiagnosticsInflation}--\ref{fig:MZdiagnosticsVaR} show model diagnostics for the linear recalibration regressions that are used for the score decompositions for the forecasts of the applications of Sections~\ref{sec:appl_infl}--\ref{sec:appl_vola}.
	For the mean forecasts in Figures~\ref{fig:MZdiagnosticsInflation}--\ref{fig:MZdiagnosticsVola}, we plot the model residuals $Y_t - \mW_{it}^\top \widehat{\mtheta}_{iT}$ on the $y$-axis against the forecasts $X_{it}$ (as the covariates of the underlying recalibration regressions) on the $x$-axis.
	The blue curve and its pointwise $95\%$ confidence band are obtained from a local polynomial mean regression, implemented via the \texttt{loess} function in the statistical software \texttt{R}, with automatically selected tuning parameters.
	The black-dashed zero-line being inside these confidence bands for the majority of forecast values indicates no evidence against correct specification of the linear recalibration regressions.
	
	In Figure~\ref{fig:MZdiagnosticsVaR}, we follow \citet{Pohle_GenResid} and \citet[Figures~4 and 6]{dimitriadis2025regressions} and plot the generalized quantile residuals, i.e., the identification functions, $\myV \big( \mW_{it}^\top \widehat{\mtheta}_{iT}, Y_t) = \mathds{1}\{Y_t \le \mW_{it}^\top \widehat{\mtheta}_{iT}\} - \alpha$, against the quantile forecasts $X_{it}$ together with a corresponding nonparametric mean regression and their pointwise 95\% confidence bands.
	The zero-line being mostly contained in the confidence bands again does not provide evidence against correct specification of the linear recalibration regressions.

	%=============================================================
\end{document}